\pdfoutput=1
\documentclass[11pt]{article}

\newif\ifjcn
\jcnfalse

\usepackage[american]{babel}
\usepackage[utf8]{inputenc}
\usepackage[T1]{fontenc}
\usepackage{lmodern}
\usepackage{microtype}
\usepackage{amsmath,amssymb,amsthm}
\ifjcn
  \usepackage[a4paper,margin=25mm]{geometry}
\else
  \usepackage[letterpaper,margin=1in]{geometry}
\fi
\usepackage{graphicx}
\usepackage{booktabs}
\usepackage[font=small,labelfont=bf,skip=6pt]{caption}
\usepackage{enumitem}
\usepackage{tikz}
\usetikzlibrary{arrows,decorations.markings,shapes.geometric,shapes,calc,
                positioning,fit,backgrounds,arrows.meta,decorations.pathreplacing,
                patterns,matrix,decorations.pathmorphing}
\definecolor{s1col}{RGB}{31,105,175}
\definecolor{s2col}{RGB}{205,105,15}
\definecolor{icol}{RGB}{198,28,58}
\definecolor{gcol}{RGB}{18,135,78}
\definecolor{obscol}{RGB}{115,115,125}
\definecolor{lcol}{RGB}{128,64,160}
\tikzset{
  vtx/.style   = {circle,draw,thick,minimum size=6.5mm,inner sep=0pt,font=\footnotesize},
  v1/.style    = {vtx,draw=s1col,fill=s1col!12},
  v2/.style    = {vtx,draw=s2col,fill=s2col!15},
  vn/.style    = {vtx,draw=obscol,fill=black!4},
  vb/.style    = {double,double distance=0.9pt},
  dead/.style  = {vtx,draw=obscol!40,fill=white,text=obscol!65},
  e/.style     = {-{Stealth[length=2.2mm]},thick},
  eobs/.style  = {e,draw=obscol},
  ei/.style    = {e,draw=icol,dashed,line width=1pt},
  eg/.style    = {e,draw=gcol,line width=1.2pt},
  el/.style    = {e,draw=lcol,densely dotted,line width=1.2pt},
  ex/.style    = {e,draw=obscol!35,densely dotted},
  ew/.style    = {draw=gcol!75,line width=2.6pt,opacity=.35,-},
  blob1/.style = {rounded corners=7pt,draw=s1col!55,fill=s1col!6,thick,inner sep=5pt},
  blob2/.style = {rounded corners=7pt,draw=s2col!55,fill=s2col!7,thick,inner sep=5pt},
  lbl/.style   = {font=\scriptsize},
  panel/.style = {font=\small\bfseries},
}

\usepackage[colorlinks=true, linkcolor=blue!60!black, citecolor=green!40!black,
            urlcolor=blue!60!black]{hyperref}
\hypersetup{
  pdftitle={Emergence in Directed Networks is Bounded by Boundary-Crossing Paths},
  pdfauthor={Johnny Jingze Li and Gabriel A. Silva},
  pdfkeywords={emergence, complex networks, interacting systems, paths,
               attribution, coarse-graining}}
\theoremstyle{plain}
\newtheorem{theorem}{Theorem}[section]
\newtheorem{proposition}[theorem]{Proposition}
\newtheorem{lemma}[theorem]{Lemma}
\newtheorem{corollary}[theorem]{Corollary}
\theoremstyle{definition}
\newtheorem{definition}[theorem]{Definition}
\newtheorem{example}[theorem]{Example}
\theoremstyle{remark}
\newtheorem{remark}[theorem]{Remark}

\newcommand{\prodedge}[2][\Phi]{#1\langle #2\rangle}

\newcommand{\alttext}[1]{\ifjcn\par\smallskip
  {\leftskip=0pt\rightskip=0pt\parfillskip=0pt plus 1fil
   \noindent\footnotesize\textit{Alt text:} #1\par}\fi}

\newcommand*{\coderepo}{}
\newcommand{\codestatement}{%
  \ifjcn
    \ifx\coderepo\empty as supplementary material\else at \expandafter\url\expandafter{\coderepo}\fi
  \else
    as ancillary files with the arXiv version of this paper%
    \ifx\coderepo\empty\else{} and at \expandafter\url\expandafter{\coderepo}\fi
  \fi}

\title{\bfseries Emergence in Network Systems is Bounded by\\
       Boundary-Crossing Paths}
\author{Johnny Jingze Li\thanks{Department of Mathematics and Center
for Engineered Natural Intelligence, UC San Diego, La Jolla, CA, USA.
Email: \texttt{jjl148@ucsd.edu}.}
\and Gabriel A.\ Silva\thanks{Department of Bioengineering, Department of
Neurosciences, and Center for Engineered Natural Intelligence, UC San Diego,
La Jolla, CA, USA.}}
\date{}

\begin{document}
\maketitle

\begin{abstract}
Emergence, features of a whole that no part shows alone, is central to complex
systems but rarely measured so as to locate its source. Viewing a system
structurally, as a network of interacting parts, we evaluate emergence as what
observing the coupled whole reveals beyond, or erases from, the combined
observations of the parts. For observations that build what they see route by
route, we show that every emergent feature is produced by its own route across
the interface between parts, a route that traces it to the components and
interactions responsible. We show the number of all such boundary-crossing paths bounds
the total emergence discrepancy,  and equals the emergence of measures
that count routes.  The bound is computed locally around the interface, validated through network examples, and in a
nervous system the routes show that interneurons relay most of the emergence.
Our work associates emergence with path structure in the network system that can be anticipated, attributed, and engineered.

\medskip\noindent\textbf{Keywords:} emergence; complex networks; interacting
systems; boundary-crossing paths; attribution; coarse-graining.

\smallskip\noindent\textbf{MSC 2020:} 05C20 (primary); 05C38, 05C82 (secondary).
\end{abstract}

\section{Introduction}\label{sec:intro}

Couple two systems and something new can appear. Interdependent networks fail
in abrupt cascades that neither network shows on its
own~\cite{buldyrev2010catastrophic}, interacting neurons carry out collective
computations~\cite{hopfield1982neural}, and language models acquire abilities
at scale that smaller models lack~\cite{wei2022emergent}. Each case raises two
questions. \emph{Which interactions are responsible for an emergent phenomenon?
And how much emergence can a given coupling create?} This paper addresses both
from the structure of the system. For observations that build what they see
route by route, emergence is produced along the routes that cross the
interface between interacting parts: each unit of emergence is traced to a
route of its own, and the number of these routes bounds how much emergence the
coupling can create (Figure~\ref{fig:overview}).

\begin{figure}[tbp]
\centering
\includegraphics[width=\textwidth]{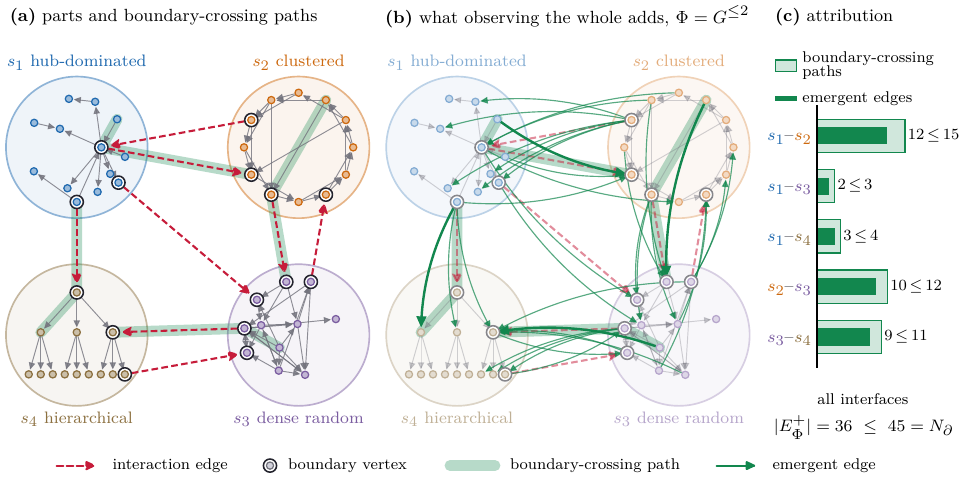}
\caption{\textbf{Emergence travels along the paths that cross between
interacting parts.}
\textbf{(a)}~Four modules $s_1,\dots,s_4$ with different internal organization, coupled by
interaction edges (dashed red); the ringed endpoints of these edges form the
interaction boundary.  Green tracks mark four boundary-crossing paths, paths
that traverse an interaction edge.
\textbf{(b)}~Observing the whole by two-step reachability ($\Phi=G^{\le2}$)
reveals emergent edges (green): edges that are absent when each module is
observed alone and the observed modules are then coupled.  Each is produced by
a boundary-crossing path; the four bold ones are produced by the paths marked
in~(a).
\textbf{(c)}~For each coupled pair of modules, the emergent edges produced
across that interface (dark) are at most as many as the boundary-crossing paths
through it (light); in all, 36 emergent edges ($|E^+_\Phi|$) against 45
boundary-crossing paths ($N_\partial$).}
\alttext{Three panels. Left: four softly colored circular regions, each holding
a small directed network with a different structure (a hub with spokes, a ring
with shortcuts, a dense tangle, and a tree), connected by eight red dashed arrows;
the vertices at the ends of the red arrows are ringed, and four wide green bands
trace short routes from inside one region, across a red arrow, into another.
Middle: the same network faded, with thirty-six thin green arrows joining
vertices of neighboring regions near the red arrows; four of them are bold and
run alongside the green bands. Right: a horizontal bar chart with one row per
connected pair of regions, each showing a dark bar for emergent edges inside a
longer light bar for boundary-crossing paths, with values 12 and 15, 2 and 3,
3 and 4, 10 and 12, and 9 and 11, and a total of 36 emergent edges against 45
paths.}
\label{fig:overview}
\end{figure}

\subsection{Emergence from a structural perspective}\label{sec:structural}

That the whole can exceed the sum of its parts is a recurring theme across the
sciences, from new properties at each level of
organization~\cite{anderson1972more} and complex behavior of simple
interacting components~\cite{holland1998emergence} to self-organizing
biology~\cite{kauffman1993origins}, brains~\cite{sporns2010networks,bullmore2009complex},
and learned systems~\cite{ganguli2022predictability}. Quantitative accounts,
mostly information-theoretic or
causal~\cite{tononi2004information,rosas2020reconciling,gershenson2012complexity},
measure from dynamics or statistics how much a whole exceeds its parts, often
by searching over partitions~\cite{oizumi2014phenomenology} or
coarse-grainings~\cite{hoel2013quantifying,klein2020emergence}. We take a
structural perspective: we study the structure of systems that gives rise to
emergence and ask which interaction patterns are responsible for it. Complex
systems are built from nearly decomposable modules, with strong interactions
inside each module and weaker ones between them~\cite{simon1962architecture};
network science has made directed networks a common language for such
architectures~\cite{newman2003structure,albert2002statistical,newman2018networks},
has shown that real networks are
modular~\cite{girvan2002community,newman2006modularity}, and has found that
specific wiring patterns carry out specific
functions~\cite{milo2002network,alon2007network}. If emergence is what
coupling adds, the interface between modules is where to look for its source.

Emergence is also relative to a way of looking at the system: it is defined
with respect to an observer~\cite{crutchfield1994calculi}, and whether a
capability appears emergent depends on the metric used to observe
it~\cite{schaeffer2023emergent}. We therefore make the observation explicit: a
system is a directed network, its parts are subnetworks on disjoint vertex
sets, their \emph{interaction} is the set of cross edges joining them, and an
\emph{observation} turns a network into an observed network, for instance by
linking each vertex to every vertex it reaches in at most $k$ steps.
\emph{Emergence} is the discrepancy between observing the whole and combining
the observations of the parts: the \emph{emergent edges} appear only in the
observed whole, and the \emph{lost edges} only in the observed parts joined by
the interaction edges.

\subsection{The idea: emergence travels along boundary-crossing paths}\label{sec:idea}

Figure~\ref{fig:overview} shows the idea. Four modules are coupled by eight
interaction edges (dashed red), whose endpoints form the interaction boundary.
A \emph{boundary-crossing path} is a short route that traverses at least one
interaction edge. Observing the network by two-step reachability,
$\Phi=G^{\le2}$, reveals $36$ emergent edges, pairs linked within two steps
only through the coupling. Each is produced by its own boundary-crossing path,
its \emph{witness}; of the $45$ boundary-crossing paths, the other nine produce
edges that the observed parts and the interaction already show.

The reason is simple. A route inside one part is seen when that part is
observed alone; a route reveals something new only by joining the two
sides, and it changes sides only along an interaction edge. Behind this is a
change of viewpoint: a network is its paths. The paths of a joined system are
those of each part plus those through an interaction edge, so passing to
paths makes the effect of coupling exactly additive, a linearization that lets
us count emergence route by route.

Our main result, Theorem~\ref{thm:main}, makes this precise for every
observation that \emph{acts on paths}: each route of bounded length either
produces one observed edge or produces nothing, depending on the route alone,
and every observed edge is produced by some route. Examples are $k$-step
reachability ($G^{\le k}$, a graph power), alone or combined with
coarse-graining or thresholding, and deletion rules that read short routes or
cycles. Writing $E^+_\Phi$ for the set of emergent
edges and $N_\partial$ for the number of boundary-crossing paths that produce
an edge, the theorem states:
\begin{enumerate}[label=(\roman*),itemsep=2pt,topsep=4pt]
\item the emergent edges are exactly the edges produced by boundary-crossing
paths that the observed parts and the interaction do not already show;
\item distinct emergent edges have distinct witnesses, so that
\[
  |E^+_\Phi|\;\le\;N_\partial,
\]
and the witness map traces each unit of emergence to one route;
\item equality holds exactly when the producing boundary-crossing paths
produce distinct edges that the observed parts and the interaction do not
already show.
\end{enumerate}
The whole discrepancy, lost edges included, is at most the number of all
boundary-crossing paths (Corollary~\ref{cor:discrepancy}); for observables
that sum a weight over routes the same paths give the emergence exactly, and
for the path count it equals their number (Theorem~\ref{thm:scalar} and
Corollary~\ref{cor:pathcount}).

The bound also exposes a multiplicative signature. For graph powers and an
interaction that runs one way, an interaction edge couples every route into
its tail with every route out of its head, so the boundary-crossing paths and
the emergent edges have closed forms in the reach of the two parts
(Proposition~\ref{prop:bridge}). If the tail receives edges from $k$ vertices,
the head sends edges to $m$ vertices, and there are no other edges, three-step
reachability reveals $k+m+km$ emergent edges (Example~\ref{ex:bowtie}). The
term $km$, in-reach times out-reach, lets a single edge between internally rich
modules create a surge: in our experiments, the mean number of emergent edges
created by one edge between two eight-vertex modules rises from $0$ to $39.3$
as their internal mean out-degree grows from $0$ to $3$, matching the exact
expectation $39.0$ given by the bridge law (Corollary~\ref{cor:bridge_random}).

\subsection{Why this matters}\label{sec:why}

\begin{description}[leftmargin=0pt,itemsep=3pt,topsep=3pt,font=\normalfont\bfseries]
\item[Novel.] To our knowledge, no previous work bounds and attributes
emergence by the routes that cross the interface between parts; we build on
the generative effects of Adam and Dahleh~\cite{adam2019generativity} and on
incremental algorithms for reachability and database
queries~\cite{gupta1993maintaining,italiano1986amortized}, and add this bound
and attribution for a whole class of observations. Where existing measures
compute emergence from dynamics or statistics, and studies of interacting
networks show how coupling reshapes processes on
them~\cite{leicht2009percolation,kivela2014multilayer}, we locate its source in
interface structures and count them (Section~\ref{sec:related}).

\item[Mechanistic.] Because each unit of emergence is traced to a route, the
routes that a coupling opens can be counted before the coupled system is
observed. Their number anticipates and bounds the emergence the coupling can
create, approximates it where the interface offers distinct channels, as in
large sparse networks (Sections~\ref{sec:tightness} and~\ref{sec:scale}), and
ranks candidate couplings better than the cut size, the degrees at the
interface, or edge betweenness once the observation reads three or more steps
(Section~\ref{sec:baselines}). Cutting channels suppresses emergence
(Proposition~\ref{prop:cut}), and bridging internally structured modules
amplifies it (Section~\ref{sec:bridge}), as tuning interconnection can
suppress or amplify cascades~\cite{brummitt2012suppressing}. This echoes
mechanistic accounts of abrupt capabilities in learned
systems~\cite{nanda2023progress,olsson2022incontext}.

\item[Attribution.] The boundary-crossing paths identify which components and
interactions are responsible for each emergent feature: each feature is shared
equally among the routes that produce it, and each vertex and edge on these
routes receives part of its credit (Definition~\ref{def:shares}). Grouped by
interface, these routes show how much of the emergence each coupling carries
(Figure~\ref{fig:overview}(c)), and removing
interaction edges removes exactly the features whose routes all use them
(Proposition~\ref{prop:intervene}), a structural counterpart to circuit
discovery in learned
networks~\cite{olah2020zoom,wang2023interpretability,conmy2023automated}. In
the \emph{C.~elegans} connectome, interneurons, $30\%$ of the neurons, relay
$66\%$ of the emergence of two-step reachability, and cutting their synapses
onto motor neurons removes, as predicted, $38\%$ of the emergent edges
(Section~\ref{sec:real}).

\item[Computable.] The bound is local. The boundary-crossing paths are found by
a backward and a forward search from each interaction edge, like the searches
that accumulate shortest-path counts for betweenness~\cite{brandes2001faster},
at a cost that, for a given reach and maximum degree, is independent of the
size of the parts (Corollary~\ref{cor:local}). On parts of up to $10^5$
vertices, the local search returns the exact emergent edges of a graph power
in milliseconds, without applying the observation to the whole system, which
takes seconds (Section~\ref{sec:scale}). A looser bound by walks follows from
sparse matrix--vector products~\cite{biggs1993algebraic,newman2018networks}
(Appendix~\ref{app:computation}).

\item[Grounded.] Working with paths loses nothing: a directed network and the
set of its paths carry the same information, as the path algebras of quivers
(directed graphs) make
precise~\cite{gabriel1972unzerlegbare,assem2006elements,schiffler2014quiver,derksen2017introduction}.
Our prior work quantified emergence in that setting~\cite{li2025categorical};
here we work with the paths directly (Section~\ref{sec:paths}).
\end{description}

\subsection{Contributions and organization}

\begin{enumerate}[label=(\arabic*),leftmargin=2em,itemsep=1pt,topsep=3pt]
\item A structural framework in which emergence is a discrepancy
(Sections~\ref{sec:framework} and~\ref{sec:discrepancy}).
\item The main theorem (Theorem~\ref{thm:main}): the emergent edges are exactly
the new edges produced by boundary-crossing paths, at most $N_\partial$ of
them, each attributed to its own witness, with equality characterized;
corollaries on the whole discrepancy, sharpening, locality and cost, and many
parts; shares that divide each emergent edge among the paths that produce it;
a closed form for a single bridge, in which the emergent edges pair the
in-reach of its tail with the out-reach of its head; and an exact identity for
observables that sum over paths (Section~\ref{sec:main}).
\item What produces emergence, what suppresses it, and the exact effect of
removing interaction edges (Section~\ref{sec:examples}).
\item Exact verification on $8935$ instances, with parts from sixteen
synthetic and empirical network families, and experiments on tightness,
topology, reach, granularity, bridging, parts of up to $10^5$ vertices,
simpler statistics of the coupling, and three real directed networks
(Section~\ref{sec:numerics}).
\item A bound on the emergence potential of a subsystem in a larger network by
the walks through its boundary, related to truncated Katz centrality
(Appendix~\ref{app:potential}).
\end{enumerate}
Section~\ref{sec:discussion} discusses related work, applications, and open
questions; Appendices~\ref{app:computation} and~\ref{app:experiments} treat
computation and the design of the experiments.

\section{A structural framework}\label{sec:framework}

We describe a system by its structure, a directed network of components and
their interactions~\cite{newman2003structure,newman2018networks}, with three
ingredients: parts joined by an interaction, an observation that turns a
network into what is seen of it, and paths.

\subsection{Systems, interaction, and boundary}

All graphs are finite, directed, and loopless~\cite{bang2009digraphs}, with
vertices in one fixed finite set $\Omega$, so that graphs built from different
parts are comparable edge by edge.  A graph $G=(V(G),E(G))$ has
$V(G)\subseteq\Omega$ and $E(G)\subseteq V(G)\times V(G)$, and $G\subseteq H$
means $V(G)\subseteq V(H)$ and $E(G)\subseteq E(H)$.  When edges carry weights,
each ordered pair of vertices has one fixed weight, which every subgraph
inherits.

\begin{definition}[Parts, interaction, and join]\label{def:interaction}
Let $s_1=(V_1,E_1)$ and $s_2=(V_2,E_2)$ be graphs on disjoint vertex sets, the
\emph{parts}.  An \emph{interaction} is a set of cross edges
$\mathcal{I}\subseteq(V_1\times V_2)\cup(V_2\times V_1)$, and the \emph{join}
of the parts along $\mathcal{I}$ is the graph
\[
s_1\vee_{\mathcal{I}}s_2:=\bigl(V_1\cup V_2,\;E_1\cup E_2\cup\mathcal{I}\bigr),
\]
written $s_1\vee s_2$ when $\mathcal{I}$ is clear.
\end{definition}

The join keeps each part intact and adds exactly the prescribed coupling, in
either direction; in multilayer terms, the parts are layers and the
interaction edges are interlayer edges~\cite{kivela2014multilayer}.

\begin{definition}[Interaction boundary]\label{def:boundary}
The \emph{interaction boundary} $\partial\subseteq V_1\cup V_2$ is the set of
endpoints of the interaction edges.
\end{definition}

Figure~\ref{fig:interaction}(a) shows two parts, their interaction, and its
boundary.  The boundary is where the parts touch: a route from one part to the
other passes along an interaction edge.  The main results, listed in
Remark~\ref{rem:multipart}, apply verbatim to any number of parts
$s_1,\dots,s_k$ on pairwise disjoint vertex sets, with the interaction a set
of edges between different parts, as in networks of networks~\cite{gao2012networks}.

\begin{figure}[htbp]
\centering
\begin{tikzpicture}[x=10mm,y=11mm,baseline=(current bounding box.north)]
  \node[v1]    (a1) at (0,1)   {$a_1$};
  \node[v1]    (a2) at (0,0)   {$a_2$};
  \node[v1,vb] (b1) at (1.25,1) {$b_1$};
  \node[v1,vb] (b2) at (1.25,0) {$b_2$};
  \draw[eobs] (a1) -- (b1);
  \draw[eobs] (a1) -- (a2);
  \draw[eobs] (a2) -- (b2);
  \node[v2,vb] (c1) at (3.35,1) {$c_1$};
  \node[v2,vb] (c2) at (3.35,0) {$c_2$};
  \node[v2]    (d1) at (4.6,1)  {$d_1$};
  \node[v2]    (d2) at (4.6,0)  {$d_2$};
  \draw[eobs] (c1) -- (d1);
  \draw[eobs] (d1) -- (d2);
  \draw[eobs] (d2) -- (c2);
  \begin{scope}[on background layer]
    \node[blob1,fit=(a1)(b2)] (S1) {};
    \node[blob2,fit=(c1)(d2)] (S2) {};
  \end{scope}
  \node[s1col,font=\small,anchor=south] at (S1.north) {$s_1$};
  \node[s2col,font=\small,anchor=south] at (S2.north) {$s_2$};
  \node[panel,anchor=south west] at ($(S1.north west)+(-1mm,0)$) {(a)};
  \draw[ei] (b1) -- (c1);
  \draw[ei] (c2) -- (b2);
  \node[icol,font=\small] at (2.3,0.5) {$\mathcal{I}$};
  \draw[decorate,decoration={brace,amplitude=5pt,mirror},obscol,thick]
    ($(b2.west)+(0,-7mm)$) -- ($(c2.east)+(0,-7mm)$)
    node[midway,below=5pt,font=\footnotesize,text=black]
    {interaction boundary $\partial=\{b_1,b_2,c_1,c_2\}$};
\end{tikzpicture}\hspace{7mm}%
\begin{tikzpicture}[x=11mm,y=11mm,baseline=(current bounding box.north),
    band/.style = {line width=7pt,line cap=round,line join=round,opacity=.3},
    hl1/.style  = {band,draw=s1col},
    hl2/.style  = {band,draw=s2col},
    hlg/.style  = {band,draw=gcol,opacity=.38},
  ]
  \node[v1]    (a) at (0,1) {};
  \node[v1]    (b) at (1,1) {};
  \node[v1]    (c) at (2,1) {};
  \node[v1]    (g) at (1,0) {};
  \node[v1,vb] (h) at (2,0) {};
  \draw[eobs] (a) -- (b); \draw[eobs] (b) -- (c);
  \draw[eobs] (g) -- (h); \draw[eobs] (a) -- (g);
  \node[v2]    (d) at (3.6,1) {};
  \node[v2]    (e) at (4.6,1) {};
  \node[v2]    (f) at (5.6,1) {};
  \node[v2,vb] (i) at (3.6,0) {};
  \node[v2]    (j) at (4.6,0) {};
  \draw[eobs] (d) -- (e); \draw[eobs] (e) -- (f);
  \draw[eobs] (i) -- (j); \draw[eobs] (j) -- (f);
  \draw[ei] (h) -- (i);
  \begin{scope}[on background layer]
    \node[blob1,fit=(a)(c)(h)] (S1) {};
    \node[blob2,fit=(d)(f)(i)] (S2) {};
    \draw[hl1] (a.center) -- (b.center) -- (c.center);
    \draw[hl2] (d.center) -- (e.center) -- (f.center);
    \draw[hlg] (g.center) -- (h.center) -- (i.center) -- (j.center);
  \end{scope}
  \node[s1col,font=\small,anchor=south] at (S1.north) {$s_1$};
  \node[s2col,font=\small,anchor=south] at (S2.north) {$s_2$};
  \node[panel,anchor=south west] at ($(S1.north west)+(-1mm,0)$) {(b)};
  \newcommand{\crosskeyband}[1]{\tikz[baseline=-0.55ex]\draw[#1](0,0)--(5.5mm,0);}
  \node[font=\footnotesize,anchor=north] at ($(S1.south west)!.5!(S2.south east)+(0,-2.5mm)$)
    {\crosskeyband{hl1}\ path of $s_1$\qquad \crosskeyband{hl2}\ path of $s_2$\qquad
     \crosskeyband{hlg}\ new path};
\end{tikzpicture}
\caption{\textbf{Parts, interaction, and boundary-crossing paths
(Definitions~\ref{def:interaction} and~\ref{def:boundary},
Lemma~\ref{lem:crossing}).}
\textbf{(a)}~Parts $s_1$ (blue) and $s_2$ (orange) on disjoint vertex sets,
their interaction $\mathcal{I}$ (dashed red, in either direction), and its
boundary $\partial$ (double rings).
\textbf{(b)}~A path that avoids $\mathcal{I}$ stays in one part (blue,
orange); the paths of $s_1\vee s_2$ that belong to neither part are exactly
those that traverse an interaction edge (green).}
\alttext{Two panels. Left: two rounded regions of circles, blue on the left
labeled s1 and orange on the right labeled s2, each containing gray arrows
between its own vertices. Two dashed red arrows join the regions, one
pointing right and one pointing left; the four circles they touch have
double outlines, and a brace beneath them labels them as the interaction
boundary. Right: two similar regions of unlabeled circles joined by a single
dashed red arrow from a double-ringed blue circle to a double-ringed orange
circle. A blue band highlights a path of three circles inside the blue
region, an orange band highlights a path inside the orange region, and a
green band highlights a path that starts in the blue region, crosses the red
arrow, and continues in the orange region.}
\label{fig:interaction}\label{fig:crossing}
\end{figure}

\subsection{Observations}

Emergence is relative to a way of looking at a system
(Section~\ref{sec:structural}).  We formalize a way of looking as a graph
operator~\cite{prisner1995graph}, a rule that turns each graph into another,
such as a graph power or a coarse-graining, together with what the rule does
to vertices, since some observations merge them.

\begin{definition}[Observation]\label{def:obs}
An \emph{observation} is a rule $\Phi$ that assigns to every graph $G$ on
$\Omega$ a graph $\Phi(G)$ on a second fixed set $\Omega'$, together with a
\emph{vertex map} $\varphi:\Omega\to\Omega'$, fixed in advance and the same for
every $G$ (for a coarse-graining, $\varphi$ sends each vertex to its block).
Unless the observation merges vertices, $\Omega'=\Omega$ and $\varphi$ is the
identity.  The observation is \emph{monotone} if $G\subseteq H$ implies
$E(\Phi(G))\subseteq E(\Phi(H))$.
\end{definition}

The following examples are all monotone; Figure~\ref{fig:obsmaps} shows
several of them on one small graph.
\begin{enumerate}[label=(\roman*),leftmargin=*,itemsep=1pt,topsep=3pt]
\item \emph{Identity}: $\Phi(G)=G$.
\item \emph{Path closure}, or graph power, $G^{\le k}$ (we use the two names
interchangeably): the same vertices, with an edge $(u,v)$, $u\ne v$, whenever
a path of length at most $k$ runs from $u$ to $v$~\cite{bang2009digraphs}; it
records $k$-step reachability.
\item \emph{Thresholding} at a fixed $\tau$: keep the edges of weight at least
$\tau$.
\item \emph{Coarse-graining}~\cite{song2005selfsimilarity,itzkovitz2005coarse}
by a partition of $\Omega$ fixed in advance: the blocks that meet $V(G)$, with
an edge $(B,C)$, $B\ne C$, whenever an edge of $G$ runs from $B$ to $C$.
\item \emph{Deletion rules that read context}: \emph{drop sinks} deletes the
vertices without an out-edge, so an edge survives exactly when its head has an
out-edge; \emph{edges on a cycle} keeps the edges that lie on a directed cycle
of length at most $L$ (all cycles when $L=|\Omega|$); the \emph{continuation
filter} keeps the edges that begin a path of length at least $L_0$.
\item \emph{Two compositions}: coarse-grained powers ($G^{\le k}$ followed by
coarse-graining) and thresholded powers (thresholding followed by $G^{\le k}$).
\end{enumerate}
The identity, thresholding, and coarse-graining decide each observed edge from
a single edge of $G$; the others read routes of length at least two.  Among these examples, only
the route-reading ones produce emergence (Section~\ref{sec:examples}).
Observations that return a number are treated at the end of
Section~\ref{sec:main} (Definition~\ref{def:discrepancy_scalar}).

\begin{figure}[htbp]
\centering
\begin{tikzpicture}[x=11mm,y=11mm,
    vn/.append style   = {minimum size=5.3mm},
    dead/.append style = {minimum size=5.3mm},
    wt/.style   = {font=\scriptsize,inner sep=1pt,outer sep=0.6pt},
    wtx/.style  = {wt,text=obscol!60},
    desc/.style = {font=\scriptsize,align=center,anchor=north},
    blk/.style  = {vn,rectangle,rounded corners=2.65mm,minimum height=5.3mm,
                   minimum width=5.3mm,inner xsep=1.2pt},
    ttl/.style  = {font=\footnotesize\bfseries,anchor=base},
  ]
  \begin{scope}
    \node[ttl] at (1,1.62) {source $G$};
    \node[vn] (p) at (0,1) {$p$}; \node[vn] (q) at (1,1) {$q$};
    \node[vn] (r) at (2,1) {$r$}; \node[vn] (t) at (0,0) {$t$};
    \draw[eobs] (p) -- node[wt,above]{$0.8$} (q);
    \draw[eobs] (q) -- node[wt,above]{$0.7$} (r);
    \draw[eobs] (t) -- node[wt,right]{$0.2$} (p);
    \node[desc] at (1,-0.36) {weights, read\\by thresholding};
  \end{scope}
  \begin{scope}[xshift=32mm]
    \node[ttl] at (1,1.62) {path closure $G^{\le 2}$};
    \node[vn] (p) at (0,1) {$p$}; \node[vn] (q) at (1,1) {$q$};
    \node[vn] (r) at (2,1) {$r$}; \node[vn] (t) at (0,0) {$t$};
    \draw[eobs] (p) -- (q); \draw[eobs] (q) -- (r); \draw[eobs] (t) -- (p);
    \draw[eg] (p) to[bend left=34] (r);
    \draw[eg] (t) -- (q);
    \node[desc,text=gcol] at (1,-0.36) {creates $(p,r)$\\and $(t,q)$};
  \end{scope}
  \begin{scope}[xshift=64mm]
    \node[ttl] at (1,1.62) {coarse-graining};
    \node[blk,minimum width=16.3mm] (P) at (0.5,1) {$\{p,q\}$};
    \node[blk] (R) at (2,1) {$\{r\}$};
    \node[blk] (T) at (0,0) {$\{t\}$};
    \draw[eobs] (P) -- (R);
    \draw[eobs] (T) -- (T |- P.south);
    \node[desc] at (1,-0.36) {blocks become vertices;\\ $(p,q)$ is absorbed};
  \end{scope}
  \begin{scope}[xshift=96mm]
    \node[ttl] at (1,1.62) {thresholding, $\tau=0.5$};
    \node[vn] (p) at (0,1) {$p$}; \node[vn] (q) at (1,1) {$q$};
    \node[vn] (r) at (2,1) {$r$}; \node[vn] (t) at (0,0) {$t$};
    \draw[eobs] (p) -- node[wt,above]{$0.8$} (q);
    \draw[eobs] (q) -- node[wt,above]{$0.7$} (r);
    \draw[ex]   (t) -- node[wtx,right]{$0.2$} (p);
    \node[desc] at (1,-0.36) {deletes the\\light edge $(t,p)$};
  \end{scope}
  \begin{scope}[xshift=128mm]
    \node[ttl] at (1,1.62) {drop sinks};
    \node[vn]   (p) at (0,1) {$p$}; \node[vn] (q) at (1,1) {$q$};
    \node[dead] (r) at (2,1) {$r$}; \node[vn] (t) at (0,0) {$t$};
    \draw[eobs] (p) -- (q); \draw[ex] (q) -- (r); \draw[eobs] (t) -- (p);
    \node[desc] at (1,-0.36) {deletes the sink $r$\\and $(q,r)$};
  \end{scope}
\end{tikzpicture}
\caption{\textbf{Several observations of one small graph
(Definition~\ref{def:obs}).}  Each panel applies one observation to the
source $G$ (left).  Green edges are created, and faint dotted elements are deleted;
coarse-graining uses the fixed partition $\{p,q\},\{r\},\{t\}$.}
\alttext{Five small panels in one row, each showing the four vertices p, q,
r, and t. The source graph has arrows p to q, q to r, and t to p, labeled
with weights 0.8, 0.7, and 0.2. The path-closure panel adds two green
arrows, p to r and t to q. The coarse-graining panel merges p and q into one
rounded block with arrows from block t into it and from it to block r. The
thresholding panel fades the light arrow from t to p. The drop-sinks panel
fades vertex r and the arrow from q to r.}
\label{fig:obsmaps}
\end{figure}

\subsection{Paths as the basic objects}\label{sec:paths}

A \emph{path} of length $k\ge1$ in $G$ is a sequence $p=(v_0,\dots,v_k)$ of
distinct vertices with $(v_{i-1},v_i)\in E(G)$ for $i=1,\dots,k$; the path
\emph{traverses} these $k$ edges.  A \emph{cycle} of length $k\ge2$ is defined
in the same way with $v_k=v_0$ and otherwise distinct vertices; it is read from
its starting vertex, so a cycle of length $k$ appears $k$ times, once per
rotation.  We fix a length bound $L$ and write $\mathcal{P}_L(G)$ for the set of
paths of $G$ of length at most $L$; for observations that read a return route,
such as edges on a cycle, $\mathcal{P}_L(G)$ and the word ``path'' also include
the cycles of length at most $L$.  The length bound $L$ sets the reach of the observation, for
example the number of layers of a feedforward system, and keeps the counts of
Section~\ref{sec:main} local to the interface.  Paths are the basic objects of
the theory, for four reasons.

\paragraph{Linearization.}  A graph is a nonlinear object: reachability and
cycles depend on how edges combine, so observing a union can reveal more than
the union of the observations.  Paths restore additivity.  The paths of a
joined system fall into three disjoint classes: the paths of $s_1$, the paths
of $s_2$, and the paths that traverse an interaction edge
(Lemma~\ref{lem:crossing} and Figure~\ref{fig:crossing}(b)).  The effect of
coupling thus becomes an account over individual routes, in the tradition of
network measures built from walks and
paths~\cite{katz1953new,borgatti2006graph}.

\paragraph{Mechanism.}  A path that traverses an interaction edge is an
explicit channel from structure on one side to a site on the other side where
the observation acts, so each counted route names a channel.

\paragraph{Grounding.}  A directed network and the set of its paths carry the
same information: the edges are the paths of length one, and longer paths are
chains of edges.  The representation theory of quivers (directed graphs), which
goes back to Gabriel~\cite{gabriel1972unzerlegbare}, makes this precise.  The
path algebra of a quiver has its paths as a basis: a trivial path at each
vertex and every head-to-tail sequence of edges, possibly
revisiting vertices (our simple paths among them), with concatenation as
multiplication; the representations of the quiver, a vector space at each
vertex and a linear map on each edge, correspond to the modules, in the
algebraic sense, over this algebra~\cite{assem2006elements,schiffler2014quiver,derksen2017introduction}.  A
representation acts along a path by composing its maps, so paths are the
building blocks of linear models on networks; a neural network is such a
representation with activation functions~\cite{armenta2021representation}.  Our prior work quantified emergence in this
setting~\cite{li2025categorical}; here we keep its combinatorial core, the
paths.

\paragraph{Simple paths.}  Which kind of route an observable should count
depends on how influence travels~\cite{borgatti2005centrality}.  For channels
between parts we take the simple path, since a walk that returns to a vertex
travels along a channel it has already opened; in a feedforward network, such
as a layered neural network, every walk is already a path.

\section{Emergence as the gap between the whole and its parts}\label{sec:discrepancy}

There are two ways to go from the parts to an observation of the whole
(Figure~\ref{fig:square}).  One joins the parts and then observes the joined
system, which gives $\Phi(s_1\vee s_2)$.  The other observes each part and then
joins the observed parts with the same interaction.  Emergence is the gap
between the two results, a quantitative form, for graphs coupled by an
explicit interaction, of generative
effects~\cite{adam2019generativity,adam2017thesis}: the structure seen in the
whole beyond what the observed parts and their interaction show.

\begin{definition}[Observed union]\label{def:observed_union}
The \emph{observed union} of $s_1$ and $s_2$ under $\Phi$ is the graph
\[
\Phi(s_1)\vee\Phi(s_2):=\Bigl(V(\Phi(s_1))\cup V(\Phi(s_2))\cup\varphi(\partial),\;
E(\Phi(s_1))\cup E(\Phi(s_2))\cup\varphi(\mathcal{I})\Bigr),
\]
where $\varphi(\mathcal{I}):=\{(\varphi(u),\varphi(v)):(u,v)\in\mathcal{I},\
\varphi(u)\ne\varphi(v)\}$ is the interaction carried to the observed level by
the vertex map (an interaction edge inside one block of a coarse-graining is
dropped).  The interaction is formed \emph{after} the observation, so it
always appears in the observed union, whether or not $\Phi$ keeps it in the
whole.
\end{definition}

When $\varphi$ is the identity, $\varphi(\mathcal{I})=\mathcal{I}$.  Because
$\varphi$ is fixed in advance, the observed whole and the observed union live
on the same set $\Omega'$ and are compared edge by edge.

\begin{definition}[Emergence discrepancy]\label{def:discrepancy}
The \emph{emergent} (or gained) edges and the \emph{lost} edges are
\[
E^+_\Phi:=E\bigl(\Phi(s_1\vee s_2)\bigr)\setminus E\bigl(\Phi(s_1)\vee\Phi(s_2)\bigr),
\qquad
E^-_\Phi:=E\bigl(\Phi(s_1)\vee\Phi(s_2)\bigr)\setminus E\bigl(\Phi(s_1\vee s_2)\bigr),
\]
and the \emph{magnitude} of emergence is
$\|\Delta_\Phi\|:=|E^+_\Phi|+|E^-_\Phi|$.
\end{definition}

Emergent edges are structure that the observation of the whole reveals beyond
the observed parts joined by their interaction; lost edges are structure that
the observed parts and their interaction show and the observed whole erases.
In Figure~\ref{fig:square}, $s_1=(a\to b)$, $s_2=(c\to d)$,
$\mathcal{I}=\{(b,c)\}$, and $\Phi=G^{\le2}$.  Each part is a single edge, which
the path closure leaves unchanged, so the observed union is the chain
$a\to b\to c\to d$.  Observing the whole adds $(a,c)$ and $(b,d)$, each joining
the ends of a path of length two through the interaction edge $(b,c)$.  Hence
$E^+_\Phi=\{(a,c),(b,d)\}$ and $E^-_\Phi=\varnothing$.

\begin{figure}[htbp]
\centering
\begin{tikzpicture}[x=1mm,y=1mm,
    frm/.style = {rounded corners=4pt,draw=black!20,fill=black!2,thick},
    op/.style  = {-{Stealth[length=2.6mm,width=2.1mm]},line width=1.1pt,
                  draw=black!60,rounded corners=3pt,shorten >=0.8pt},
    opl/.style = {font=\footnotesize,inner sep=2.5pt},
    obj/.style = {font=\small,inner sep=2pt},
  ]
  \newcommand{\sqpair}[3]{
    \node[v1]    (#1-a) at (#2-5.5,#3+5) {$a$};
    \node[v1,vb] (#1-b) at (#2+5.5,#3+5) {$b$};
    \node[v2,vb] (#1-c) at (#2-5.5,#3-5) {$c$};
    \node[v2]    (#1-d) at (#2+5.5,#3-5) {$d$};
    \draw[eobs] (#1-a) -- (#1-b);
    \draw[eobs] (#1-c) -- (#1-d);
    \begin{scope}[on background layer]
      \node[frm,fit=(#1-a)(#1-d),inner sep=1.6mm] (#1) {};
    \end{scope}}
  \newcommand{\sqchain}[3]{
    \node[v1]    (#1-a) at (#2-17.5,#3) {$a$};
    \node[v1,vb] (#1-b) at (#2-6.5,#3)  {$b$};
    \node[v2,vb] (#1-c) at (#2+6.5,#3)  {$c$};
    \node[v2]    (#1-d) at (#2+17.5,#3) {$d$};
    \draw[eobs] (#1-a) -- (#1-b);
    \draw[ei]   (#1-b) -- (#1-c);
    \draw[eobs] (#1-c) -- (#1-d);
    \begin{scope}[on background layer]
      \node[frm,fit=(#1-a)(#1-d),inner xsep=1.6mm,inner ysep=6.6mm] (#1) {};
    \end{scope}}
  \sqpair{TL}{0}{-17}                
  \sqchain{TR}{50}{0}                
  \sqpair{BL}{50}{-34}               
  \sqchain{UP}{112}{0}               
  \draw[eg] (UP-a) to[bend left=34]  (UP-c);
  \draw[eg] (UP-b) to[bend right=34] (UP-d);
  \sqchain{LO}{112}{-34}             
  \node[obj,anchor=west,align=left] at ($(TL.east)+(1mm,0)$)
       {$(s_1,s_2)$\\[-1pt]\textnormal{\scriptsize\color{obscol}the parts}};
  \node[obj,anchor=south] at (TR.north)
       {$s_1\vee s_2$\ \ \textnormal{\scriptsize\color{obscol}the whole}};
  \node[obj,anchor=south] at (UP.north)
       {$\Phi(s_1\vee s_2)$\ \ \textnormal{\scriptsize\color{obscol}observed whole}};
  \node[obj,anchor=north] at (BL.south)
       {$(\Phi(s_1),\Phi(s_2))$\ \ \textnormal{\scriptsize\color{obscol}observed parts}};
  \node[obj,anchor=north] at (LO.south)
       {$\Phi(s_1)\vee\Phi(s_2)$\ \ \textnormal{\scriptsize\color{obscol}observed union}};
  \draw[op] (TL.north) |- node[opl,above,pos=0.72]{join with {\color{icol}$\mathcal{I}$}} (TR.west);
  \draw[op] (TR) -- node[opl,above]{observe} (UP);
  \draw[op] (TL.south) |- node[opl,below,pos=0.72]{observe each part} (BL.west);
  \draw[op] (BL) -- node[opl,below]{join with {\color{icol}$\mathcal{I}$}} (LO);
  \node[font=\small,text=gcol,align=center] at (112,-17)
       {gap: \ $E^+_\Phi=\{(a,c),(b,d)\}$};
\end{tikzpicture}
\caption{\textbf{Emergence is the gap between joining first and observing
first (Definitions~\ref{def:observed_union} and~\ref{def:discrepancy}).}
The upper way joins the parts with the interaction and then observes the
whole; the lower way observes each part and then joins the observed parts
with the same interaction.  Here $s_1=(a\to b)$, $s_2=(c\to d)$,
$\mathcal{I}=\{(b,c)\}$, and $\Phi=G^{\le2}$.  The green edges, each produced
by a path through $(b,c)$, appear only in the observed whole and form
$E^+_\Phi$; edges present only in the observed union would form $E^-_\Phi$,
which is empty here.}
\alttext{A diagram of two ways from a common start.  On the left, in a light
frame, are the two parts: a to b in blue above c to d in orange.  The upper
way first joins them into the chain a, b, c, d with a dashed red arrow from b
to c, then observes the chain, which gains two green arrows, a to c and b to
d.  The lower way first observes each part, then joins them, giving the chain
a, b, c, d with no green arrows.  The two end results are stacked on the
right, and a green label between them names the two green arrows as the gap.}
\label{fig:square}
\end{figure}

Forming the interaction after the observation fixes the baseline.  For the
identity observation, both ways end at $s_1\vee s_2$, so
$E^+_{\mathrm{id}}=E^-_{\mathrm{id}}=\varnothing$: an observation that neither
creates nor discards structure reports no emergence.  For the same reason, the
image $\varphi(\mathcal{I})$ of the interaction already lies in the observed
union, so emergent edges are structure that the observation builds from the
coupling, not the coupling itself.  Lost edges are easy to describe for
monotone observations.

\begin{proposition}[Lost edges are destroyed interaction edges]\label{prop:lost}
If $\Phi$ is monotone, the lost edges are exactly the images of interaction
edges that the observation of the whole destroys:
\[
E^-_\Phi=\varphi(\mathcal{I})\setminus E\bigl(\Phi(s_1\vee s_2)\bigr).
\]
In particular, if $E(\Phi(s_1\vee s_2))$ contains every edge of
$\varphi(\mathcal{I})$, then $E^-_\Phi=\varnothing$ and
$\|\Delta_\Phi\|=|E^+_\Phi|$.
\end{proposition}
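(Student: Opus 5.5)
The identity follows from one inclusion: by monotonicity, the observations of the parts already sit inside the observation of the whole. Once that is in place, the only edges of the observed union that can be missing from the observed whole are the images of interaction edges.

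\emph{Step 1.} Since $V_1\cap V_2=\varnothing$, each part is a subgraph of the join. Indeed, $V(s_i)=V_i\subseteq V_1\cup V_2$ and $E_i\subseteq E_1\cup E_2\cup\mathcal{I}$, so $s_i\subseteq s_1\vee s_2$ for $i=1,2$. Monotonicity of $\Phi$ (Definition~\ref{def:obs}) then gives
\[
E(\Phi(s_1))\cup E(\Phi(s_2))\subseteq E\bigl(\Phi(s_1\vee s_2)\bigr).
\]

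\emph{Step 2.} By Definition~\ref{def:observed_union}, the edge set of the observed union is $E(\Phi(s_1))\cup E(\Phi(s_2))\cup\varphi(\mathcal{I})$. Subtracting $E(\Phi(s_1\vee s_2))$ gives $E^-_\Phi$ (Definition~\ref{def:discrepancy}). By Step~1, the first two sets contribute nothing to this difference, so
\[
E^-_\Phi=\varphi(\mathcal{I})\setminus E\bigl(\Phi(s_1\vee s_2)\bigr).
\]
This is a plain set identity: $(A\cup B\cup C)\setminus D=C\setminus D$ whenever $A\cup B\subseteq D$.

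\emph{Step 3.} If $\varphi(\mathcal{I})\subseteq E(\Phi(s_1\vee s_2))$, the right-hand side is empty. Then $E^-_\Phi=\varnothing$, and by definition $\|\Delta_\Phi\|=|E^+_\Phi|+0$.

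There is no real obstacle here. The one point needing care is Step~1: monotonicity is stated for the relation $G\subseteq H$, so we must confirm that each part really is a subgraph of the join in that sense, on both vertices and edges. Both edge sets are also read inside the same ambient set $\Omega'\times\Omega'$ via the fixed vertex map, so comparing them edge by edge is legitimate even for coarse-grainings.
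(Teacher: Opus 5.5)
Your proof is correct and follows essentially the same route as the paper: monotonicity puts $E(\Phi(s_1))\cup E(\Phi(s_2))$ inside $E(\Phi(s_1\vee s_2))$, so only $\varphi(\mathcal{I})\setminus E(\Phi(s_1\vee s_2))$ survives in the difference that defines $E^-_\Phi$. One small point: the inclusion $s_i\subseteq s_1\vee s_2$ in Step~1 does not actually need $V_1\cap V_2=\varnothing$, but citing disjointness does no harm.
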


\begin{proof}
Each part is a subgraph of the join, so monotonicity gives
$E(\Phi(s_i))\subseteq E(\Phi(s_1\vee s_2))$ for $i=1,2$.  In the difference
that defines $E^-_\Phi$, the edges of $\Phi(s_1)$ and $\Phi(s_2)$ are therefore
removed, and only $\varphi(\mathcal{I})\setminus E(\Phi(s_1\vee s_2))$ remains.
\end{proof}

Lost edges are thus found among the images of the interaction edges
(thresholding away a weak interaction edge loses exactly that edge); emergent
edges are what Section~\ref{sec:main} bounds.  Both directions already appear
for simple deletion rules.

\begin{example}[Deleting by degree]\label{ex:degree}
Two observations delete edges by the total degrees of their endpoints, both
with $\varphi$ the identity: $\Phi^{\mathrm{hi}}$ deletes every edge whose
endpoints both have degree at least $2$, and $\Phi^{\mathrm{lo}}$
deletes every edge whose endpoints both have degree less than $2$.  Joining
raises degrees, and the two observations respond oppositely.  Let
$s_1=(a\to b)$ and $s_2=(c\to d)$.

\emph{Deleting between high-degree vertices loses edges.}  Take
$\mathcal{I}=\{(a,c),(b,d)\}$.  Every vertex of a part has degree $1$, so
$\Phi^{\mathrm{hi}}(s_i)=s_i$ and the observed union has the four edges
$(a,b)$, $(c,d)$, $(a,c)$, and $(b,d)$.  In the join, every vertex has degree
$2$ and every edge is deleted.  Hence $E^+_{\Phi^{\mathrm{hi}}}=\varnothing$, and
$E^-_{\Phi^{\mathrm{hi}}}$ consists of all four edges: the join erases edges that each part showed alone.

\emph{Deleting between low-degree vertices gains edges.}  Take
$\mathcal{I}=\{(b,c)\}$.  In each part both endpoints have degree $1$, so
$\Phi^{\mathrm{lo}}(s_i)$ has no edges and the observed union has the single
edge $(b,c)$.  In the join, $b$ and $c$ have degree $2$, so all three edges
survive, $E^+_{\Phi^{\mathrm{lo}}}=\{(a,b),(c,d)\}$, and
$E^-_{\Phi^{\mathrm{lo}}}=\varnothing$: the interaction
supplies the degree that each part lacked.

Both observations read all the edges at a vertex together: under
$\Phi^{\mathrm{lo}}$, two edges entering one vertex, which lie on no common
route, survive together although each alone is deleted.  Section~\ref{sec:main} turns to observations that read
one route at a time, for which every emergent edge can be traced to a route
across the interface.
\end{example}

\section{Emergence is bounded by boundary-crossing paths}\label{sec:main}

Joining two parts adds exactly the routes that cross the interface; for
observations that build each observed edge from a short route, this yields
an identity, a bound, and an attribution for emergence.

\subsection{Observations acting on paths}

We ask that an observation be a rule applied to one short route at a time:
whether a route yields an observed edge, and which edge, is decided by the
route alone.  Paths, cycles, and $\mathcal P_L(G)$ are as in
Section~\ref{sec:paths}.

\begin{definition}[Observations acting on paths]\label{def:path_obs}
An observation $\Phi$ \emph{acts on paths of length at most $L$} if each path $p$ of
length at most $L$ either \emph{produces} a single edge, written $\prodedge{p}$, or
produces nothing, in a way that depends on $p$ alone and not on the graph that
contains it, and if for every graph $G$ the edges of $\Phi(G)$ are exactly the edges
produced by the paths of $G$.  A path of length one that produces an edge produces
its own image, $\prodedge{(u,v)}=(\varphi(u),\varphi(v))$.
\end{definition}

For a set $\mathcal Q$ of paths, $\prodedge{\mathcal Q}$ is the set of edges
its members produce, so $E(\Phi(G))=\prodedge{\mathcal P_L(G)}$.  The
observations of Section~\ref{sec:framework} act on paths by these rules:
\begin{itemize}[leftmargin=1.2em,itemsep=0pt,parsep=0pt,topsep=2pt]
\item \emph{Identity} and \emph{thresholding} at $\tau$ ($L=1$): an edge (of
  weight at least $\tau$) produces itself.
\item \emph{Coarse-graining} ($L=1$): an edge produces the pair of blocks of its
  endpoints, if they differ.
\item \emph{Graph power} $G^{\le k}$ ($L=k$)~\cite{bang2009digraphs}: a path
  produces the edge from its first to its last vertex.
\item \emph{Thresholded power} ($L=k$): the same, for a path whose edges all
  weigh at least $\tau$.
\item \emph{Coarse-grained power} ($L=k$): a path produces the pair of blocks of
  its first and last vertices, if they differ.
\item \emph{Drop sinks} ($L=2$): a path of length two, or a two-cycle, produces
  its first edge.
\item \emph{Edges on a cycle} of length at most $L$: each rotation of a cycle
  produces its first edge.
\item \emph{Continuation filter} ($L=L_0$): a path of length $L_0$ produces its
  first edge (every longer path begins with one of length $L_0$ that has the
  same first edge).
\end{itemize}
Such observations are monotone, since the paths of a subgraph are paths of
the whole graph.  By Proposition~\ref{prop:lost}, their lost edges are
therefore the images of interaction edges that the observation destroys.

\subsection{Boundary-crossing paths and the main theorem}

\begin{definition}[Boundary-crossing paths]\label{def:crossing}
A path of $s_1\vee s_2$ is \emph{boundary-crossing} if it traverses an
interaction edge.  We write $\mathcal P_\partial$ for the boundary-crossing members
of $\mathcal P_L(s_1\vee s_2)$ (with cycles when $\Phi$ reads them),
$\mathcal S_\partial\subseteq\mathcal P_\partial$ for those that
produce an edge, and $N_\partial:=|\mathcal S_\partial|$ (in full,
$N^{L}_{\partial,\Phi}(s_1,s_2)$).
\end{definition}

\begin{lemma}[New paths cross the boundary]\label{lem:crossing}
If $V_1\cap V_2=\varnothing$, the paths of $s_1\vee s_2$ that are paths of neither
$s_1$ nor $s_2$ are exactly the boundary-crossing paths.
\end{lemma}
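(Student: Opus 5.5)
The plan is to prove the two inclusions directly from the definitions, using only that the edge set of the join is the disjoint union $E_1\cup E_2\cup\mathcal I$. The disjointness comes from $V_1\cap V_2=\varnothing$: every edge of $E_1$ lies in $V_1\times V_1$, every edge of $E_2$ in $V_2\times V_2$, and every interaction edge in $(V_1\times V_2)\cup(V_2\times V_1)$. Cycles of length at most $L$, when $\Phi$ reads them, are handled by the identical argument, since the reasoning below only looks at which edges a route traverses.

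First I show that a boundary-crossing path is a path of neither part. Suppose $p$ traverses an interaction edge $(u,v)\in\mathcal I$. Then $(u,v)$ has one endpoint in $V_1$ and the other in $V_2$, so it lies in neither $E_1$ nor $E_2$. A path of $s_i$ traverses only edges of $E_i$, so $p$ is not a path of $s_1$ or of $s_2$.

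Next I show the converse: a path of $s_1\vee s_2$ that traverses no interaction edge is a path of $s_1$ or of $s_2$. Let $p=(v_0,\dots,v_k)$ avoid $\mathcal I$. Each traversed edge $(v_{i-1},v_i)$ then lies in $E_1\cup E_2$, so both its endpoints lie in the same $V_j$. Consecutive edges share the vertex $v_i$, and the $V_j$ are disjoint, so induction along the path shows that every vertex lies in the $V_j$ containing $v_0$. Hence every traversed edge is in $E_j$, and $p$ is a path of $s_j$.

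The one point that needs care, and the main obstacle such as it is, is this induction: an edge of $E_1$ and an edge of $E_2$ cannot be adjacent in the path, because their shared vertex would lie in $V_1\cap V_2=\varnothing$. With that settled, the three classes (paths of $s_1$, paths of $s_2$, boundary-crossing paths) are pairwise disjoint and together exhaust $\mathcal P_L(s_1\vee s_2)$.
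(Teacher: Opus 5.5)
Your proof is correct and follows essentially the same route as the paper's: an interaction edge lies in neither $E_1$ nor $E_2$, so a path through it belongs to neither part; conversely, a path (or cycle) avoiding $\mathcal I$ cannot switch sides, since $V_1\cap V_2=\varnothing$, so it stays in one part. You make explicit the induction along the path that the paper compresses into ``stays on one side.''
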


\begin{proof}
An interaction edge belongs to neither part, so a path that traverses one
is a path of neither.  Conversely, every edge of $s_i$ has both endpoints
in $V_i$, and $V_1\cap V_2=\varnothing$, so a path (or cycle) that avoids
$\mathcal I$ stays on one side and is a path of $s_1$ or of $s_2$
(Figure~\ref{fig:crossing}(b)).
\end{proof}

Routes inside a part are already routes of that part on its own, so the
edges they produce appear in the observed union.  Anything new comes from a
route that no part contains, which by Lemma~\ref{lem:crossing} crosses the
interface: an emergent edge may join vertices far from the interface, but
its cause is always a route through it.

\begin{theorem}[Emergence is produced by boundary-crossing paths]\label{thm:main}
Let $s_1,s_2$ have disjoint vertex sets, let $\mathcal I$ be an interaction, and let
$\Phi$ act on paths of length at most $L$.  Write $U:=E\bigl(\Phi(s_1)\vee\Phi(s_2)\bigr)$.
\begin{enumerate}[label=(\roman*)]
\item $E^+_\Phi=\prodedge{\mathcal S_\partial}\setminus U$: the emergent edges are exactly
the edges that boundary-crossing paths produce and that the observed union does
not already contain.
\item $|E^+_\Phi|\le N_\partial$.  More precisely, choosing for each emergent edge one
boundary-crossing path that produces it defines an injective map
$w:E^+_\Phi\to\mathcal S_\partial$, the \emph{witness map}.
\item $|E^+_\Phi|=N_\partial$ if and only if distinct paths in $\mathcal S_\partial$
produce distinct edges and none of these edges lies in $U$.
\end{enumerate}
\end{theorem}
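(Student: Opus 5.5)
The plan is to derive all three parts from one decomposition of the produced edges, using Lemma~\ref{lem:crossing} and Definition~\ref{def:path_obs}. Since $\Phi$ acts on paths, $E(\Phi(s_1\vee s_2))=\prodedge{\mathcal P_L(s_1\vee s_2)}$. By Lemma~\ref{lem:crossing}, $\mathcal P_L(s_1\vee s_2)$ is the disjoint union of $\mathcal P_L(s_1)$, $\mathcal P_L(s_2)$ and $\mathcal P_\partial$; the same argument covers cycles, since a cycle avoiding $\mathcal I$ stays on one side. Because a path produces its edge independently of the ambient graph, I obtain
\[
E(\Phi(s_1\vee s_2))=E(\Phi(s_1))\cup E(\Phi(s_2))\cup\prodedge{\mathcal P_\partial},
\]
and $\prodedge{\mathcal P_\partial}=\prodedge{\mathcal S_\partial}$ by the definition of $\mathcal S_\partial$.

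For (i), I subtract $U$. Both $E(\Phi(s_1))$ and $E(\Phi(s_2))$ lie in $U$ by Definition~\ref{def:observed_union}, so
\[
E^+_\Phi=E(\Phi(s_1\vee s_2))\setminus U=\prodedge{\mathcal S_\partial}\setminus U.
\]
The extra summand $\varphi(\mathcal I)$ of $U$ only enlarges what is removed and does not disturb this equality. For (ii), each $e\in E^+_\Phi$ lies in $\prodedge{\mathcal S_\partial}$ by (i), so I can choose $w(e)\in\mathcal S_\partial$ with $\prodedge{w(e)}=e$. The map $w$ is injective because $\prodedge{\cdot}$ is a function on paths: a path produces a single edge, so $w(e)=w(e')$ forces $e=\prodedge{w(e)}=\prodedge{w(e')}=e'$. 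This gives $|E^+_\Phi|\le N_\partial$.

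For (iii), I consider the production map $\pi:\mathcal S_\partial\to\prodedge{\mathcal S_\partial}$, which is surjective. Then
\[
N_\partial=|\mathcal S_\partial|\ge|\prodedge{\mathcal S_\partial}|\ge|\prodedge{\mathcal S_\partial}\setminus U|=|E^+_\Phi|.
\]
The first inequality is an equality exactly when $\pi$ is injective, that is, when distinct paths produce distinct edges. The second is an equality exactly when $\prodedge{\mathcal S_\partial}\cap U=\varnothing$. So equality throughout holds if and only if both conditions hold, which is the claim. All sets are finite because $\Omega$ is finite, so these cardinality arguments are legitimate.

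The only delicate point is the decomposition step. There I have to check that "depends on $p$ alone" lets me identify the edges produced by a path of $s_i$, viewed inside the join, with those it produces in $s_i$. I also have to check that cycles, when $\Phi$ reads them, split in the same way as paths. I would settle both by appeal to Definition~\ref{def:path_obs} and the proof of Lemma~\ref{lem:crossing}. Everything after that step is elementary set counting.
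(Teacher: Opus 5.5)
Your proof is correct and follows essentially the paper's argument. The paper proves (i) elementwise: a path producing an emergent edge cannot be a path of a part, by Lemma~\ref{lem:crossing}, so it lies in $\mathcal S_\partial$. That is the same content as your global identity $E(\Phi(s_1\vee s_2))=E(\Phi(s_1))\cup E(\Phi(s_2))\cup\prodedge{\mathcal S_\partial}$, and parts (ii) and (iii) go exactly as you do, via single-edge production and the chain $|E^+_\Phi|\le|\prodedge{\mathcal S_\partial}|\le|\mathcal S_\partial|$ with its equality cases.
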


\begin{proof}
(i)~Let $e\in E^+_\Phi$.  Since $\Phi$ acts on paths, some path $p$ of
$s_1\vee s_2$ of length at most $L$ produces $e$.  If $p$ traversed no
interaction edge, it would be a path of some part $s_i$ by
Lemma~\ref{lem:crossing}, and since $p$ produces the same edge in every graph
that contains it, $e$ would lie in $E(\Phi(s_i))\subseteq U$.  So
$p\in\mathcal S_\partial$ and $e\in\prodedge{\mathcal S_\partial}\setminus U$.
Conversely, a path in $\mathcal S_\partial$ is a path of $s_1\vee s_2$, so
the edge it produces lies in $E(\Phi(s_1\vee s_2))$, and it is emergent if it
is not in $U$.
(ii)~By (i), each $e\in E^+_\Phi$ is produced by some path
$w(e)\in\mathcal S_\partial$.  A path produces at most one edge, so
$\prodedge{w(e)}=e$; hence $w$ is injective and $|E^+_\Phi|\le N_\partial$.
(iii)~By (i), $|E^+_\Phi|=|\prodedge{\mathcal S_\partial}\setminus
U|\le|\prodedge{\mathcal S_\partial}|\le|\mathcal S_\partial|$, and the
conditions in (iii) are exactly those for equality in the two steps.
\end{proof}

By Theorem~\ref{thm:main}(ii), distinct emergent edges need distinct
channels (the boundary-crossing paths along which one part reaches into the
other), and the observation decides which channels are open.  A single interaction edge
can carry many channels (Example~\ref{ex:bowtie}), much as the few edges
between communities carry many shortest paths~\cite{girvan2002community},
so the bound measures the routes that the interface opens, not its size.

The witness map attributes by routes: it names the vertices and edges on
each witness, in $s_1$, $s_2$, and $\mathcal I$, as the components responsible
for its emergent edge.  When several paths produce the same emergent edge,
splitting the edge equally among them, as betweenness splits each pair of
vertices among its shortest paths~\cite{freeman1977set}, gives an attribution
that involves no choice.

\begin{definition}[Shares and credit]\label{def:shares}
In the setting of Theorem~\ref{thm:main}, for $e\in E^+_\Phi$ let
$\mathcal S_\partial(e):=\{p\in\mathcal S_\partial:\prodedge{p}=e\}$, which is
nonempty by Theorem~\ref{thm:main}(i).  The \emph{share} of a path
$p\in\mathcal S_\partial(e)$ is $\sigma(p):=1/|\mathcal S_\partial(e)|$, and a
path of $\mathcal S_\partial$ that produces no emergent edge has share $0$.
The \emph{credit} $\mathrm{cr}(c)$ of a component $c$ of $s_1\vee s_2$ (a
vertex, an edge of a part, an interaction edge, or a set of interaction edges,
such as the interface between two parts) is the total share of the paths of
$\mathcal S_\partial$ that pass through it: that visit the vertex, or traverse
the edge or one of the edges of the set.
\end{definition}

The shares sum to $|E^+_\Phi|$, which is also the credit of $\mathcal I$,
since every channel traverses an interaction edge.  The share $\sigma(p)$ is
the probability that a witness map drawn uniformly at random uses $p$, so the
credit of a component is the expected number of witnesses that pass through
it.  Shares depend only on which boundary-crossing paths produce which
emergent edges, so a relabeling of the vertices, together with the partition
or weights that the observation uses, carries shares and credits along, and
symmetric components receive equal credit.  Shares also tell which emergent
edges survive the removal of interaction edges
(Proposition~\ref{prop:intervene}).

By Theorem~\ref{thm:main}(i),
the slack is exactly $N_\partial-|E^+_\Phi|=\bigl(|\mathcal
S_\partial|-|\prodedge{\mathcal S_\partial}|\bigr)+|\prodedge{\mathcal
S_\partial}\cap U|$; it comes from boundary-crossing paths that produce the
same edge and from produced edges that the observed union already contains,
such as images of interaction edges.

\begin{corollary}[Sharpened bound]\label{cor:sharp}
In the setting of Theorem~\ref{thm:main}, let $\mathcal S^{\ge2}_\partial$
be the paths in $\mathcal S_\partial$ of length at least two.  Then
$E^+_\Phi=\prodedge{\mathcal S^{\ge2}_\partial}\setminus U$ and
$|E^+_\Phi|\le|\mathcal S^{\ge2}_\partial|$.
\end{corollary}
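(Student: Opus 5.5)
The plan is to reduce the corollary to Theorem~\ref{thm:main}(i) by showing that boundary-crossing paths of length one never contribute an emergent edge. Then removing them from $\mathcal S_\partial$ changes neither the set $\prodedge{\mathcal S_\partial}\setminus U$ nor the witness argument. Concretely, we split
\[
\mathcal S_\partial=\mathcal S^{1}_\partial\cup\mathcal S^{\ge2}_\partial,
\]
where $\mathcal S^{1}_\partial$ consists of the producing boundary-crossing paths of length one. We then show $\prodedge{\mathcal S^{1}_\partial}\subseteq U$.

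The key step is identifying $\mathcal S^1_\partial$. A path of length one is a single edge $(u,v)$. It traverses an interaction edge only if it \emph{is} that interaction edge, so $(u,v)\in\mathcal I$. By Definition~\ref{def:path_obs}, if such a path produces an edge at all, it produces its own image $(\varphi(u),\varphi(v))$. Since the produced object is an edge, $\varphi(u)\ne\varphi(v)$, and hence this image lies in $\varphi(\mathcal I)$. By Definition~\ref{def:observed_union}, $\varphi(\mathcal I)\subseteq U$, so $\prodedge{\mathcal S^1_\partial}\subseteq\varphi(\mathcal I)\subseteq U$. I expect this to be the only point needing care. It is exactly where the convention that the interaction is formed after observation, together with the length-one clause of Definition~\ref{def:path_obs}, enters. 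One subtlety to check is cycles: when $\Phi$ reads cycles, cycles have length at least two by definition, so they fall in $\mathcal S^{\ge2}_\partial$ and cause no trouble.

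With this in hand, Theorem~\ref{thm:main}(i) gives
\[
E^+_\Phi=\prodedge{\mathcal S_\partial}\setminus U=\bigl(\prodedge{\mathcal S^{1}_\partial}\cup\prodedge{\mathcal S^{\ge2}_\partial}\bigr)\setminus U=\prodedge{\mathcal S^{\ge2}_\partial}\setminus U.
\]
The bound then follows as in Theorem~\ref{thm:main}(ii). For each emergent edge, choose a producing path in $\mathcal S^{\ge2}_\partial$. Because each path produces at most one edge, this choice is injective, and so $|E^+_\Phi|\le|\mathcal S^{\ge2}_\partial|$.
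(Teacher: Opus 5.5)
Your proposal is correct and follows the paper's proof. You show that a length-one boundary-crossing path is an interaction edge whose produced edge, since observed graphs are loopless, is its image in $\varphi(\mathcal I)\subseteq U$; such paths therefore drop out of $\prodedge{\mathcal S_\partial}\setminus U$, and the witness argument of Theorem~\ref{thm:main}(ii) then applies to the remaining paths, as in the paper. Your check that cycles have length at least two, and so lie in $\mathcal S^{\ge2}_\partial$, is a harmless extra detail that the paper leaves implicit.
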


\begin{proof}
A boundary-crossing path of length one is an interaction edge $(u,v)$, and
an edge it produces is $(\varphi(u),\varphi(v))\in\varphi(\mathcal
I)\subseteq U$, since observed graphs are loopless.  Such paths add nothing
to $\prodedge{\mathcal S_\partial}\setminus U$, and the argument of
Theorem~\ref{thm:main}(ii) applies to the remaining paths.
\end{proof}

\begin{corollary}[The whole discrepancy]\label{cor:discrepancy}
In the setting of Theorem~\ref{thm:main},
$\|\Delta_\Phi\|=|E^+_\Phi|+|E^-_\Phi|\le|\mathcal P_\partial|$, the number
of all boundary-crossing paths, producing or not.
\end{corollary}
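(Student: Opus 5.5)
We bound the two terms separately and then find room for both inside $\mathcal P_\partial$. For the gained edges we use Theorem~\ref{thm:main}(ii). The witness map $w:E^+_\Phi\to\mathcal S_\partial$ is injective, so $|E^+_\Phi|=|w(E^+_\Phi)|$. By Corollary~\ref{cor:sharp} we may choose every witness of length at least two: each $e\in E^+_\Phi$ is produced by a path in $\mathcal S^{\ge2}_\partial$, and we take $w(e)$ there. Hence the gained edges inject into the boundary-crossing paths of length at least two.

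For the lost edges we use Proposition~\ref{prop:lost}. Observations acting on paths are monotone, so $E^-_\Phi=\varphi(\mathcal I)\setminus E(\Phi(s_1\vee s_2))$. For each lost edge $f$ we choose an interaction edge $(u,v)\in\mathcal I$ with $(\varphi(u),\varphi(v))=f$. Distinct lost edges have distinct preimages, so this is an injection $E^-_\Phi\to\mathcal I$. Each interaction edge $(u,v)$ is itself a path of length one of $s_1\vee s_2$. It traverses an interaction edge, so it lies in $\mathcal P_\partial$, and it does so whether or not it produces anything. If $L\ge1$ this gives an injection of $E^-_\Phi$ into the length-one members of $\mathcal P_\partial$. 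If $L=0$ we argue directly: $\mathcal P_\partial$ is empty, no edge is produced, so $E^+_\Phi=\varnothing$, and the case can be excluded or handled as degenerate.

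The two images are disjoint: the witnesses of gained edges have length at least two, and the chosen preimages of lost edges have length one. Combining the injections gives an injection $E^+_\Phi\sqcup E^-_\Phi\to\mathcal P_\partial$. Therefore $\|\Delta_\Phi\|=|E^+_\Phi|+|E^-_\Phi|\le|\mathcal P_\partial|$.

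The main obstacle is keeping the two injections from colliding. A naive count using Theorem~\ref{thm:main}(ii) with all of $\mathcal S_\partial$ could use a length-one path as a witness, and that same path might also be needed for a lost edge. Corollary~\ref{cor:sharp} removes this conflict, because it confines witnesses to length at least two. A smaller point is the boundary case where $\varphi$ collapses an interaction edge inside a block. Such an edge is not in $\varphi(\mathcal I)$ by definition, so it never needs a preimage, and the injection is unaffected.
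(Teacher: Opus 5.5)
Your proof is correct. It has the same skeleton as the paper's: the witness map handles $E^+_\Phi$, and choosing an interaction-edge preimage of each lost edge (via Proposition~\ref{prop:lost}) handles $E^-_\Phi$. The two proofs differ only in how they keep the two images apart.

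\begin{itemize}
\item \textbf{Your route.} You separate the images by \emph{length}, using Corollary~\ref{cor:sharp} to place every witness in $\mathcal S^{\ge2}_\partial$ while the lost-edge preimages have length one.
\item \textbf{The paper's route.} It separates them by \emph{production}. If a chosen interaction edge $(u,v)$ produced an edge, then by Definition~\ref{def:path_obs} it would produce its own image $(\varphi(u),\varphi(v))$. That image would then lie in $E(\Phi(s_1\vee s_2))$, so it could not be lost. Hence the preimages lie in $\mathcal P_\partial\setminus\mathcal S_\partial$, and the unmodified witness map of Theorem~\ref{thm:main}(ii), with values in $\mathcal S_\partial$, suffices without Corollary~\ref{cor:sharp}.
\end{itemize}

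The two routes yield different refinements. The paper's gives the split $|E^+_\Phi|\le N_\partial$ and $|E^-_\Phi|\le|\mathcal P_\partial|-N_\partial$, which is what ``producing or not'' in the statement refers to. Yours gives $|E^+_\Phi|\le|\mathcal S^{\ge2}_\partial|$ and $|E^-_\Phi|\le|\mathcal I|$. Both depend on the clause of Definition~\ref{def:path_obs} that a length-one path produces its own image, since Corollary~\ref{cor:sharp} uses that clause as well. Combining the two observations gives the sharper bound $\|\Delta_\Phi\|\le|\mathcal S^{\ge2}_\partial|+|\mathcal I\setminus\mathcal S_\partial|$, where interaction edges count as paths of length one.

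One correction concerns $L=0$: that case cannot be ``handled,'' it must be excluded. With $L=0$ there are no paths at all, so every $\Phi(G)$ is edgeless. Then $E^-_\Phi=\varphi(\mathcal I)$, which is in general nonempty, while $\mathcal P_\partial=\varnothing$, so the inequality fails. The statement therefore needs the standing assumption $L\ge1$. The paper's proof relies on it tacitly too, when it places interaction edges in $\mathcal P_\partial$.
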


\begin{proof}
The witness map sends $E^+_\Phi$ injectively into $\mathcal S_\partial$.
By Proposition~\ref{prop:lost}, each lost edge is the image of an
interaction edge that produces nothing (otherwise its image would lie in
$E(\Phi(s_1\vee s_2))$); choosing one such interaction edge per lost edge maps
$E^-_\Phi$ injectively into $\mathcal P_\partial\setminus\mathcal S_\partial$.
\end{proof}

\begin{corollary}[Locality and cost]\label{cor:local}
In the setting of Theorem~\ref{thm:main}:
\begin{enumerate}[label=(\roman*)]
\item every emergent edge is produced by a path of length at most $L$
  through an interaction edge; its vertices before that edge reach
  $\partial$, and those after it are reached from $\partial$, within $L-1$
  steps;
\item if $\mathcal I=\varnothing$, then $E^+_\Phi=\varnothing$;
\item $\mathcal S_\partial$, and with it $N_\partial$ and the set
  $\prodedge{\mathcal S_\partial}$ that contains every emergent edge, is
  found by a backward search from the tail and a forward search from the
  head of each interaction edge, of combined depth $L-1$, testing each path
  found for production.  With $d\ge2$ the maximum in- or out-degree, the
  search examines $O(L\,|\mathcal I|\,d^{\,L-1})$ paths, a number independent
  of $|V_1|$ and $|V_2|$, without applying $\Phi$ to $s_1$, $s_2$, or their
  join (the witness map is built as in
  Proposition~\ref{prop:complexity}(ii)).
\end{enumerate}
\end{corollary}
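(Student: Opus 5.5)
The plan is to derive all three parts from Theorem~\ref{thm:main}(i) and Lemma~\ref{lem:crossing}, then add a counting argument for the search in (iii).

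For (i), take $e\in E^+_\Phi$. By Theorem~\ref{thm:main}(i) it is produced by some $p=(v_0,\dots,v_k)\in\mathcal S_\partial$ with $k\le L$, and $p$ traverses some interaction edge $(v_{j-1},v_j)$. The vertices $v_0,\dots,v_{j-1}$ reach $v_{j-1}\in\partial$ along the subpath $(v_i,\dots,v_{j-1})$. That subpath has length $j-1-i\le k-1\le L-1$. Symmetrically, $v_j\in\partial$ reaches $v_j,\dots,v_k$ within $k-j\le L-1$ steps. The same argument works for cycles, read from their starting vertex.

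Part (ii) is immediate. If $\mathcal I=\varnothing$, no path traverses an interaction edge, so $\mathcal S_\partial=\varnothing$, and Theorem~\ref{thm:main}(i) gives $E^+_\Phi=\varnothing$.

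For (iii), I would organize the enumeration around the interaction edge each path traverses.
\begin{itemize}
\item Every $p\in\mathcal P_\partial$ decomposes, for each interaction edge $(u,v)$ it traverses, as a backward segment ending at $u$, then $(u,v)$, then a forward segment starting at $v$.
\item The two segments have combined length at most $L-1$.
\item So for each $(u,v)\in\mathcal I$, run a depth-bounded backward search (along in-edges) from $u$ and a forward search (along out-edges) from $v$ in $s_1\vee s_2$.
\item For each split $a+b\le L-1$, concatenate a backward path of length $a$ with a forward path of length $b$ and keep the result if its vertices are distinct (or it closes into a cycle when $\Phi$ reads cycles).
\item Test each kept path for production, which by Definition~\ref{def:path_obs} depends on the path alone.
\end{itemize}

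A path traversing several interaction edges is generated several times. I would deduplicate by assigning each path to its first interaction edge, for example by rejecting backward segments that use an interaction edge. The upper bound does not need this, since overcounting only enlarges the count.

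The cost count then goes as follows. There are at most $d^a$ backward paths of length $a$ and at most $d^b$ forward paths of length $b$. For fixed $m=a+b$ the number of concatenations is therefore at most $(m+1)d^m$. Summing over $m\le L-1$ with $d\ge2$ gives $\sum_{m\le L-1}(m+1)d^m\le L\sum_m d^m\le 2L\,d^{L-1}$, so $O(L\,d^{L-1})$ per interaction edge and $O(L|\mathcal I|d^{L-1})$ in total, independent of $|V_1|$ and $|V_2|$.

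The main obstacle is justifying that the set $\prodedge{\mathcal S_\partial}$ found this way requires no application of $\Phi$ to the parts or the join. This holds because production is a property of the individual path. Deciding which produced edges lie in $U$, and hence building the witness map, is deferred to Proposition~\ref{prop:complexity}(ii), as the statement indicates.
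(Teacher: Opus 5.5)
Your proposal is correct and follows essentially the paper's own proof. Parts (i) and (ii) come straight from Theorem~\ref{thm:main}(i): the interaction edge on the producing path has both endpoints in $\partial$, and $\mathcal S_\partial=\varnothing$ when $\mathcal I=\varnothing$. Part (iii) splits each boundary-crossing path at an interaction edge into pieces with $i+j\le L-1$ and bounds their number by $\sum_{t=0}^{L-1}(t+1)d^{\,t}\le 2Ld^{\,L-1}$ per interaction edge, as in Proposition~\ref{prop:complexity}(ii), and your first-interaction-edge deduplication and direct listing of cycle rotations as closing concatenations are equivalent to the paper's description of the search.
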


\begin{proof}
Part (i) follows from the proof of Theorem~\ref{thm:main}(i), since the
interaction edge on the producing path has both endpoints in $\partial$;
(ii) holds because $\mathcal S_\partial$ is then empty.  For (iii), a path
of length at most $L$ through an interaction edge $(u,v)$ is a path of length
$i$ into $u$, the edge, and a path of length $j$ out of $v$, with
$i+j\le L-1$; for $d\ge2$ there are at most
$\sum_{t=0}^{L-1}(t+1)d^{\,t}\le 2Ld^{\,L-1}$ of them
(Proposition~\ref{prop:complexity}(ii)).
\end{proof}

When the interface is dense or its vertices have high degree, so that
$|\mathcal I|\,d^{\,L-1}$ exceeds the number of edges of $s_1\vee s_2$, the
walks of length at most $L$ that traverse an interaction edge give a cheaper,
looser bound, whose cost grows only linearly in $L$
(Appendix~\ref{app:computation}).

\begin{remark}[Many parts]\label{rem:multipart}
For parts $s_1,\dots,s_k$ with pairwise disjoint vertex sets, an
interaction $\mathcal I\subseteq\bigcup_{i\ne j}V_i\times V_j$, and the
observed union formed by the graphs $\Phi(s_i)$ together with
$\varphi(\mathcal I)$, a path that avoids $\mathcal I$ stays inside one part.
Hence Lemma~\ref{lem:crossing},
Theorem~\ref{thm:main} and its corollaries, the disjoint case of
Theorem~\ref{thm:scalar} (with
$\Delta_\Phi:=\Phi(s_1\vee\dots\vee s_k)-\sum_i\Phi(s_i)$), and
Propositions~\ref{prop:lost}, \ref{prop:cut}, and~\ref{prop:intervene} hold
verbatim, with the same proofs, and Definition~\ref{def:shares} applies
unchanged (Figure~\ref{fig:overview}; checked in Section~\ref{sec:validity}).
\end{remark}

\subsection{A single bridge: emergence multiplies reach}\label{sec:bridge}

For graph powers and an interaction that runs one way, the channels and the
emergent edges have closed forms in the reach of the parts.  Write
$\mathrm{dist}_s(x,y)$ for the length of a shortest path from $x$ to $y$ in
$s$ ($0$ if $x=y$, and $\infty$ if there is none).  For $a\in V_1$ and
$b\in V_2$, let $\pi^{\mathrm{in}}_i(a)$ be the number of paths of length $i$
in $s_1$ that end at $a$ and $\pi^{\mathrm{out}}_j(b)$ the number of paths of
length $j$ in $s_2$ that start at $b$, where a path of length $0$ is a single
vertex, so that $\pi^{\mathrm{in}}_0=\pi^{\mathrm{out}}_0=1$; and let
$\rho^{\mathrm{in}}_i(a)$ be the number of vertices $x$ of $s_1$ with
$\mathrm{dist}_{s_1}(x,a)=i$ and $\rho^{\mathrm{out}}_j(b)$ the number of
vertices $y$ of $s_2$ with $\mathrm{dist}_{s_2}(b,y)=j$.

\begin{proposition}[Bridge law]\label{prop:bridge}
Let $\Phi=G^{\le L}$ and $\mathcal I\subseteq V_1\times V_2$.
\begin{enumerate}[label=(\roman*)]
\item The boundary-crossing paths are exactly the paths formed by a path of
length $i$ in $s_1$ ending at the tail $a$ of an interaction edge $(a,b)$, that
edge, and a path of length $j$ in $s_2$ starting at $b$, with $i+j\le L-1$.
Hence
\[
N_\partial=\sum_{(a,b)\in\mathcal I}\ \sum_{i+j\le L-1}\pi^{\mathrm{in}}_i(a)\,\pi^{\mathrm{out}}_j(b).
\]
\item $E^-_\Phi=\varnothing$, and the emergent edges are the cross pairs within
combined distance $L-1$ of an interaction edge:
\[
E^+_\Phi=\Bigl\{(x,y)\in V_1\times V_2:\ \min_{(a,b)\in\mathcal I}\bigl[\mathrm{dist}_{s_1}(x,a)+\mathrm{dist}_{s_2}(b,y)\bigr]\le L-1\Bigr\}\setminus\mathcal I .
\]
For a single interaction edge $(a,b)$,
$|E^+_\Phi|=\sum_{i+j\le L-1}\rho^{\mathrm{in}}_i(a)\,\rho^{\mathrm{out}}_j(b)-1$.
\item $|E^+_\Phi|\le N_\partial-|\mathcal I|$, the sharpened bound of
Corollary~\ref{cor:sharp}, with equality if and only if every pair in
$V_1\times V_2$ is joined in $s_1\vee s_2$ by at most one path of length at
most $L$.  For a single interaction edge $(a,b)$, equality holds if and only
if every vertex of $s_1$ has at most one path of length at most $L-1$ to $a$,
and $b$ has at most one path of length at most $L-1$ to every vertex of $s_2$.
\end{enumerate}
\end{proposition}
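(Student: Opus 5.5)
The plan is to first pin down the shape of boundary-crossing paths when $\mathcal I\subseteq V_1\times V_2$, and then read off (i)–(iii) from Theorem~\ref{thm:main} and Corollary~\ref{cor:sharp}.

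\textbf{Structure of crossing paths, and (i).} Since every interaction edge runs from $V_1$ to $V_2$ and no edge of $s_1\vee s_2$ runs from $V_2$ to $V_1$, a path that has entered $V_2$ can never return to $V_1$. So a path traverses at most one interaction edge. A boundary-crossing path therefore has the form $x=v_0,\dots,v_i=a$ in $s_1$, then $(a,b)\in\mathcal I$, then $b=w_0,\dots,w_j=y$ in $s_2$. Its total length is $i+1+j\le L$, that is, $i+j\le L-1$. Conversely, every such concatenation is a simple path: the two pieces lie in disjoint vertex sets and each is simple. No cycles arise, since a cycle would have to return to $V_1$. Under $G^{\le L}$ every path of length at least one produces an edge $(x,y)$ with $x\ne y$, so $\mathcal S_\partial=\mathcal P_\partial$. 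The decomposition is unique given the path, so counting over the triple (interaction edge, in-path, out-path) gives the displayed formula for $N_\partial$.

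\textbf{(ii).} The observed union is $s_1^{\le L}\cup s_2^{\le L}\cup\mathcal I$. Every interaction edge is a path of length one in the join, so $\mathcal I\subseteq E(\Phi(s_1\vee s_2))$, and Proposition~\ref{prop:lost} gives $E^-_\Phi=\varnothing$. By Theorem~\ref{thm:main}(i), $E^+_\Phi=\prodedge{\mathcal S_\partial}\setminus U$. Each produced edge $(x,y)$ lies in $V_1\times V_2$, so it is absent from $s_1^{\le L}$ and $s_2^{\le L}$; hence $\prodedge{\mathcal S_\partial}\setminus U=\prodedge{\mathcal S_\partial}\setminus\mathcal I$. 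A pair $(x,y)$ is produced exactly when some $(a,b)$ admits paths $x\to a$ in $s_1$ and $b\to y$ in $s_2$ of total length at most $L-1$. Taking shortest paths, this holds iff $\mathrm{dist}_{s_1}(x,a)+\mathrm{dist}_{s_2}(b,y)\le L-1$, with length $0$ allowed when $x=a$ or $y=b$. For a single edge, the map $(x,y)\mapsto(\mathrm{dist}(x,a),\mathrm{dist}(b,y))$ partitions the produced pairs by distance levels, giving $\sum_{i+j\le L-1}\rho^{\mathrm{in}}_i\rho^{\mathrm{out}}_j$ pairs. The term $i=j=0$ is the pair $(a,b)$ itself, which lies in $\mathcal I$, and subtracting it gives the $-1$.

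\textbf{(iii).} The inequality is Corollary~\ref{cor:sharp}, because the length-one members of $\mathcal S_\partial$ are exactly the $|\mathcal I|$ interaction edges. For the equality case, we use $E^+_\Phi=\prodedge{\mathcal S^{\ge2}_\partial}\setminus\mathcal I$.

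For ``if'': if each cross pair is joined by at most one path of length at most $L$, then distinct paths in $\mathcal S^{\ge2}_\partial$ produce distinct pairs. None of these pairs is in $\mathcal I$, since an interaction edge $(x,y)$ already supplies one path, so a second path would violate uniqueness.

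For ``only if'': given equality, suppose a cross pair has two paths of length at most $L$. Both are crossing paths. If both have length at least two, injectivity fails. If one is the interaction edge itself, then the other produces an edge in $\mathcal I$, which is excluded. Pairs inside $V_i$, or from $V_2$ to $V_1$, need separate care. Pairs in $V_2\times V_1$ have no paths at all. Pairs inside one part, however, are not controlled by the count, so the condition ``every pair in $V_1\times V_2$'' must be read as cross pairs only, which is how the statement is worded.

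For a single edge, the cross paths from $x$ to $y$ are in bijection with pairs (path $x\to a$, path $b\to y$) of total length at most $L-1$. Uniqueness for all $(x,y)$ is then equivalent to each vertex having at most one path of length at most $L-1$ to $a$, and $b$ having at most one such path to each vertex. Taking $y=b$ (length $0$) isolates the $s_1$ side, and taking $x=a$ isolates the $s_2$ side. One subtlety remains: if the two paths $x\to a$ have lengths $i$ and $i'$ with $i+j$ and $i'+j$ both at most $L-1$, the paths must exist with $j=0$, so taking $y=b$ is exactly the case needed.

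The main obstacle is this last equivalence together with the edge case $(x,y)\in\mathcal I$ in the ``only if'' direction. I would handle both by always testing at the extreme choices $y=b$ or $x=a$.
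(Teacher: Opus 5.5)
Your proof is correct and follows the paper's argument. The one-way interaction makes each crossing path split uniquely at a single interaction edge, (ii) reduces to the produced cross pairs minus $\mathcal I$, and the single-edge equality case is settled by testing at $y=b$ and $x=a$. The differences are cosmetic. You reach (ii) through Theorem~\ref{thm:main}(i), and you phrase the equality case of (iii) as equality in $|\prodedge{\mathcal S^{\ge2}_\partial}\setminus\mathcal I|\le|\prodedge{\mathcal S^{\ge2}_\partial}|\le|\mathcal S^{\ge2}_\partial|$. The paper instead counts the paths $c(x,y)$ from $x$ to $y$ directly, and it states the sufficiency bound $c(x,y)\le\sum_{i\le L-1}\pi_i(x,a)\cdot\sum_{j\le L-1}\pi_j(b,y)$, which you leave implicit.
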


\begin{proof}
(i)~No edge runs from $V_2$ to $V_1$, so a path that enters $V_2$ stays there,
and a boundary-crossing path traverses exactly one interaction edge $(a,b)$.
Split at that edge as in the proof of Corollary~\ref{cor:local}(iii), it is a
path of $s_1$ ending at $a$, the edge, and a path of $s_2$ starting at $b$.
Conversely, since $V_1\cap V_2=\varnothing$, joining such paths by $(a,b)$
repeats no vertex.  Under $G^{\le L}$ every path produces the edge between its
ends, which differ, so $N_\partial$ is the number of these paths.
(ii)~A path of $s_1\vee s_2$ between two vertices of the same part stays in
that part, and no path runs from $V_2$ to $V_1$, so $\Phi(s_1\vee s_2)$ and the
observed union agree outside $V_1\times V_2$.  The observed union contains no
cross pair other than the edges of $\mathcal I$, each an edge of
$\Phi(s_1\vee s_2)$, so $E^-_\Phi=\varnothing$ by Proposition~\ref{prop:lost}.
Hence $E^+_\Phi$ consists of the cross pairs of $\Phi(s_1\vee s_2)$ other than
those of $\mathcal I$.  By (i), a shortest path from $x\in V_1$ to $y\in V_2$
has length
$1+\min_{(a,b)\in\mathcal I}[\mathrm{dist}_{s_1}(x,a)+\mathrm{dist}_{s_2}(b,y)]$,
which gives $E^+_\Phi$.  For one edge $(a,b)$, the pairs with
$\mathrm{dist}_{s_1}(x,a)=i$ and $\mathrm{dist}_{s_2}(b,y)=j$ number
$\rho^{\mathrm{in}}_i(a)\rho^{\mathrm{out}}_j(b)$, and $(a,b)$ is the pair with
$i=j=0$.
(iii)~The paths of length one in $\mathcal S_\partial$ are the $|\mathcal I|$
interaction edges, so $N_\partial-|\mathcal I|=|\mathcal S^{\ge2}_\partial|$.
Let $c(x,y)$ be the number of paths of length at most $L$ from $x\in V_1$ to
$y\in V_2$.  These are the boundary-crossing paths, so
$N_\partial=\sum_{x,y}c(x,y)$, and by (ii) $c(x,y)\ge1$ exactly on the disjoint
union $E^+_\Phi\cup\mathcal I$.  Hence $N_\partial\ge|E^+_\Phi|+|\mathcal I|$,
with equality exactly when $c\le1$.  For one edge $(a,b)$,
$c(x,y)=\sum_{i+j\le L-1}\pi_i(x,a)\,\pi_j(b,y)$, where $\pi_i(x,a)$ counts
the paths of length $i$ from $x$ to $a$ in $s_1$ and $\pi_j(b,y)$ those from $b$
to $y$ in $s_2$.  Taking $y=b$, or $x=a$, shows that the stated condition is
necessary, and it is sufficient because
$c(x,y)\le\sum_{i\le L-1}\pi_i(x,a)\cdot\sum_{j\le L-1}\pi_j(b,y)$.
\end{proof}

One interaction edge couples every route into its tail with every route out
of its head, of combined length at most $L-1$, so the channels, and the
emergent edges they produce, are sums of products of the reach of the two
parts: part~(i) counts what the search of Corollary~\ref{cor:local}(iii)
finds, and part~(ii) makes its locality exact.  The channels count routes and
the emergent edges count pairs of endpoints; beyond the interaction itself,
the two coincide exactly when routes are unique, as in trees.

\begin{example}[Bow-tie]\label{ex:bowtie}
Let $s_1$ be an in-fan $u_1,\dots,u_k\to a$ and $s_2$ an out-fan
$v\to w_1,\dots,w_m$, joined by the single interaction edge $(a,v)$, and let
$\Phi=G^{\le3}$ (Figure~\ref{fig:bowtie}).  Here
$(\pi^{\mathrm{in}}_i(a))_{i=0,1,2}=(\rho^{\mathrm{in}}_i(a))_{i=0,1,2}=(1,k,0)$
and $(\pi^{\mathrm{out}}_j(v))_{j=0,1,2}=(\rho^{\mathrm{out}}_j(v))_{j=0,1,2}=(1,m,0)$,
so Proposition~\ref{prop:bridge} gives $N_\partial=1+k+m+km$, from the paths
$(a,v)$, $(u_i,a,v)$, $(a,v,w_j)$, and $(u_i,a,v,w_j)$.  Every route into $a$
and out of $v$ is unique, so equality holds in
Proposition~\ref{prop:bridge}(iii): $|E^+_\Phi|=k+m+km$, the edges $(u_i,v)$,
$(a,w_j)$, and $(u_i,w_j)$, and the bound exceeds the emergence only by the
interaction edge.  The term $km$, which pairs each vertex that reaches the
interface with each vertex that the interface reaches, is the multiplicative
signature of emergence: a single bridge between internally connected parts
can yield emergence that grows as the product of their reach.
\end{example}

\begin{figure}[htbp]
\centering
\def\btRow#1#2{%
  \foreach \vl/\vx/\vp [count=\vi] in {#2} {%
    \node[bc\vp] (r\vi) at (\vx,#1) {$\vl$};
    \ifnum\vi>1
      \pgfmathtruncatemacro\vj{\vi-1}%
      \ifnum\vp=\btLast\relax \draw[bce\vp] (r\vj) -- (r\vi);
      \else \draw[bcei] (r\vj) -- (r\vi); \fi
    \fi
    \xdef\btLast{\vp}%
  }}
\begin{tikzpicture}[x=1mm,y=1mm,
  bc1/.style={circle,draw=s1col,fill=s1col!16,line width=.6pt,minimum size=4.3mm,inner sep=0pt,font=\scriptsize},
  bc2/.style={circle,draw=s2col,fill=s2col!18,line width=.6pt,minimum size=4.3mm,inner sep=0pt,font=\scriptsize},
  bce1/.style={-{Stealth[length=1.5mm]},line width=.7pt,draw=s1col!85!black},
  bce2/.style={-{Stealth[length=1.5mm]},line width=.7pt,draw=s2col!85!black},
  bcei/.style={-{Stealth[length=1.5mm]},line width=.9pt,draw=icol,densely dashed},
  hdr/.style={font=\scriptsize,anchor=base,text=black!75},
  chip/.style={rounded corners=2pt,inner sep=1.6pt,font=\scriptsize,minimum height=4.3mm},
]
\node[panel,anchor=west] at (-7,17) {(a)};
\node[font=\small,anchor=west] at (-0.8,17) {parts and interaction};
\node[v1]    (u1) at (0,7)   {$u_1$};
\node[v1]    (u2) at (0,-7)  {$u_2$};
\node[v1,vb] (a)  at (12,0)  {$a$};
\node[v2,vb] (v)  at (25,0)  {$v$};
\node[v2]    (w1) at (37,7)  {$w_1$};
\node[v2]    (w2) at (37,-7) {$w_2$};
\begin{scope}[on background layer]
  \node[blob1,fit=(u1)(u2)(a)] (S1) {};
  \node[blob2,fit=(v)(w1)(w2)] (S2) {};
\end{scope}
\draw[e,s1col] (u1)--(a); \draw[e,s1col] (u2)--(a);
\draw[e,s2col] (v)--(w1); \draw[e,s2col] (v)--(w2);
\draw[ei] (a)--(v);
\node[lbl,icol,anchor=south] at (18.5,0.6) {$\mathcal I$};
\node[lbl,s1col,anchor=north,align=center] at (S1.south) {$s_1$: in-fan\\[-1pt]$k=2$};
\node[lbl,s2col,anchor=north,align=center] at (S2.south) {$s_2$: out-fan\\[-1pt]$m=2$};
\begin{scope}[shift={(50,0)}]
\node[panel,anchor=west] at (-4,17) {(b)};
\node[font=\small,anchor=west] at (2.2,17) {boundary-crossing paths under $\Phi=G^{\le 3}$};
\def\xu{0}\def\xa{9}\def\xv{20}\def\xw{29}
\draw[icol!55,line width=.6pt,dash pattern=on 1.2pt off 1.2pt] (14.5,11) -- (14.5,-14.2);
\node[hdr] at (14.5,11.4) {interface};
\node[hdr] at (39,11.4) {count};
\node[hdr] at (56,11.4) {produces};
\btRow{7}{a/\xa/1,v/\xv/2}
\btRow{1}{u_i/\xu/1,a/\xa/1,v/\xv/2}
\btRow{-5}{a/\xa/1,v/\xv/2,w_j/\xw/2}
\btRow{-11}{u_i/\xu/1,a/\xa/1,v/\xv/2,w_j/\xw/2}
\foreach \y/\cnt in {7/1, 1/k, -5/m, -11/km} \node[font=\scriptsize] at (39,\y) {$\cnt$};
\foreach \y in {7,1,-5,-11} \draw[-{Stealth[length=1.4mm]},black!55,line width=.5pt] (44,\y) -- (48,\y);
\node[chip,fill=obscol!15,text=black!70]      at (56,7)   {$(a,v)$};
\node[chip,fill=gcol!20,text=gcol!55!black]   at (56,1)   {$(u_i,v)$};
\node[chip,fill=gcol!20,text=gcol!55!black]   at (56,-5)  {$(a,w_j)$};
\node[chip,fill=gcol!20,text=gcol!55!black]   at (56,-11) {$(u_i,w_j)$};
\node[font=\scriptsize,anchor=west,text=black!65] at (63,7) {the interaction itself};
\node[font=\scriptsize,anchor=west,gcol!70!black] at (63,1)   {in-reach};
\node[font=\scriptsize,anchor=west,gcol!70!black] at (63,-5)  {out-reach};
\node[font=\scriptsize,anchor=west,gcol!70!black] at (63,-11) {in-reach $\times$ out-reach};
\draw[black!45,line width=.5pt] (-3,-14.8) -- (92,-14.8);
\node[font=\scriptsize,anchor=north] at (45,-15.6) {%
  \begin{tabular}{@{}r@{\;}l@{\qquad}l@{}}
  $N_\partial$ & $=1+k+m+km=9$ & boundary-crossing paths\\[1pt]
  \color{gcol!55!black}$|E^+_\Phi|$ & \color{gcol!55!black}$=k+m+km=8$ & \color{gcol!55!black}emergent edges
  \end{tabular}};
\end{scope}
\end{tikzpicture}
\caption{\textbf{Emergence multiplies reach on the two sides of the
interface (Example~\ref{ex:bowtie}, $k=m=2$).}
\textbf{(a)}~An in-fan $s_1$ and an out-fan $s_2$ joined by the interaction
edge $(a,v)$.
\textbf{(b)}~Under $\Phi=G^{\le3}$, the boundary-crossing paths come in four
kinds, one row each: the interaction edge, the $k$ paths that end with it,
the $m$ that start with it, and the $km$ that pass through it.  Each kind
produces its own kind of edge.  The first produces the interaction edge,
which the observed union already contains; the other three produce the
emergent edges.}
\alttext{Left: a bow-tie graph with two blue vertices u1 and u2 pointing to
a blue vertex a, a dashed red interaction edge from a to an orange vertex v,
and v pointing to two orange vertices w1 and w2. Right: four rows, each
drawing one kind of path that crosses a vertical interface line: a to v
(count 1), u to a to v (count k), a to v to w (count m), and u to a to v to
w (count k times m), each with an arrow to the edge it produces; the totals
read N equals 1 plus k plus m plus km equals 9, and the number of emergent
edges equals k plus m plus km equals 8.}
\label{fig:bowtie}
\end{figure}

For random parts, Proposition~\ref{prop:bridge} gives the exact expected
number of channels and of emergent edges, which Section~\ref{sec:bridging}
compares with experiment.

\begin{corollary}[Bridging random modules]\label{cor:bridge_random}
Let $s_1$ and $s_2$ be independent random digraphs on $n_1$ and $n_2$ vertices
in which each ordered pair of distinct vertices is an edge independently with
probability $p_1$ and $p_2$, let $\mathcal I=\{(a,b)\}$ with $a\in V_1$ and
$b\in V_2$ chosen independently of the parts, and let $\Phi=G^{\le L}$.  Then
\[
\mathbb E[N_\partial]=\sum_{i+j\le L-1}(n_1-1)_i\,p_1^{\,i}\,(n_2-1)_j\,p_2^{\,j},
\qquad
\mathbb E|E^+_\Phi|=\sum_{i+j\le L-1}\mathbb E[\rho^{\mathrm{in}}_i(a)]\,\mathbb E[\rho^{\mathrm{out}}_j(b)]-1,
\]
where $(n)_i=n(n-1)\cdots(n-i+1)$, and $\rho^{\mathrm{in}}_i(a)$ and
$\rho^{\mathrm{out}}_i(b)$ have the law of $F_i$ in the chain $F_0=1$,
$R_0=n-1$, $F_{i+1}\sim\mathrm{Binomial}\bigl(R_i,1-(1-p)^{F_i}\bigr)$,
$R_{i+1}=R_i-F_{i+1}$, with $(n,p)$ those of the part.
\end{corollary}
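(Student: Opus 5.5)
The plan is to take expectations in the two formulas of Proposition~\ref{prop:bridge}, which apply because $\mathcal I=\{(a,b)\}\subseteq V_1\times V_2$ runs one way. For a single edge, Proposition~\ref{prop:bridge}(i) gives
\[
N_\partial=\sum_{i+j\le L-1}\pi^{\mathrm{in}}_i(a)\,\pi^{\mathrm{out}}_j(b).
\]
Proposition~\ref{prop:bridge}(ii) gives $|E^+_\Phi|=\sum_{i+j\le L-1}\rho^{\mathrm{in}}_i(a)\rho^{\mathrm{out}}_j(b)-1$. The in-quantities at $a$ are functions of $s_1$ (and $a$), and the out-quantities at $b$ are functions of $s_2$ (and $b$). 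Since $s_1,s_2$ are independent and $a,b$ are chosen independently of the parts, each product factors in expectation, and linearity finishes both identities once the single-factor expectations are known.

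For the path counts I would compute $\mathbb E[\pi^{\mathrm{in}}_i(a)]$ by linearity over candidate paths. A path of length $i$ ending at $a$ is an ordered sequence of $i$ distinct vertices of $V_1\setminus\{a\}$ followed by $a$, and there are $(n_1-1)_i$ of them. Each uses $i$ distinct ordered pairs, so it is present with probability $p_1^{\,i}$. This holds for every fixed $a$, hence also after averaging over the choice of $a$. The out-paths at $b$ are handled symmetrically, giving $(n_2-1)_j p_2^{\,j}$, and this yields the first formula.

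For the distance counts I would show that $(\rho^{\mathrm{in}}_i(a))_i$ has the law of the chain $(F_i)$ by a breadth-first exploration backward from $a$ (forward from $b$ for $\rho^{\mathrm{out}}$). Here $F_i$ is the set of vertices at distance exactly $i$ and $R_i$ is the set of not-yet-reached vertices. Conditionally on the exploration through level $i$, the edges from each unreached vertex to the current frontier have not been examined. Each unreached vertex therefore joins level $i+1$ independently with probability $1-(1-p)^{F_i}$, which gives the Binomial step. The one point needing care, and the main (mild) obstacle, is justifying that these edges are still unexposed and independent of the history. Edges between unreached vertices and earlier levels were already examined and found absent, but those are different ordered pairs from the frontier edges. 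Edges among unreached vertices play no role yet. This is the standard exploration-process argument for $G(n,p)$ with direction respected: only pairs $(x,y)$ with $y$ in the frontier are queried. The law does not depend on which vertex $a$ is, so independence of the choice of $a$ again causes no trouble. Combining these marginals with the factorization gives the second formula.
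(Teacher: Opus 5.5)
Your proposal is correct and follows the paper's proof. You take expectations in Proposition~\ref{prop:bridge}(i) and (ii) for the single edge, factor each product using the independence of the parts and the relabeling invariance of their laws, count candidate paths by linearity to get $(n_1-1)_i\,p_1^{\,i}$, and match the distance profile to the Binomial chain by breadth-first exploration. The only cosmetic difference is that the paper explores forward from $b$ and gets the in-reach of $a$ by noting that reversing every edge of $s_1$ preserves its law, whereas you explore backward from $a$ directly, which is equivalent.
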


\begin{proof}
A sequence of $i$ distinct vertices of $s_1$ other than $a$ is, followed by
$a$, a path into $a$ with probability $p_1^{\,i}$, and there are $(n_1-1)_i$
such sequences; paths out of $b$ are counted in the same way.  The in-reach of
$a$ depends only on $s_1$ and the out-reach of $b$ only on $s_2$, and the law of
each part is invariant under relabeling, so Proposition~\ref{prop:bridge}(i)
and (ii) give the two expectations.  In a breadth-first search from $b$, the
edges from the $F_i$ vertices at distance $i$ to the $R_i$ vertices not yet
reached have not been examined, so each of these vertices joins the next layer
independently with probability $1-(1-p)^{F_i}$.  Reversing every edge of $s_1$
preserves its law, so the same chain gives the in-reach of $a$.
\end{proof}

\subsection{Scalar observations}\label{sec:scalar}

\begin{definition}[Scalar discrepancy]\label{def:discrepancy_scalar}
For a real-valued function $\Phi$ of graphs, such as the number of paths of
length at most $L$, the \emph{emergence discrepancy} is
$\Delta_\Phi(s_1,s_2):=\Phi(s_1\vee s_2)-\Phi(s_1)-\Phi(s_2)$.
\end{definition}

Many network measures add up contributions of routes: path counts, and the
attenuated walk sums of Katz~\cite{katz1953new} and of
communicability~\cite{estrada2008communicability}, which count walks rather
than simple paths.  For measures built from simple paths, the
boundary-crossing paths give the exact emergence, not only a bound; for walk
sums, the walks that traverse an interaction edge play the same role
(Appendix~\ref{app:computation}).

\begin{figure}[htbp]
\centering
\def\scRow#1#2{%
  \foreach \vl/\vx/\vp [count=\vi] in {#2} {%
    \node[sc\vp] (r\vi) at (\vx,#1) {$\vl$};
    \ifnum\vi>1
      \pgfmathtruncatemacro\vj{\vi-1}%
      \ifnum\vp=\scLast\relax \draw[sce\vp] (r\vj) -- (r\vi);
      \else \draw[scei] (r\vj) -- (r\vi); \fi
    \fi
    \xdef\scLast{\vp}%
  }}
\begin{tikzpicture}[x=1mm,y=1mm,
  sc1/.style={circle,draw=s1col,fill=s1col!16,line width=.6pt,minimum size=3.9mm,inner sep=0pt,font=\scriptsize},
  sc2/.style={circle,draw=s2col,fill=s2col!18,line width=.6pt,minimum size=3.9mm,inner sep=0pt,font=\scriptsize},
  sce1/.style={-{Stealth[length=1.5mm]},line width=.7pt,draw=s1col!85!black},
  sce2/.style={-{Stealth[length=1.5mm]},line width=.7pt,draw=s2col!85!black},
  scei/.style={-{Stealth[length=1.5mm]},line width=.9pt,draw=icol,densely dashed},
  band/.style={rounded corners=3pt,line width=.5pt},
  coef/.style={rounded corners=2pt,inner sep=2pt,font=\scriptsize,minimum height=4.2mm},
  gname/.style={font=\scriptsize,align=left,anchor=west},
  hdr/.style={font=\scriptsize,align=center,anchor=south},
  tbar/.style={rounded corners=1.5pt,line width=.5pt,minimum height=4.3mm},
]
\node[panel,anchor=west] at (-9,13) {(a)};
\node[font=\small,anchor=west] at (-2.6,13) {two disjoint parts: path count ($q\equiv1$, $L=4$)};
\node[v1] (A) at (0,0) {$a$};
\node[v1] (B) at (11,0) {$b$};
\node[v1,vb] (C) at (22,0) {$c$};
\node[v2,vb] (D) at (35,0) {$d$};
\node[v2] (E) at (46,0) {$e$};
\begin{scope}[on background layer]
  \node[blob1,fit=(A)(C)] (S1) {};
  \node[blob2,fit=(D)(E)] (S2) {};
\end{scope}
\draw[e,s1col] (A)--(B); \draw[e,s1col] (B)--(C);
\draw[e,s2col] (D)--(E);
\draw[ei] (C)--(D);
\node[lbl,s1col,anchor=south] at (S1.north) {$s_1$};
\node[lbl,s2col,anchor=south] at (S2.north) {$s_2$};
\node[lbl,icol,anchor=south] at (28.5,1.2) {$\mathcal I$};
\draw[icol!55,line width=.6pt,dash pattern=on 1.2pt off 1.2pt] (28.5,-5.5) -- (28.5,-67.2);
\node[font=\scriptsize,icol,anchor=north] at (28.5,-67.4) {interface};
\def\bl{-5.5}\def\br{51.5}
\begin{scope}[on background layer]
  \fill[band,s1col!7]  (\bl,-8.7)  rectangle (\br,-24.7);
  \fill[band,s2col!8]  (\bl,-26.7) rectangle (\br,-32.5);
  \fill[band,gcol!8]   (\bl,-34.5) rectangle (\br,-66.5);
\end{scope}
\scRow{-11.5}{a/0/1,b/11/1}
\scRow{-16.7}{b/11/1,c/22/1}
\scRow{-21.9}{a/0/1,b/11/1,c/22/1}
\scRow{-29.6}{d/35/2,e/46/2}
\scRow{-37.5}{c/22/1,d/35/2}
\scRow{-42.7}{b/11/1,c/22/1,d/35/2}
\scRow{-47.9}{c/22/1,d/35/2,e/46/2}
\scRow{-53.1}{a/0/1,b/11/1,c/22/1,d/35/2}
\scRow{-58.3}{b/11/1,c/22/1,d/35/2,e/46/2}
\scRow{-63.5}{a/0/1,b/11/1,c/22/1,d/35/2,e/46/2}
\node[hdr] at (61,-6.2) {coefficient\\[-1pt]$1-m(p)$};
\node[font=\scriptsize,s1col] at (61,-14.3) {3 paths of $s_1$};
\node[coef,fill=obscol!15] at (61,-19.1) {$1-1=0$};
\node[font=\scriptsize,s2col] at (61,-27.3) {1 path of $s_2$};
\node[coef,fill=obscol!15] at (61,-32.1) {$1-1=0$};
\node[font=\scriptsize,gcol!75!black] at (61,-45.6) {6 new paths};
\node[coef,fill=gcol!22,text=gcol!60!black] at (61,-50.4) {$1-0=+1$};
\node[font=\scriptsize,gcol!75!black,align=center,anchor=north] at (61,-53.6) {each crosses\\[-1pt]the interface};
\node[font=\small,anchor=north] at (31,-72.5)
  {$\Delta_\Phi \;=\; 10-3-1 \;=\; 6 \;=\;$ number of new paths};
\begin{scope}[shift={(100,0)}]
\node[panel,anchor=west] at (-20,13) {(b)};
\node[font=\small,anchor=west] at (-13.6,13) {any two parts: the tally};
\def\cA{0}\def\cB{13}\def\cC{26}\def\cD{39}\def\hw{6.1}
\colorlet{shcol}{s1col!50!s2col}
\node[hdr,s1col]          at (\cA,1) {in $s_1$\\[-1pt]only};
\node[hdr,shcol]          at (\cB,1) {in both\\[-1pt](shared)};
\node[hdr,s2col]          at (\cC,1) {in $s_2$\\[-1pt]only};
\node[hdr,gcol!80!black]  at (\cD,1) {in neither\\[-1pt](new)};
\node[font=\scriptsize,anchor=east] at (-7.5,-6)  {$+\Phi(s_1\vee s_2)$};
\node[font=\scriptsize,anchor=east] at (-7.5,-13) {$-\Phi(s_1)$};
\node[font=\scriptsize,anchor=east] at (-7.5,-20) {$-\Phi(s_2)$};
\fill[tbar,obscol!22] (\cA-\hw,-8.2) rectangle (\cD+\hw,-3.8);
\fill[tbar,s1col!25]  (\cA-\hw,-15.2) rectangle (\cB+\hw,-10.8);
\fill[tbar,s2col!28]  (\cB-\hw,-22.2) rectangle (\cC+\hw,-17.8);
\foreach \cx in {\cA,\cB,\cC,\cD} \node[font=\scriptsize] at (\cx,-6) {$+1$};
\foreach \cx in {\cA,\cB} \node[font=\scriptsize] at (\cx,-13) {$-1$};
\foreach \cx in {\cB,\cC} \node[font=\scriptsize] at (\cx,-20) {$-1$};
\draw[black!50,line width=.5pt] (-25,-26) -- (\cD+\hw,-26);
\node[font=\scriptsize,anchor=east] at (-7.5,-31) {$1-m(p)$};
\node[coef,fill=obscol!15,minimum width=9mm] at (\cA,-31) {$0$};
\node[coef,fill=shcol!25,text=shcol!45!black,minimum width=9mm] at (\cB,-31) {$-1$};
\node[coef,fill=obscol!15,minimum width=9mm] at (\cC,-31) {$0$};
\node[coef,fill=gcol!22,text=gcol!60!black,minimum width=9mm] at (\cD,-31) {$+1$};
\node[font=\small,anchor=north] at (12,-39)
  {$\displaystyle\Delta_\Phi=\sum_{p\ \text{new}} q(p)\;-\sum_{p\ \text{shared}} q(p)$};
\node[font=\scriptsize,align=center,anchor=north,text=black!75] at (12,-52)
  {shared paths arise only where the parts overlap;\\
   for disjoint parts the new paths are\\ exactly the boundary-crossing paths};
\end{scope}
\end{tikzpicture}
\caption{\textbf{Scalar emergence is carried by boundary-crossing paths
(Theorem~\ref{thm:scalar}).}  A path $p$ enters $\Delta_\Phi$ with its weight
$q(p)$ times $1-m(p)$, where $m(p)$ is the number of parts that contain $p$.
\textbf{(a)}~Path count for $s_1=(a\to b\to c)$, $s_2=(d\to e)$, and
$\mathcal I=\{(c,d)\}$; each row draws one path under the vertices it visits.
The paths of the parts cancel, and each of the six boundary-crossing paths
contributes $+1$.
\textbf{(b)}~The same tally for arbitrary parts.}
\alttext{Left: a five-vertex chain split into a blue part a to b to c and an
orange part d to e, joined by a dashed red interaction edge from c to d.
Below it, ten rows, grouped into three shaded bands, each draw one path under
the vertices it visits: three paths of the blue part and one path of the
orange part, each labeled with coefficient zero, and six green-banded paths
that all cross a vertical dashed interface line, labeled with coefficient
plus one; the discrepancy is ten minus three minus one equals six. Right: a
tally with four columns (in the first part only, in both, in the second part
only, and in neither), a full-width plus-one bar for the whole, and minus-one
bars for each part, giving net coefficients zero, minus one, zero, and plus one.}
\label{fig:scalar}
\end{figure}

\begin{definition}[Path decomposition]\label{def:path_decomp}
A real-valued function $\Phi$ of graphs has a \emph{path decomposition} of length $L$
if there is a weight $q$ on paths such that
$\Phi(G)=\sum_{p\in\mathcal P_L(G)}q(p)$ for every graph $G$; here
$\mathcal P_L(G)$, and with it $\mathcal P_\partial$, includes cycles only
when $q$ is defined on them.
\end{definition}

Examples are the path count ($q\equiv1$), the length-discounted count
$q(p)=\alpha^{|p|}$ (a simple-path analog of a truncated Katz sum), the
weighted count $q(p)=\prod_{e\in p}w(e)$ for fixed edge weights $w$, and the
number of paths that end in a fixed target set.  The identity below also
allows parts that share vertices, with $\mathcal I$ any set of edges in
$(V_1\times V_2)\cup(V_2\times V_1)$ that are not loops.

\begin{theorem}[Scalar emergence is carried by new paths]\label{thm:scalar}
Let $\Phi$ have a path decomposition of length $L$ with weight $q$.  Let
$\mathcal P_{\mathrm{new}}$ be the paths of $s_1\vee s_2$ of length at most
$L$ that are paths of neither part, and $\mathcal P_{\mathrm{shared}}$ the
paths of length at most $L$ that belong to both parts.  Then
\[
\Delta_\Phi\;=\;\sum_{p\in\mathcal P_{\mathrm{new}}}q(p)\;-\sum_{p\in\mathcal P_{\mathrm{shared}}}q(p).
\]
If $V_1\cap V_2=\varnothing$, then $\Delta_\Phi=\sum_{p\in\mathcal P_\partial}q(p)$:
scalar emergence is the total weight of the boundary-crossing paths.
\end{theorem}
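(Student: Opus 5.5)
The identity is an inclusion--exclusion count over the paths of the join. I will expand all three terms of $\Delta_\Phi$ as sums of $q$ over paths of $s_1\vee s_2$ and tally each path's net coefficient, as in Figure~\ref{fig:scalar}(b).

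First I check that every path of $s_1$ or of $s_2$ of length at most $L$ is a path of $s_1\vee s_2$. Since $s_i\subseteq s_1\vee s_2$, this holds, and the weight $q(p)$ depends on $p$ alone by Definition~\ref{def:path_decomp}. Hence
\[
\Phi(s_1\vee s_2)-\Phi(s_1)-\Phi(s_2)=\sum_{p\in\mathcal P_L(s_1\vee s_2)}q(p)\,\bigl(1-m(p)\bigr),
\]
where $m(p)\in\{0,1,2\}$ is the number of parts that contain $p$. The same reasoning applies to cycles whenever $q$ is defined on them.

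Next I sort the paths of the join by the value of $m(p)$:
\begin{itemize}
\item If $m(p)=1$, the coefficient is $0$, so these paths drop out.
\item If $m(p)=0$, the path lies in $\mathcal P_{\mathrm{new}}$ and contributes $+q(p)$.
\item If $m(p)=2$, the path lies in $\mathcal P_{\mathrm{shared}}$ and contributes $-q(p)$.
\end{itemize}
Conversely, every member of $\mathcal P_{\mathrm{shared}}$ is a path of the join. This confirms that the two index sets are exactly the subsets of $\mathcal P_L(s_1\vee s_2)$ on which the coefficient is nonzero, which proves the general identity. Nothing here uses disjointness, so overlapping parts and an arbitrary loopless $\mathcal I$ are allowed.

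For the disjoint case I have two steps.
\begin{enumerate}
\item $\mathcal P_{\mathrm{shared}}=\varnothing$. A path of length at least one in both parts has all its vertices in $V_1\cap V_2=\varnothing$, which is impossible.
\item $\mathcal P_{\mathrm{new}}=\mathcal P_\partial$. This is exactly Lemma~\ref{lem:crossing}, restricted to length at most $L$, including cycles when $q$ reads them.
\end{enumerate}

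The only point needing care is bookkeeping rather than a genuine obstacle. When cycles are included, I must make sure each rotation of a cycle is counted consistently across the three terms. Lemma~\ref{lem:crossing} already covers cycles, since a cycle avoiding $\mathcal I$ stays on one side.
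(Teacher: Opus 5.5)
Your proof is correct and follows essentially the same argument as the paper. The paper also writes $\Delta_\Phi=\sum_{p\in\mathcal P_L(s_1\vee s_2)}\bigl(1-m(p)\bigr)q(p)$ and reads off the coefficients $+1$, $0$, $-1$; in the disjoint case it notes that no path lies in both parts and identifies the new paths with $\mathcal P_\partial$ via Lemma~\ref{lem:crossing}.
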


\begin{proof}
Each part is a subgraph of the join, so every path of a part is a path of
the join, and
$\Delta_\Phi=\sum_{p\in\mathcal P_L(s_1\vee s_2)}\bigl(1-m(p)\bigr)\,q(p)$,
where $m(p)\in\{0,1,2\}$ is the number of parts that contain $p$.  The
coefficient $1-m(p)$ is $+1$ on new paths, $0$ on paths of exactly one
part, and $-1$ on shared paths (Figure~\ref{fig:scalar}).  If
$V_1\cap V_2=\varnothing$, no path lies in both parts, and the new paths
are the boundary-crossing paths by Lemma~\ref{lem:crossing}.
\end{proof}

\begin{corollary}[Path count]\label{cor:pathcount}
For the path count $\Phi(G)=|\mathcal P_L(G)|$ and disjoint parts,
$\Delta_\Phi=|\mathcal P_\partial|$ (take $q\equiv1$ in
Theorem~\ref{thm:scalar}).  Under the graph power $G^{\le L}$ every path
produces an edge, since its end vertices differ, so
$\mathcal S_\partial=\mathcal P_\partial$ and the path-count emergence
equals the bound $N_\partial$ for $G^{\le L}$.
\end{corollary}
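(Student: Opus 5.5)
The corollary has two claims, and I would prove them separately.

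For the first claim, the path count $\Phi(G)=|\mathcal P_L(G)|$ has a path decomposition of length $L$ in the sense of Definition~\ref{def:path_decomp}, with weight $q\equiv1$. Cycles are not counted here, so $\mathcal P_L$, and with it $\mathcal P_\partial$, consists of simple paths only. Since the parts are disjoint, the second statement of Theorem~\ref{thm:scalar} gives
\[
\Delta_\Phi=\sum_{p\in\mathcal P_\partial}1=|\mathcal P_\partial|.
\]

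For the second claim, I would show $\mathcal S_\partial=\mathcal P_\partial$ under $G^{\le L}$. By the production rule for graph powers listed after Definition~\ref{def:path_obs}, a path $(v_0,\dots,v_k)$ with $k\ge1$ produces the pair $(v_0,v_k)$. This pair is an admissible loopless edge exactly when $v_0\ne v_k$, and that holds because the vertices of a path are distinct and no cycles are read under $G^{\le L}$. So every member of $\mathcal P_\partial$ produces an edge, giving $\mathcal S_\partial=\mathcal P_\partial$. Then $N_\partial=|\mathcal S_\partial|=|\mathcal P_\partial|=\Delta_\Phi$ for the path count with the same $L$.

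The only point needing care is consistency of conventions. Both $\mathcal P_\partial$ sets must be taken with the same length bound $L$, and both must exclude cycles, which holds since neither $q\equiv1$ on simple paths nor $G^{\le L}$ reads cycles. Beyond checking this, no obstacle is expected.
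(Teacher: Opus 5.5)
Your proposal is correct and follows the paper's own argument. You specialize Theorem~\ref{thm:scalar} to $q\equiv1$ for disjoint parts, then observe that under $G^{\le L}$ every simple path has distinct endpoints and so produces an edge, which gives $\mathcal S_\partial=\mathcal P_\partial$ and $N_\partial=|\mathcal P_\partial|=\Delta_\Phi$; your check that both sets use the same $L$ and exclude cycles is the only point of care, and you handle it correctly.
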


\begin{remark}[Weighted paths]\label{rem:weighted}
For positive edge weights $w$ with extremes $w_{\min}$ and $w_{\max}$ on
the join, the weighted path count $\Phi_w$ gives, for disjoint parts,
$\Delta_{\Phi_w}=\sum_{p\in\mathcal P_\partial}\prod_{e\in p}w(e)$.  A path
of length $j\le L$ weighs between $w_{\min}^{\,j}$ and $w_{\max}^{\,j}$, so
$\Delta_{\Phi_w}$ lies between
$|\mathcal P_\partial|\min(w_{\min},w_{\min}^{L})$ and
$|\mathcal P_\partial|\max(w_{\max},w_{\max}^{L})$.  With weights below one,
as for transition probabilities, the largest possible weight of a
boundary-crossing path decays geometrically with its length.
\end{remark}

The same boundary-crossing paths thus account exactly for the emergence of
path-count observables and bound the emergence of every observation acting
on paths, one route per emergent edge.  Section~\ref{sec:examples} shows which
observations open these routes and how cutting them suppresses emergence.

\section{What produces emergence}\label{sec:examples}

Theorem~\ref{thm:main} places every emergent edge on a route across the
interface.  We now see which observations open such routes
(Figure~\ref{fig:examples}) and how removing the routes removes emergence.

\begin{figure}[htbp]
\centering
\begin{tikzpicture}[x=1cm,y=1cm,
  vtx/.style={circle,draw,thick,minimum size=5.6mm,inner sep=0pt,font=\footnotesize},
  hd/.style={font=\scriptsize,align=center,anchor=south},
  rt/.style={font=\scriptsize,anchor=west,inner sep=0pt},
  wt/.style={font=\scriptsize,anchor=west,align=left,inner sep=0pt},
  sep/.style={obscol!25},
  blk/.style={draw=obscol!70,dashed,rounded corners=4pt,inner sep=1.6pt},
  blklbl/.style={font=\scriptsize,text=obscol,inner sep=1pt}]

\def\cA{1.5}\def\cB{4.65}\def\cC{7.8}\def\cD{10.95}\def\cW{12.7}
\def\sH{1.2}\def\sT{0.6}

\node[hd] at (\cA,0.02)  {\textbf{parts}\\[1pt]$s_1,\ s_2$};
\node[hd] at (\cB,0.02)  {\textbf{join}\\[1pt]$s_1\vee s_2$};
\node[hd] at (\cC,0.02)  {\textbf{observed union}\\[1pt]$\Phi(s_1)\vee\Phi(s_2)$};
\node[hd] at (\cD,0.02) {\textbf{observed whole}\\[1pt]$\Phi(s_1\vee s_2)$};
\node[hd] at (14.15,0.02)  {\textbf{witness paths}\\[1pt]\textbf{and the bound}};
\draw[obscol!45] (0,-0.05) -- (15.7,-0.05);

\def\yt{-0.36}\def\yc{-1.52}
\node[rt] at (0,\yt) {\textbf{(a) Drop sinks.}\ An edge is kept when its head can go on, that is, when the head has an out-edge.};
\foreach \xx in {3.075,6.225,9.375,12.525}{\draw[sep] (\xx,\yc+0.9) -- (\xx,\yc-0.9);}
\begin{scope}[shift={(\cA-\sH/2,\yc)}]
  \node[v1](u) at (0,\sT){$u$}; \node[v1,vb](a) at (\sH,\sT){$a$}; \node[v2,vb](v) at (\sH,-\sT){$v$};
  \draw[eobs](u)--(a);
\end{scope}
\begin{scope}[shift={(\cB-\sH/2,\yc)}]
  \node[v1](u) at (0,\sT){$u$}; \node[v1,vb](a) at (\sH,\sT){$a$}; \node[v2,vb](v) at (\sH,-\sT){$v$};
  \draw[eobs](u)--(a); \draw[ei](a)--(v);
\end{scope}
\begin{scope}[shift={(\cC-\sH/2,\yc)}]
  \node[v1](u) at (0,\sT){$u$}; \node[v1,vb](a) at (\sH,\sT){$a$}; \node[v2,vb](v) at (\sH,-\sT){$v$};
  \draw[ex](u)--(a); \draw[ei](a)--(v);
\end{scope}
\begin{scope}[shift={(\cD-\sH/2,\yc)}]
  \node[v1](u) at (0,\sT){$u$}; \node[v1,vb](a) at (\sH,\sT){$a$}; \node[dead](v) at (\sH,-\sT){$v$};
  \draw[eg](u)--(a); \draw[el](a)--(v);
\end{scope}
\node[wt] at (\cW,\yc) {$(u,a,v)\mapsto\textcolor{gcol}{(u,a)}$\\[3pt]
  $|E^+_\Phi|=1=N_\partial$};

\def\yt{-2.80}\def\yc{-3.96}
\node[rt] at (0,\yt) {\textbf{(b) Edges on a cycle ($L=4$).}\ An edge is kept when a route of length at most $3$ leads from its head back to its tail.};
\foreach \xx in {3.075,6.225,9.375,12.525}{\draw[sep] (\xx,\yc+0.9) -- (\xx,\yc-0.9);}
\begin{scope}[shift={(\cA-\sH/2,\yc)}]
  \node[v1,vb](a) at (0,\sT){$a$}; \node[v1,vb](b) at (\sH,\sT){$b$};
  \node[v2,vb](c) at (\sH,-\sT){$c$}; \node[v2,vb](d) at (0,-\sT){$d$};
  \draw[eobs](a)--(b); \draw[eobs](c)--(d);
\end{scope}
\begin{scope}[shift={(\cB-\sH/2,\yc)}]
  \node[v1,vb](a) at (0,\sT){$a$}; \node[v1,vb](b) at (\sH,\sT){$b$};
  \node[v2,vb](c) at (\sH,-\sT){$c$}; \node[v2,vb](d) at (0,-\sT){$d$};
  \draw[eobs](a)--(b); \draw[eobs](c)--(d); \draw[ei](b)--(c); \draw[ei](d)--(a);
\end{scope}
\begin{scope}[shift={(\cC-\sH/2,\yc)}]
  \node[v1,vb](a) at (0,\sT){$a$}; \node[v1,vb](b) at (\sH,\sT){$b$};
  \node[v2,vb](c) at (\sH,-\sT){$c$}; \node[v2,vb](d) at (0,-\sT){$d$};
  \draw[ex](a)--(b); \draw[ex](c)--(d); \draw[ei](b)--(c); \draw[ei](d)--(a);
\end{scope}
\begin{scope}[shift={(\cD-\sH/2,\yc)}]
  \node[v1,vb](a) at (0,\sT){$a$}; \node[v1,vb](b) at (\sH,\sT){$b$};
  \node[v2,vb](c) at (\sH,-\sT){$c$}; \node[v2,vb](d) at (0,-\sT){$d$};
  \draw[eg](a)--(b); \draw[eg](c)--(d); \draw[ei](b)--(c); \draw[ei](d)--(a);
\end{scope}
\node[wt] at (\cW,\yc) {$(a,b,c,d,a)\mapsto\textcolor{gcol}{(a,b)}$\\[1pt]
  $(c,d,a,b,c)\mapsto\textcolor{gcol}{(c,d)}$\\[3pt]
  $|E^+_\Phi|=2\le N_\partial=4$};

\def\yt{-5.24}\def\yc{-6.40}
\node[rt] at (0,\yt) {\textbf{(c) Coarse-grained power.}\ Paths of length $\le 2$ link the fixed blocks $X=\{x,y\}$, $A=\{a\}$, and $B=\{b\}$.};
\foreach \xx in {3.075,6.225,9.375,12.525}{\draw[sep] (\xx,\yc+0.9) -- (\xx,\yc-0.9);}
\begin{scope}[shift={(\cA-\sH/2+0.1,\yc)}]
  \node[v1](x) at (0,\sT){$x$}; \node[v1,vb](a) at (\sH,\sT){$a$};
  \node[v2,vb](b) at (\sH,-\sT){$b$}; \node[v2](y) at (0,-\sT){$y$};
  \draw[eobs](x)--(a); \draw[eobs](b)--(y);
  \node[blk,fit=(x)(y)](X){}; \node[blklbl,anchor=east] at (X.west){$X$};
\end{scope}
\begin{scope}[shift={(\cB-\sH/2+0.1,\yc)}]
  \node[v1](x) at (0,\sT){$x$}; \node[v1,vb](a) at (\sH,\sT){$a$};
  \node[v2,vb](b) at (\sH,-\sT){$b$}; \node[v2](y) at (0,-\sT){$y$};
  \draw[eobs](x)--(a); \draw[eobs](b)--(y); \draw[ei](a)--(b);
  \node[blk,fit=(x)(y)](X){}; \node[blklbl,anchor=east] at (X.west){$X$};
\end{scope}
\begin{scope}[shift={(\cC-\sH/2,\yc)}]
  \node[vn](X) at (0,0){$X$}; \node[v1,vb](A) at (\sH,\sT){$A$}; \node[v2,vb](B) at (\sH,-\sT){$B$};
  \draw[eobs](X) to[bend left=18] (A); \draw[eobs](B) to[bend left=18] (X);
  \draw[ei](A)--(B);
\end{scope}
\begin{scope}[shift={(\cD-\sH/2,\yc)}]
  \node[vn](X) at (0,0){$X$}; \node[v1,vb](A) at (\sH,\sT){$A$}; \node[v2,vb](B) at (\sH,-\sT){$B$};
  \draw[eobs](X) to[bend left=18] (A); \draw[eobs](B) to[bend left=18] (X);
  \draw[ei](A)--(B);
  \draw[eg](A) to[bend left=18] (X); \draw[eg](X) to[bend left=18] (B);
\end{scope}
\node[wt] at (\cW,\yc) {$(x,a,b)\mapsto\textcolor{gcol}{(X,B)}$\\[1pt]
  $(a,b,y)\mapsto\textcolor{gcol}{(A,X)}$\\[3pt]
  $|E^+_\Phi|=2\le N_\partial=3$\\[1pt]
  length $\ge2$: $2$, attained};

\end{tikzpicture}
\caption{\textbf{Observations that read routes produce emergence, and
boundary-crossing paths account for it
(Examples~\ref{ex:dropsinks}--\ref{ex:coarse}).}
Each row shows the parts, their join, the observed union, and the observed
whole; the last column pairs each emergent edge (green) with its witness path
and gives the bound (in (c), also the sharpened bound of
Corollary~\ref{cor:sharp}).  The interaction edges already lie in the observed
union; in (a) the observed whole lacks the interaction edge, which is lost
(dotted purple).  Blue, orange: $s_1$, $s_2$; dashed red: $\mathcal{I}$; double rings:
$\partial$; faded: deleted by $\Phi$; dashed box and gray vertex: block $X$.}
\alttext{Three rows, one per observation: drop sinks, edges on a cycle, and a coarse-grained power. Each row shows two small parts, their join through dashed red interaction edges, the observed union, and the observed whole, with emergent edges drawn green; a final column lists, for each emergent edge, the boundary-crossing path that produces it, and the bound: one emergent edge and one path in the first row, two emergent edges and four paths in the second, and two emergent edges and three paths, two of them of length two, in the third.}
\label{fig:examples}
\end{figure}

\subsection{Emergence requires routes}

\begin{proposition}[Emergence requires routes]\label{prop:routes}
If $\Phi$ decides each observed edge from a single edge of the graph, that is,
if $\Phi$ acts on paths of length at most one, then $E^+_\Phi=\varnothing$ for
all parts $s_1,s_2$ and every interaction~$\mathcal I$.
\end{proposition}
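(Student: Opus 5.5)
The plan is to read this off Theorem~\ref{thm:main}(i), or more directly off Corollary~\ref{cor:sharp}. When $\Phi$ acts on paths of length at most one, $\mathcal P_L(s_1\vee s_2)$ with $L=1$ consists only of single edges. So every boundary-crossing path in $\mathcal S_\partial$ is a single interaction edge $(u,v)\in\mathcal I$ that produces something. The set $\mathcal S^{\ge2}_\partial$ of producing boundary-crossing paths of length at least two is therefore empty. Corollary~\ref{cor:sharp} then gives $|E^+_\Phi|\le|\mathcal S^{\ge2}_\partial|=0$, which is the claim.

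If one prefers not to lean on the corollary, I would argue from Theorem~\ref{thm:main}(i) directly. First, by Theorem~\ref{thm:main}(i), $E^+_\Phi=\prodedge{\mathcal S_\partial}\setminus U$. Second, each $p\in\mathcal S_\partial$ is a length-one path $(u,v)$ with $(u,v)\in\mathcal I$. Definition~\ref{def:path_obs} requires that such a path, if it produces anything, produces its own image $(\varphi(u),\varphi(v))$. Third, this image is an edge of the observed graph, and observed graphs are loopless, so $\varphi(u)\ne\varphi(v)$. Hence it lies in $\varphi(\mathcal I)$, which Definition~\ref{def:observed_union} places inside $U$. So $\prodedge{\mathcal S_\partial}\subseteq U$ and $E^+_\Phi=\varnothing$.

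The only point that needs care is the third step. One must confirm that the produced image is not a loop, so that it is not excluded from $\varphi(\mathcal I)$. This is where looplessness of $\Phi(G)$ enters, exactly as in the proof of Corollary~\ref{cor:sharp}. No length-zero paths occur, because paths have length $k\ge1$, so nothing else can produce edges. If cycles were read, a cycle has length at least two and so does not arise for $L=1$. The argument does not depend on the choice of parts or interaction, which gives the statement for all $s_1,s_2,\mathcal I$.
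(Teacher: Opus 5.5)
Your proof is correct and is the paper's argument: with $L=1$ every boundary-crossing path is an interaction edge, so $\mathcal S^{\ge2}_\partial=\varnothing$, and Corollary~\ref{cor:sharp} gives $E^+_\Phi=\varnothing$. Your second paragraph unpacks the proof of that corollary: the produced image is a non-loop edge of $\varphi(\mathcal I)\subseteq U$, and that step is also sound.
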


\begin{proof}
With $L=1$ the boundary-crossing paths are the interaction edges, so
$\mathcal S^{\ge2}_\partial=\varnothing$ and Corollary~\ref{cor:sharp} gives
$E^+_\Phi=\varnothing$.
\end{proof}

Thresholding a weighted network, such as a brain connectivity
matrix~\cite{bullmore2009complex}, and coarse-graining, a standard way to view
a network at several
scales~\cite{song2005selfsimilarity,itzkovitz2005coarse,gfeller2007spectral,garciaperez2018multiscale},
are of this kind when the threshold or the partition is fixed in advance: every
edge they show in the whole already lies in the observed parts joined by their
interaction, even when a block straddles the interface.  For observations acting on paths, emergence therefore comes from
routes of length at least two (Corollary~\ref{cor:sharp}).  Path closures, the
deletion rules of Section~\ref{sec:framework} that read context, and
coarse-grained powers read such routes: their verdict on an edge depends on the
routes around it, and the join supplies routes that the parts lack.

\subsection{Observations that read routes}

\begin{example}[Drop sinks]\label{ex:dropsinks}
Let $\Phi$ drop sinks, and take $s_1=(u\to a)$, $s_2=(\{v\},\varnothing)$, and
$\mathcal I=\{(a,v)\}$ (Figure~\ref{fig:examples}(a)).  In $s_1$ the vertex
$a$ is a sink, so $\Phi(s_1)$ has no edge; in the join $a$ gains the out-edge
$(a,v)$, and $(u,a)$ is kept.  The vertex $v$ is a sink in the join as well,
so the interaction edge is deleted.  Hence $E^+_\Phi=\{(u,a)\}$ and
$E^-_\Phi=\{(a,v)\}$, as Proposition~\ref{prop:lost} predicts.  The only
boundary-crossing path that produces an edge is $(u,a,v)$, so
$|E^+_\Phi|=1=N_\partial$: the bound is attained.
\end{example}

\begin{example}[Edges on a cycle]\label{ex:cycle}
Let $\Phi$ keep the edges that lie on a directed cycle of length at most
$L=4$, and take $s_1=(a\to b)$, $s_2=(c\to d)$, and
$\mathcal I=\{(b,c),(d,a)\}$ (Figure~\ref{fig:examples}(b)).  Neither part
has a cycle, so the observed union has only the two interaction edges.
The join closes the four-cycle $(a,b,c,d,a)$, which keeps every edge:
$E^+_\Phi=\{(a,b),(c,d)\}$ and $E^-_\Phi=\varnothing$.  The boundary-crossing
cycles are the four rotations of this cycle, so $N_\partial=4$.  The rotations
starting at $a$ and at $c$ witness the emergent edges, and the other two
produce the interaction edges, which the observed union already contains:
$|E^+_\Phi|=2\le N_\partial=4$.  One feedback loop closed through the
interface makes the edges of both parts visible at once; its witnesses are
found by enumerating simple cycles~\cite{johnson1975finding}.
\end{example}

\begin{example}[Coarse-grained power]\label{ex:coarse}
Let $\Phi$ be the path closure $G^{\le2}$ followed by coarse-graining by the
fixed blocks $X=\{x,y\}$, $A=\{a\}$, and $B=\{b\}$.  Take $s_1=(x\to a)$,
$s_2=(b\to y)$, and $\mathcal I=\{(a,b)\}$ (Figure~\ref{fig:examples}(c)); the
block $X$ straddles the interface.  The observed union has the edges $(X,A)$,
$(B,X)$, and the image $(A,B)$ of the interaction edge.  In the join the paths
$(x,a,b)$ and $(a,b,y)$ add $E^+_\Phi=\{(X,B),(A,X)\}$, and
$E^-_\Phi=\varnothing$.  The boundary-crossing paths are $(a,b)$, $(x,a,b)$,
and $(a,b,y)$, so $N_\partial=3$; the interaction edge produces its own image,
and the sharpened bound of Corollary~\ref{cor:sharp}, which counts the two
paths of length two, is attained.  Applied edge by edge, the same partition shows
only edges of the observed union (Proposition~\ref{prop:routes}); reading
routes first reveals new links between blocks, which
Section~\ref{sec:granularity} follows as the partition is refined.  Causal
emergence is likewise measured on coarse-grained
descriptions~\cite{hoel2017map}, including macro-nodes of
networks~\cite{klein2020emergence}.
\end{example}

\subsection{Suppressing emergence by cutting channels}

Read in reverse, Theorem~\ref{thm:main} says that an observation with no
productive route across the interface sees the parts side by side.

\begin{proposition}[Cutting channels]\label{prop:cut}
Let $\Phi$ act on paths of length at most $L$.  If no boundary-crossing path
produces an edge, that is, $N_\partial=0$, then
$E(\Phi(s_1\vee s_2))=E(\Phi(s_1))\cup E(\Phi(s_2))$.  Hence
$E^+_\Phi=\varnothing$, and the lost edges are the images of the interaction
edges that the observed parts do not show,
$E^-_\Phi=\varphi(\mathcal I)\setminus\bigl(E(\Phi(s_1))\cup E(\Phi(s_2))\bigr)$.
\end{proposition}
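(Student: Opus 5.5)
The plan is to compute $E(\Phi(s_1\vee s_2))$ directly from the definition of acting on paths, splitting the paths of the join by Lemma~\ref{lem:crossing}. Since $\Phi$ acts on paths of length at most $L$, we have $E(\Phi(s_1\vee s_2))=\prodedge{\mathcal P_L(s_1\vee s_2)}$. Because the parts are disjoint, Lemma~\ref{lem:crossing} gives a disjoint decomposition
\[
\mathcal P_L(s_1\vee s_2)=\mathcal P_L(s_1)\cup\mathcal P_L(s_2)\cup\mathcal P_\partial .
\]
Cycles are included when $\Phi$ reads them; the lemma's proof covers cycles too. Taking images under $\Phi$ splits over this union: the production rule depends only on the path, so a path of $s_i$ produces the same edge in $s_i$ as in the join. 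This gives
\[
E(\Phi(s_1\vee s_2))=E(\Phi(s_1))\cup E(\Phi(s_2))\cup\prodedge{\mathcal P_\partial}.
\]

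By definition, the paths of $\mathcal P_\partial$ that produce an edge are exactly $\mathcal S_\partial$, so $\prodedge{\mathcal P_\partial}=\prodedge{\mathcal S_\partial}$. The hypothesis $N_\partial=0$ says $\mathcal S_\partial=\varnothing$, so this term is empty, and we obtain the first identity.

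The consequences follow directly. For $E^+_\Phi$, either note that $E(\Phi(s_1))\cup E(\Phi(s_2))\subseteq U$, so nothing is left over, or cite Theorem~\ref{thm:main}(ii): $|E^+_\Phi|\le N_\partial=0$. For $E^-_\Phi$, $\Phi$ is monotone, since the paths of a subgraph are paths of the whole. Proposition~\ref{prop:lost} then gives $E^-_\Phi=\varphi(\mathcal I)\setminus E(\Phi(s_1\vee s_2))$, and substituting the identity just proved yields the stated formula.

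I expect no step to be a real obstacle. The one point needing care is that the identity describes edges only, while the observed union also adds the vertices $\varphi(\partial)$. The statement concerns edge sets, so this causes no difficulty. We should also check that the decomposition of Lemma~\ref{lem:crossing} applies to the cycles read by observations such as edges on a cycle; its proof already treats cycles.
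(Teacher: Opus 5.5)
Your proposal is correct and follows the paper's argument. You use Lemma~\ref{lem:crossing} to see that, when $\mathcal S_\partial=\varnothing$, every edge of the observed whole is produced by a path of a part, and conversely every path of a part produces the same edge in the join. The only difference is cosmetic: you obtain $E^-_\Phi$ through Proposition~\ref{prop:lost}, whereas the paper reads it off Definition~\ref{def:discrepancy} directly, and both give the same formula.
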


\begin{proof}
When $N_\partial=0$, every edge of $\Phi(s_1\vee s_2)$ is produced by a path of
the join that traverses no interaction edge, hence by a path of $s_1$ or $s_2$
(Lemma~\ref{lem:crossing}); conversely every path of a part is a path of the
join.  The formulas for $E^\pm_\Phi$ follow from
Definition~\ref{def:discrepancy}.
\end{proof}

\begin{example}[Thresholding away the coupling]\label{ex:cut}
Take $s_1=(u\to a)$, $s_2=(v\to w)$, and $\mathcal I=\{(a,v)\}$, with weights
$0.9$, $0.9$, and $0.1$, and threshold at $\tau=0.5$.  The light interaction
edge produces nothing, and it is the only boundary-crossing path, so
$N_\partial=0$.  The observed whole shows the parts side by side:
$E^+_\Phi=\varnothing$, and the interaction edge is lost,
$E^-_\Phi=\{(a,v)\}$.  The thresholded power gives the same result, because every route across the interface uses the light edge.
\end{example}

Proposition~\ref{prop:cut} treats the case in which no channel is productive.
When only some interaction edges are removed, the effect can be followed
emergent edge by emergent edge.  For $\mathcal J\subseteq\mathcal I$, write
$E^+_\Phi(\mathcal J)$ and
$U(\mathcal J):=E(\Phi(s_1))\cup E(\Phi(s_2))\cup\varphi(\mathcal J)$ for the
emergent edges and the observed union of the parts joined along $\mathcal J$,
so that $E^+_\Phi=E^+_\Phi(\mathcal I)$ and $U=U(\mathcal I)$.

\begin{proposition}[Removing interaction edges]\label{prop:intervene}
In the setting of Theorem~\ref{thm:main}, let $C\subseteq\mathcal I$ and
$\mathcal J:=\mathcal I\setminus C$.
\begin{enumerate}[label=(\roman*)]
\item The boundary-crossing paths of $s_1\vee_{\mathcal J}s_2$ that produce an
edge are the paths of $\mathcal S_\partial$ that traverse no edge of $C$.
\item An emergent edge $e\in E^+_\Phi$ remains emergent after the removal of
$C$ if and only if some path of $\mathcal S_\partial(e)$ traverses no edge of
$C$.  Equivalently, removing $C$ removes exactly the emergent edges whose
whole share passes through $C$.
\item Every edge of $E^+_\Phi(\mathcal J)\setminus E^+_\Phi$ lies in
$\varphi(C)\setminus U(\mathcal J)$: it is the image of a removed interaction
edge that the reduced observed union no longer contains.
\end{enumerate}
In particular, $|E^+_\Phi|-|E^+_\Phi(\mathcal J)|\le\mathrm{cr}(C)$.
\end{proposition}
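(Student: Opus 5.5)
The plan is to reduce everything to Theorem~\ref{thm:main}(i), applied twice: once to the parts joined along $\mathcal I$ and once to the parts joined along $\mathcal J$. For (i), note that $s_1\vee_{\mathcal J}s_2$ is the subgraph of $s_1\vee_{\mathcal I}s_2$ obtained by deleting the edges of $C$. Its paths of length at most $L$ are therefore exactly the paths of the larger join that traverse no edge of $C$. Such a path is boundary-crossing for $\mathcal J$ if and only if it traverses an edge of $\mathcal J$, which, given that it avoids $C$, is the same as traversing an edge of $\mathcal I$. Since $\Phi$ acts on paths, whether a path produces an edge, and which one, depends on the path alone. So the producing boundary-crossing paths for $\mathcal J$ form the set $\mathcal S_\partial(\mathcal J)=\{p\in\mathcal S_\partial: p \text{ avoids } C\}$.

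For (ii) and (iii), apply Theorem~\ref{thm:main}(i) to both joins:
\[
E^+_\Phi(\mathcal J)=\Phi\langle\mathcal S_\partial(\mathcal J)\rangle\setminus U(\mathcal J),
\qquad
E^+_\Phi=\Phi\langle\mathcal S_\partial\rangle\setminus U,
\]
with $U(\mathcal J)\subseteq U$ and $U\setminus U(\mathcal J)\subseteq\varphi(C)$.

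\emph{Forward direction of (ii).} Take $e\in E^+_\Phi$ and suppose some $p\in\mathcal S_\partial(e)$ avoids $C$. Then $e\in\Phi\langle\mathcal S_\partial(\mathcal J)\rangle$. Since $e\notin U\supseteq U(\mathcal J)$, it follows that $e\in E^+_\Phi(\mathcal J)$.

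\emph{Converse of (ii).} If $e\in E^+_\Phi(\mathcal J)$, some path of $\mathcal S_\partial(\mathcal J)$ produces $e$. By (i) this is a path of $\mathcal S_\partial$ avoiding $C$, and it lies in $\mathcal S_\partial(e)$ because it produces $e$. The equivalent phrasing about shares is a restatement: all of the paths in $\mathcal S_\partial(e)$ pass through $C$ exactly when the whole unit share of $e$ is counted in $\mathrm{cr}(C)$.

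For (iii), take $e\in E^+_\Phi(\mathcal J)\setminus E^+_\Phi$. Then $e$ is produced by a path of $\mathcal S_\partial(\mathcal J)\subseteq\mathcal S_\partial$, so $e\in\Phi\langle\mathcal S_\partial\rangle$. Since $e\notin E^+_\Phi$, Theorem~\ref{thm:main}(i) forces $e\in U$. But $e\notin U(\mathcal J)$, so $e\in U\setminus U(\mathcal J)\subseteq\varphi(C)$. This is the one step needing care: the observed union changes along with the interaction, so edges can become emergent by leaving $U$. The containment $U\setminus U(\mathcal J)\subseteq\varphi(C)$ handles this, because $\varphi(\mathcal I)=\varphi(\mathcal J)\cup\varphi(C)$.

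For the final inequality, every edge counted in $|E^+_\Phi|-|E^+_\Phi(\mathcal J)|$ is accounted for by $E^+_\Phi\setminus E^+_\Phi(\mathcal J)$, since the newly emergent edges of (iii) only decrease the difference. Thus
\[
|E^+_\Phi|-|E^+_\Phi(\mathcal J)|\le|E^+_\Phi\setminus E^+_\Phi(\mathcal J)|.
\]
By (ii), each removed edge $e$ has all of $\mathcal S_\partial(e)$ passing through $C$, so $e$ contributes its total share $1$ to $\mathrm{cr}(C)$. The shares are nonnegative and the sets $\mathcal S_\partial(e)$ for distinct $e$ are disjoint, so summing over the removed edges gives $|E^+_\Phi\setminus E^+_\Phi(\mathcal J)|\le\mathrm{cr}(C)$.
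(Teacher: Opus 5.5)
Your proof is correct and follows essentially the same route as the paper: (i) comes from the paths of $s_1\vee_{\mathcal J}s_2$ being exactly those of $s_1\vee s_2$ that avoid $C$, (ii) from $U(\mathcal J)\subseteq U$, and (iii) from $U\setminus U(\mathcal J)\subseteq\varphi(C)$. The differences are only presentational. In (iii) you invoke Theorem~\ref{thm:main}(i) for both joins where the paper uses monotonicity, and in the final count you spell out $|A|-|B|\le|A\setminus B|$ and the disjointness of the sets $\mathcal S_\partial(e)$, which the paper leaves implicit.
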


\begin{proof}
Removing $C$ keeps every vertex and removes only the edges of $C$, so the paths
(and cycles) of $s_1\vee_{\mathcal J}s_2$ are those of $s_1\vee s_2$ that
traverse no edge of $C$; this gives (i).
(ii)~Let $e\in E^+_\Phi$.  Since $U(\mathcal J)\subseteq U$, the edge $e$
remains emergent exactly when some path of $s_1\vee s_2$ that avoids $C$
produces it.  Every path of $s_1\vee s_2$ that produces $e$ lies in
$\mathcal S_\partial(e)$, since a path of a part would put $e$ in $U$
(Lemma~\ref{lem:crossing}).  The share of $e$ that passes through $C$ is $1$
exactly when every path of $\mathcal S_\partial(e)$ traverses $C$.
(iii)~Let $f\in E^+_\Phi(\mathcal J)\setminus E^+_\Phi$.  By monotonicity
$f\in E(\Phi(s_1\vee s_2))$, and $f\notin E^+_\Phi$, so $f\in U$; since
$f\notin U(\mathcal J)$ and $U\setminus U(\mathcal J)\subseteq\varphi(C)$,
$f\in\varphi(C)\setminus U(\mathcal J)$.
Finally, by (ii), $|E^+_\Phi|-|E^+_\Phi(\mathcal J)|$ is at most the number of
emergent edges whose whole share passes through $C$, and the shares through
$C$ sum to $\mathrm{cr}(C)$.
\end{proof}

For $C=\mathcal I$ every emergent edge is removed
(Corollary~\ref{cor:local}(ii)).  The edges in (iii) are images of removed
interaction edges that the reduced join still produces along channels of
length at least two: without the direct coupling, the observed whole reveals
them as emergent.

Together with Theorem~\ref{thm:main}, Propositions~\ref{prop:cut}
and~\ref{prop:intervene} give a design rule.  To suppress an emergent edge,
cut every channel that produces it; to suppress all emergence under a given
observation, remove all of its productive boundary-crossing paths: delete
interaction edges or weaken them below the observation's threshold, or break
the internal routes that feed them, all of which lie within $L-1$ steps of
$\partial$ (Corollary~\ref{cor:local}).  Such cuts only remove paths from
$\mathcal S_\partial$, so each lowers the bound $N_\partial$ or leaves it
unchanged, and once $N_\partial$ reaches zero Proposition~\ref{prop:cut}
applies.  The credits of Definition~\ref{def:shares} rank the interaction
edges as targets: removing a set $C$ of them removes exactly the emergent
edges whose whole share passes through $C$, at most $\mathrm{cr}(C)$ of them.

To promote emergence, add interaction edges between internally rich regions.
An interaction edge $(a,v)$ opens a channel for each route into $a$ combined
with each route out of $v$, up to total length $L$, so one bridge raises the
bound $N_\partial$ by many channels at once
(Proposition~\ref{prop:bridge} and Example~\ref{ex:bowtie}), much as cross-module connections integrate
specialized modules in brain networks~\cite{sporns2010networks};
Section~\ref{sec:bridging} shows emergence rising with the bound.

\section{Numerical experiments}\label{sec:numerics}

The experiments compute both sides of Theorem~\ref{thm:main} exactly. The bound
holds in every instance and, beyond the interaction edges, is attained where the
routes through the interface are distinct
(Sections~\ref{sec:validity}--\ref{sec:tightness}), as they become in large
sparse networks (Section~\ref{sec:scale}). The bridge law predicts emergence
quantitatively (Section~\ref{sec:bridging}), $N_\partial$ tracks emergence more
closely than simpler statistics of the coupling (Section~\ref{sec:baselines}),
and on real directed networks the shares locate the components that carry
emergence and predict interventions exactly (Section~\ref{sec:real}).

\subsection{Method}\label{sec:method}

In Experiments~1--6 and~8 we enumerate the simple directed paths (and, for drop
sinks and edges on a cycle, the simple cycles) of length at most $L$ in the
parts, of at most a few tens of vertices, and in their join. Each observation is
implemented by its path rule (Definition~\ref{def:path_obs}), checked against
its direct definition (Appendix~\ref{app:experiments}); $E^+_\Phi$ comes from
comparing $\Phi(s_1\vee s_2)$ with the observed union, and $N_\partial$ from a
separate enumeration of the producing boundary-crossing paths. Experiments~7
and~9 use the local search of Corollary~\ref{cor:local}, checked against a
direct computation. Random networks come from
NetworkX~\cite{hagberg2008exploring} or their defining rules. We report means
with normal-approximation $95\%$ confidence intervals (bootstrap intervals for
Experiment~8); all randomness is seeded, and code that reproduces every
experiment and figure is provided \codestatement.

\subsection{Validity across observations and network families}\label{sec:validity}

\paragraph{Experiment~1: thirteen observations.} Two random digraphs on five
vertices, in which each ordered pair is an edge independently with probability
$p\in\{0.15,0.30,0.45\}$ (the directed $G(n,p)$
variant~\cite{newman2018networks} of the Erd\H{o}s--R\'enyi
model~\cite{erdos1959random}), are joined by $|\mathcal I|\in\{1,2,3\}$ random
interaction edges, $30$ times per combination, under thirteen observations in
three groups: \emph{edgewise} observations (identity, thresholding, and
coarse-graining by a fixed partition), \emph{powers} that read routes
($G^{\le L}$ for $L=2,3,4$, and coarse-grained and thresholded powers), and
\emph{deletion rules that read context} (drop sinks, edges on a cycle of length
at most $4$, and the continuation filter with $L_0=2,3$). This gives $3510$
instances (Table~\ref{tab:exp1} in Appendix~\ref{app:experiments}).

\begin{figure}[htbp]
\centering
\includegraphics[width=\textwidth]{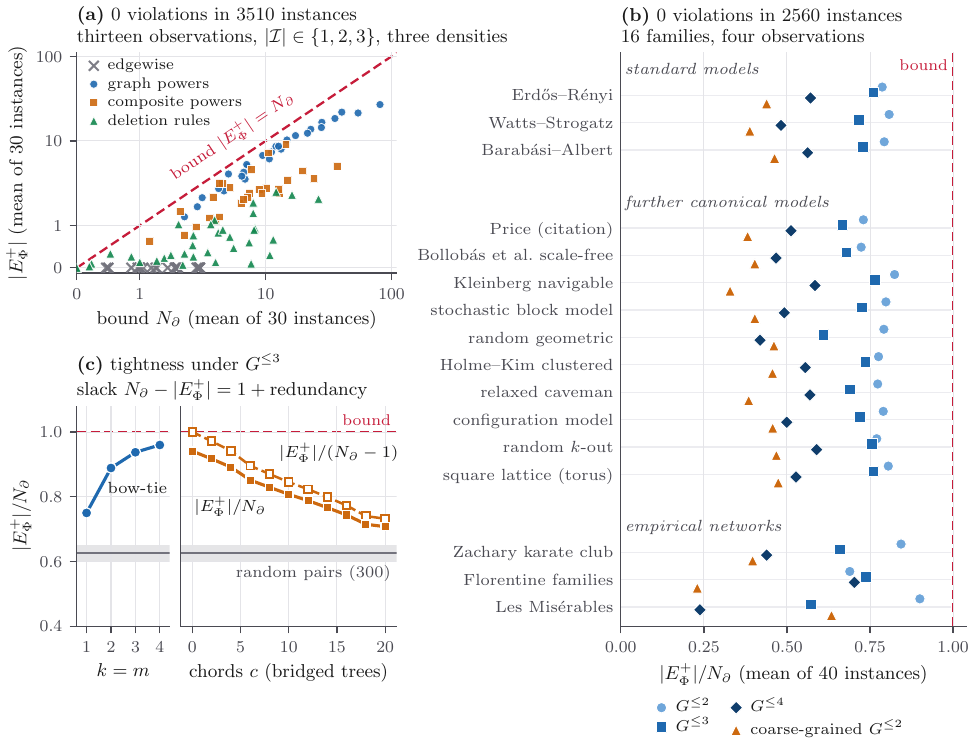}
\caption{\textbf{The bound holds across observations and network families, and is approached where routes are distinct.}
\textbf{(a)}~Experiment~1: mean $|E^+_\Phi|$ against mean $N_\partial$ over the $30$ instances of each combination of observation, edge probability, and $|\mathcal I|$ ($117$ points, $3510$ instances); axes linear on $[0,1]$ and logarithmic above.
\textbf{(b)}~Experiment~2: mean ratio $|E^+_\Phi|/N_\partial$ over $40$ instances for sixteen families and four observations ($2560$ instances).
\textbf{(c)}~Experiment~3 under $G^{\le3}$ with one interaction edge: the ratio on bow-ties (left) and on two bridged binary trees with $c$ random chords added inside the parts (right; filled, $|E^+_\Phi|/N_\partial$; open, the sharpened ratio $|E^+_\Phi|/(N_\partial-1)$; $100$ realizations per $c$), against $300$ random pairs (gray band). Bars and bands are $95\%$ confidence intervals.}
\alttext{Three panels. Top left: a scatter of emergent edges against the bound on logarithmic axes that include zero, with every point on or below a dashed red diagonal and the edgewise observations on the zero line. Right: a dot plot listing sixteen network families in three groups, each with four markers for the four observations, all to the left of a dashed red line at ratio one. Bottom left, two plots sharing one vertical axis: the ratio of emergent edges to the bound rising toward one for bow-ties of growing size, and, for trees with a growing number of chords, a solid curve starting at 0.94 and a dashed curve starting at one, both falling steadily to about 0.71 and 0.73; a gray band for random pairs near 0.63 runs across both.}
\label{fig:exp_valid}
\end{figure}

The bound holds in all $3510$ instances (Figure~\ref{fig:exp_valid}(a)).
Emergence requires routes: under the edgewise observations $|E^+_\Phi|=0$ in
all $810$ of their instances, as Proposition~\ref{prop:routes} predicts, and the
only lost edges are the light interaction edges that thresholding removes
(Proposition~\ref{prop:lost}). Observations that read routes produce emergence
in abundance, growing with density and reach for the graph powers: $1804$
instances show emergent edges, among them $19$ in which a feedback loop closed
through the interface reveals edges of the parts (Example~\ref{ex:cycle}). The
bound is attained in $37$ of the $3107$ instances with $N_\partial>0$, all under
deletion rules ($31$ under the continuation filter and $6$ under drop sinks, as
in Example~\ref{ex:dropsinks}), and the sharpened bound of
Corollary~\ref{cor:sharp} in $533$ of the $2290$ instances with a
boundary-crossing path of length at least two.

\paragraph{Experiment~2: sixteen network families.} Synthetic parts of $12$
vertices from the Erd\H{o}s--R\'enyi, Watts--Strogatz, and Barab\'asi--Albert
models~\cite{erdos1959random,watts1998collective,barabasi1999emergence} and
from ten further canonical models that add structural features these three
lack, and random halves of three empirical
networks~\cite{newman2003structure,albert2002statistical,newman2018networks},
are observed with $G^{\le2}$, $G^{\le3}$, $G^{\le4}$, and coarse-grained
$G^{\le2}$, with $|\mathcal I|=3$ and $40$ realizations per cell ($2560$
instances; Appendix~\ref{app:experiments}). The bound and its sharpened form
hold in all $2560$ (Figure~\ref{fig:exp_valid}(b)). The mean of
$|E^+_\Phi|/N_\partial$ is set mainly by the observation: across the $13$
synthetic families it lies in $[0.72,0.82]$ for $G^{\le2}$, $[0.61,0.77]$ for
$G^{\le3}$, $[0.42,0.59]$ for $G^{\le4}$, and $[0.33,0.47]$ for coarse-grained
$G^{\le2}$, and over all $16$ families it ranges from $0.23$ (Florentine
families, coarse-grained) to $0.90$ (\emph{Les Mis\'erables}, $G^{\le2}$),
falling with the reach as more routes join the same endpoints.

\paragraph{The identity behind the bound.} On every instance of Experiments~1
and~2 and of Experiment~3 below, and on $1560$ random joins of three and four
parts of four or five vertices (Remark~\ref{rem:multipart}), we verified the
identity and the equality criterion of Theorem~\ref{thm:main}, the sharpened
bound, Lemma~\ref{lem:crossing}, Proposition~\ref{prop:lost}, the
whole-discrepancy bound of Corollary~\ref{cor:discrepancy}, and that the local
search of Corollary~\ref{cor:local} returns exactly the boundary-crossing paths
and cycles found by enumerating the join: all hold on all $8935$ instances.

\subsection{Tightness}\label{sec:tightness}

\paragraph{Experiment~3: tightness under $G^{\le3}$.} Three designs share one
interaction edge $(a,v)$ from $s_1$ to $s_2$ (Figure~\ref{fig:exp_valid}(c)).
The path $(a,v)$ produces an edge that the observed union shows, and every
longer boundary-crossing path an edge from $V_1$ to $V_2$ that it lacks, so by
Theorem~\ref{thm:main}(i) the slack is $N_\partial-|E^+_\Phi|=1+R$, where the
\emph{redundancy} $R$ counts the boundary-crossing paths whose edge another
such path also produces. On the bow-tie of Example~\ref{ex:bowtie} with
$k=m=1,\dots,4$, every route is distinct: $|E^+_\Phi|=3,8,15,24$ and
$N_\partial=4,9,16,25$, a ratio rising from $0.75$ to $0.96$. So it is for a
binary in-tree of depth two into $a$ bridged to a binary out-tree of depth two
out of $v$: $|E^+_\Phi|=16=N_\partial-1$, the equality case of
Theorem~\ref{thm:main}(iii). Adding $c=2,4,\dots,20$ random chords inside these
parts ($100$ realizations each) opens alternative routes: at $c=20$ the bound
has grown to $31.4$ but the emergent edges only to $22.0$, the mean redundancy
has risen to $8.4$, and the ratio and the sharpened ratio have fallen steadily
from $0.94$ and $1$ to $0.71\pm0.01$ and $0.73\pm0.01$; in all $1001$
realizations the slack is one plus the redundancy. On $300$ random pairs drawn
as in Experiment~1 (four vertices per part, $p=0.3$) the mean ratio is
$0.63\pm0.03$, and over the $275$ with a boundary-crossing path of length at
least two the sharpened ratio averages $0.93\pm0.01$, with the sharpened bound
attained in $192$. On $2400$ further joins of two four-vertex parts (edge
probability $0.35$, one to three interaction edges in either direction) under
$G^{\le2}$ and $G^{\le3}$, the sharpened bound holds in all $4800$ instances;
over the $4656$ with a boundary-crossing path of length at least two, sharpening
raises the mean ratio from $0.62$ to $0.86$ and is attained in $2325$. Beyond
the interaction edges, the slack thus measures mainly how often the observation
merges distinct routes into one edge.

\subsection{Topology and reach}\label{sec:topology}

\begin{figure}[htbp]
\centering
\includegraphics[width=\textwidth]{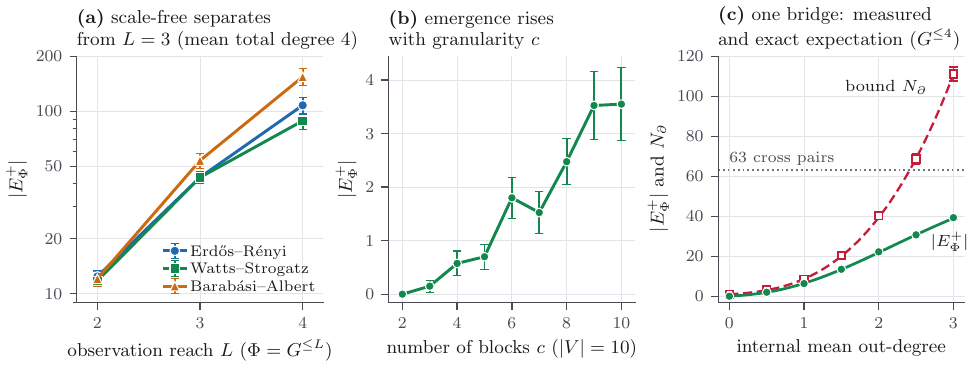}
\caption{\textbf{What the emergence discrepancy depends on.}
\textbf{(a)}~Experiment~4: Erd\H{o}s--R\'enyi, Watts--Strogatz, and Barab\'asi--Albert parts of $40$ vertices (the last two oriented as in Experiment~2) at matched mean total degree $4$, $\Phi=G^{\le L}$, $60$ realizations per point.
\textbf{(b)}~Experiment~5: coarse-grained $G^{\le2}$ on $10$ vertices with a balanced random partition into $c$ blocks, $40$ realizations per point.
\textbf{(c)}~Experiment~6: two modules of eight vertices joined by a single interaction edge, $\Phi=G^{\le4}$, $1000$ realizations per point; emergent edges (circles) and the bound $N_\partial$ (squares) against the internal mean out-degree, with their exact expectations (curves, Corollary~\ref{cor:bridge_random}) and the $63$ cross pairs that can become emergent (dotted line).
Error bars are $95\%$ confidence intervals.}
\alttext{Three panels. Left: emergent edges against observation reach two to four on a logarithmic scale for three network families; the three curves start together at reach two and the scale-free curve rises above the other two from reach three. Middle: emergent edges rising with the number of blocks of a coarse-graining, from zero at two blocks to about three and a half at ten. Right: measured emergent edges and bound as markers lying on two smooth curves of exact expectations, both rising with the internal mean out-degree of two modules joined by one edge; the bound curve rises to about 111 and the emergent-edge curve to about 39, below a dotted line at 63.}
\label{fig:exp_depend}
\end{figure}

\paragraph{Experiment~4.} To isolate topology, we draw Erd\H{o}s--R\'enyi,
Watts--Strogatz, and Barab\'asi--Albert parts of $40$
vertices~\cite{erdos1959random,watts1998collective,barabasi1999emergence}, the
last two oriented as in Experiment~2, and subsample them so that, within each
density, all three families have exactly the same number of edges, at mean
total degree $2$, $4$, and $6$. The observation is $G^{\le L}$ with reach
$L=2,3,4$; three interaction edges run from $s_1$ to $s_2$, and each cell has
$60$ realizations. At reach $L=2$ the three ensembles coincide within their
$95\%$ intervals at every density; at mean total degree $4$ they give
$12.45\pm0.96$, $11.78\pm0.63$, and $12.10\pm1.12$ emergent edges, as the theorem
leads one to expect: at reach two an emergent edge extends an interaction edge
by one internal edge, so emergence is set by the degrees of the boundary
vertices, fixed on average by the matched edge count. With longer reach,
topology enters. At mean
total degree $4$ the scale-free (Barab\'asi--Albert) ensemble separates upward
from $L=3$ ($53.7\pm5.1$ against $43.4\pm3.3$ and $43.4\pm3.1$) and reaches
$1.43$ times the Erd\H{o}s--R\'enyi value at $L=4$ ($154.2\pm16.5$ against
$107.7\pm11.4$; Figure~\ref{fig:exp_depend}(a)): a vertex reached along an edge
has, on average, more than the average
degree~\cite{newman2001random,newman2003structure} (in the randomly oriented
Barab\'asi--Albert parts, also more than the average out-degree), so a longer
route that enters a hub fans out.

\subsection{Coarse-graining granularity}\label{sec:granularity}

\paragraph{Experiment~5.} A fixed partition alone reproduces the observed union
(Proposition~\ref{prop:routes}); composed with $G^{\le2}$, it reads routes and
produces emergence (Example~\ref{ex:coarse}). On joins of two five-vertex parts
(edge probability $0.25$, two interaction edges from $s_1$ to $s_2$), we draw a
balanced random partition of the $10$ vertices into $c=2,\dots,10$ blocks, free
to straddle the parts, and observe with coarse-grained $G^{\le2}$, with $40$
realizations per value of $c$. Emergence rises with granularity, from $0$ at
$c=2$ to $3.55$ at $c=10$, and is highest at the two finest partitions
(Figure~\ref{fig:exp_depend}(b)); the one dip, from $1.80$ at $c=6$ to $1.53$
at $c=7$, lies within the confidence intervals. A coarser partition folds more
boundary-crossing routes into a single block or onto block pairs that the
observed union already contains: the scale of observation sets how much of the
emergence carried by the interface becomes visible.

\subsection{Bridging internally structured modules}\label{sec:bridging}

When two modules are joined by a single interaction edge, the bridge law
(Proposition~\ref{prop:bridge}) gives the channels and the emergent edges from
the reach of its two endpoints, and for random modules
Corollary~\ref{cor:bridge_random} gives their exact expectations.

\paragraph{Experiment~6.} Two modules of eight vertices, random digraphs in
which each ordered pair is an edge with probability $d/7$, so that the internal
mean out-degree $d$ grows from $0$ to $3$ in steps of $0.5$, are joined by a
single interaction edge and observed with $G^{\le4}$, with $1000$ realizations
per step. The emergent edges rise steeply, from $0$ between edgeless modules to
$39.3\pm0.7$ at $d=3$, and the bound rises from $1$ to $111.2\pm3.4$; at every
step both agree with the exact expectations of
Corollary~\ref{cor:bridge_random} ($39.0$ and $110.8$ at $d=3$) within their
$95\%$ confidence intervals (Figure~\ref{fig:exp_depend}(c)). The coupling is
the same single edge throughout; what grows is the internal structure it
connects. As the modules fill in, several channels reach the same pair of
vertices, and the emergent edges rise toward the $n_1n_2-1=63$ cross pairs that
one interaction edge can make emergent. The bridge law and its equality
condition also hold exactly on $3000$ random one-directional joins ($3$ to $12$
vertices per part, $|\mathcal I|\le4$, $L=2,3,4$), and the expected path and
reach counts of Corollary~\ref{cor:bridge_random} agree exactly with an
enumeration of all digraphs on three, four, and five vertices.

\subsection{Scaling to large networks}\label{sec:scale}

\begin{figure}[tbp]
\centering
\includegraphics[width=\textwidth]{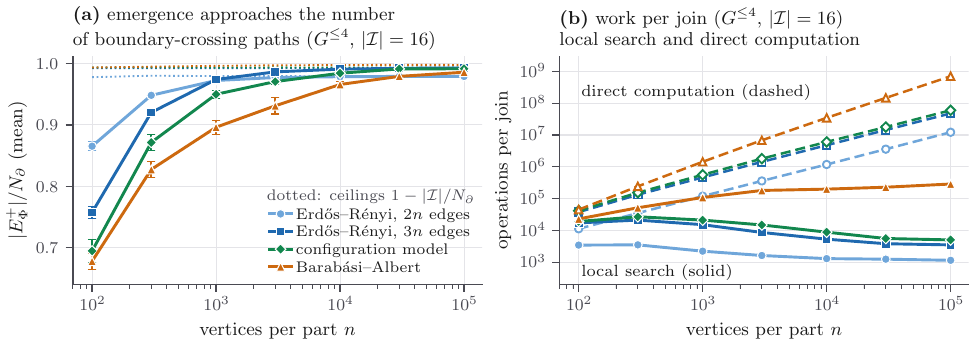}
\caption{\textbf{Experiment~7: the bound on large networks.} Two parts of $n$ vertices each, joined by $16$ random interaction edges from $s_1$ to $s_2$ and observed with $G^{\le4}$, for Erd\H{o}s--R\'enyi parts with $2n$ and $3n$ edges, heavy-tailed configuration parts, and Barab\'asi--Albert parts ($50$ realizations per point for $n\le10^3$, $20$ at $n=10^5$).
\textbf{(a)}~Mean of $|E^+_\Phi|/N_\partial$ with $95\%$ confidence intervals; dotted lines mark the ceiling $1-|\mathcal I|/N_\partial$ set by the interaction edges, which the observed union already contains.
\textbf{(b)}~Mean number of operations per join: adjacency entries scanned by the local search of Corollary~\ref{cor:local} (solid), and breadth-first relaxations to depth $4$ from every vertex of $s_1$, $s_2$, and $s_1\vee s_2$ in the direct computation of $E^+_\Phi$ (dashed; estimated from $256$ sampled sources per graph).}
\alttext{Left: the ratio of emergent edges to boundary-crossing paths, for four families of random networks, rises with the number of vertices per part from between 0.68 and 0.87 at one hundred vertices to between 0.98 and 0.99 at one hundred thousand, meeting dotted ceiling lines near the top. Right: on logarithmic axes, the work of the direct computation grows in proportion to the number of vertices for all four families, from about ten thousand to between ten million and one billion operations, while the work of the local search stays between about one thousand and thirty thousand operations for three families and, for the Barab\'asi--Albert family, grows far more slowly than the direct computation.}
\label{fig:exp_scale}
\end{figure}

\paragraph{Experiment~7.} Parts of $n=10^2$ to $10^5$ vertices from four sparse
directed families (Erd\H{o}s--R\'enyi with $2n$ and $3n$
edges~\cite{erdos1959random}, and the Barab\'asi--Albert and heavy-tailed
configuration models of Experiment~2~\cite{barabasi1999emergence,newman2001random})
are joined by $1$, $4$, or $16$ random interaction edges from $s_1$ to $s_2$, or
$4$ or $16$ split between the two directions, and observed with $G^{\le L}$,
$L=2,3,4$ ($16{,}800$ instances). For graph powers the local search also returns
the emergent edges, since membership of a produced pair in the observed union is
itself a local test (Proposition~\ref{prop:complexity}(ii)). The bound, its
sharpened form, and $N_\partial\le\Delta_W$ hold in every instance, and on the
$14{,}588$ instances that we also computed directly, with up to $10^5$ vertices
per part, both computations give the same emergent edges.

The cost of the local search follows the neighborhoods of the interface
(Figure~\ref{fig:exp_scale}(b)). With $L=4$ and $16$ interaction edges it needs
no more operations at $n=10^5$ than at $n=10^2$ for the Erd\H{o}s--R\'enyi and
configuration parts (at most $5029$ per join), while the work of the direct
computation grows in proportion to $n$, by factors of $1084$ to $1435$; in the
Barab\'asi--Albert parts the local work grows with the degrees of the hubs near
the interface, from $2.3\times10^4$ to $2.9\times10^5$ operations, as
Corollary~\ref{cor:local} predicts, while the direct work grows
$1.6\times10^4$-fold (in time, at most $6.1$~ms per join against $1.2$ to $7.4$~s
for the direct computation at $n=10^5$; Appendix~\ref{app:experiments}).
Meanwhile every boundary-crossing path other
than an interaction edge comes to produce its own emergent edge
(Figure~\ref{fig:exp_scale}(a)). The interaction edges lie in the observed
union, so $|E^+_\Phi|/N_\partial\le1-|\mathcal I|/N_\partial$; with $L=4$ and
$16$ interaction edges from $s_1$ to $s_2$ the mean ratio rises from
$0.68$--$0.87$ at $n=10^2$ to $0.979$--$0.993$ at $n=10^5$, within $0.02$ of this
ceiling for every family and interface ($0.001$ at $L\le3$). In the
Erd\H{o}s--R\'enyi parts at $L=4$ the sharpened bound is attained in $183$ of
$200$ instances at $n=10^5$, against $23$ of $500$ at $n=10^2$: large sparse
networks are locally tree-like~\cite{newman2001random,newman2018networks}, and
their boundary-crossing paths account for emergence almost one for one.

\subsection{Comparison with simpler statistics}\label{sec:baselines}

We compare $N_\partial$, as a predictor of emergence, with four simpler
statistics of a coupling: the cut size $|\mathcal I|$; the \emph{degree sum}
$\sum_{(a,b)\in\mathcal I}(d^-(a)+d^+(b))$ and the \emph{degree product}
$\sum_{(a,b)\in\mathcal I}d^-(a)\,d^+(b)$, with in- and out-degrees taken in the
parts; and the summed edge
betweenness~\cite{freeman1977set,girvan2002community} of the interaction edges
in $s_1\vee s_2$. At reach two, $N_\partial$ is itself a degree statistic,
$|\mathcal I|$ plus the degree sum plus the paths of length two through two
interaction edges, and a single coupling $(a,b)$ has
$|E^+_\Phi|=N_\partial-1=d^-(a)+d^+(b)$ under $G^{\le2}$; the comparison
concerns observations that read three or more steps.

\paragraph{Experiment~8.} Parts of $20$ vertices from the thirteen synthetic
families of Experiment~2 at three densities each (mean total degree $1.9$ to
$8.0$), and random halves of its empirical networks, are joined by
$|\mathcal I|\in\{1,2,3,4\}$ random interaction edges, $30$ times per family,
density, and $|\mathcal I|$ ($5760$ instances). Under $G^{\le3}$, $G^{\le4}$,
and coarse-grained $G^{\le3}$, the Spearman correlation of $N_\partial$ with
$|E^+_\Phi|$ exceeds that of each of the four statistics, every paired
difference having a bootstrap $95\%$ interval above zero
(Table~\ref{tab:baselines}). The advantage persists within the $168$ cells of
fixed family, density, and $|\mathcal I|$, in each of the $16$ families under
$G^{\le3}$, and for parts of $12$ and $30$ vertices. To rank candidate couplings, we score every single
interaction edge $(a,b)\in V_1\times V_2$ for $480$ pairs of parts
($178{,}350$ candidates): the candidate with the largest $N_\partial$ has
maximum emergence in $80\%$ of the pairs under $G^{\le3}$ and $55\%$ under
$G^{\le4}$, against at most $63\%$ and $37\%$ for the other statistics, and
realizes on average $99\%$ and $96\%$ of the largest emergence available.

\begin{table}[tbp]
\centering
\caption{\textbf{Experiment~8: predicting and ranking emergence.} Left:
Spearman correlation with $|E^+_\Phi|$ over $5760$ couplings (bootstrap $95\%$
intervals within $\pm0.03$). Right: fraction of $480$ pairs of parts in which
the candidate single coupling that scores highest has maximum $|E^+_\Phi|$
(ties in the score split evenly; intervals within $\pm0.05$); for a single
coupling the cut size ranks at random.}
\label{tab:baselines}
\footnotesize
\setlength{\tabcolsep}{4pt}
\begin{tabular}{@{}lrrr@{\hspace{14pt}}rrr@{}}
\toprule
& \multicolumn{3}{c}{rank correlation} & \multicolumn{3}{c}{best coupling found}\\
\cmidrule(lr){2-4}\cmidrule(lr){5-7}
statistic & $G^{\le3}$ & $G^{\le4}$ & c.-g.\ $G^{\le3}$ & $G^{\le3}$ & $G^{\le4}$ & c.-g.\ $G^{\le3}$\\
\midrule
$N_\partial$            & 0.989 & 0.971 & 0.932 & 0.80 & 0.55 & 0.40\\
degree product          & 0.958 & 0.943 & 0.895 & 0.63 & 0.37 & 0.37\\
degree sum              & 0.950 & 0.891 & 0.888 & 0.62 & 0.37 & 0.37\\
edge betweenness        & 0.841 & 0.894 & 0.822 & 0.05 & 0.06 & 0.04\\
cut size $|\mathcal I|$ & 0.509 & 0.417 & 0.494 & 0.01 & 0.01 & 0.02\\
\bottomrule
\end{tabular}
\end{table}

\subsection{Real directed networks: attribution and intervention}\label{sec:real}

\paragraph{Experiment~9.} We apply the bound and the shares to three directed
networks whose parts are recorded in the data (Table~\ref{tab:real} in
Appendix~\ref{app:experiments}): the chemical-synapse connectome of the
\emph{C.~elegans} hermaphrodite~\cite{cook2019whole}, with its $83$ sensory
neurons, $81$ interneurons, and $108$ motor neurons as parts; the e-mail network
of a European research institution~\cite{leskovec2007graph,yin2017local}, with
its $42$ departments as parts (Remark~\ref{rem:multipart}); and the hyperlinks
among U.S.\ political blogs~\cite{adamic2005political}, with the liberal and the
conservative blogs as parts. The interaction is the set of edges between parts,
and the observations are $G^{\le2}$, $G^{\le3}$, and $G^{\le2}$ coarse-grained
by blocks from the data (neuron classes~\cite{white1986structure}, departments).
The emergent edges recovered by Theorem~\ref{thm:main}(i) from the paths listed
by the local search coincide with those of a direct computation; for the e-mail
network under $G^{\le3}$, its $8.4\times10^{7}$ boundary-crossing paths are
counted by the path-count identity of Corollary~\ref{cor:pathcount}.

\begin{figure}[tbp]
\centering
\includegraphics[width=\textwidth]{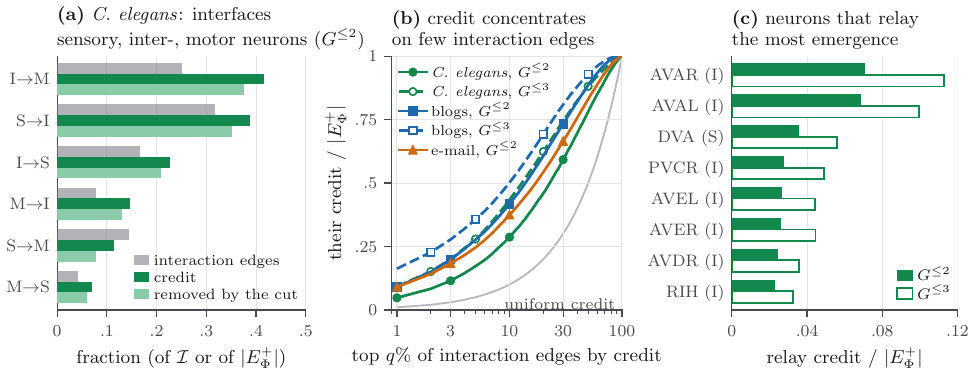}
\caption{\textbf{Experiment~9: attribution in real directed networks.}
\textbf{(a)}~\emph{C.~elegans} connectome under $G^{\le2}$: for each interface (ordered pair of neuron types; S sensory, I interneuron, M motor), its fraction of the interaction edges (gray), its credit as a fraction of $|E^+_\Phi|$ (dark), and the fraction of emergent edges removed by cutting it (light), as predicted by Proposition~\ref{prop:intervene} and confirmed by recomputation.
\textbf{(b)}~Credit of the top $q\%$ of interaction edges, taken as a set, as a fraction of $|E^+_\Phi|$ (logarithmic $q$ axis; gray, uniform credit).
\textbf{(c)}~Relay credit of the eight neurons that relay the most emergence under $G^{\le2}$ (the total share of the producing routes that pass through the neuron as a middle vertex), as a fraction of $|E^+_\Phi|$.}
\alttext{Three panels. Left: grouped horizontal bars for the six interfaces between sensory neurons, interneurons, and motor neurons; the interneuron-to-motor and sensory-to-interneuron interfaces carry the largest credit, about 0.42 and 0.39 of the emergent edges, more than their fractions of the interaction edges, and cutting them removes slightly less than their credit. Middle: five rising concentration curves for the connectome, the blog network, and the e-mail network, all far above the curve of uniform credit; the top ten percent of interaction edges carry between about 0.29 and 0.50 of the emergence. Right: paired horizontal bars for eight neurons, led by the interneurons AVAR and AVAL, with larger relay credit at reach three than at reach two.}
\label{fig:exp_real}
\end{figure}

Emergence is abundant: under $G^{\le2}$ the connectome gains $16{,}944$
emergent edges and the blog network $37{,}290$, from $42{,}682$ and $89{,}544$
boundary-crossing paths. In these clustered, densely connected networks an
emergent edge is produced by several routes, on average $1.8$ in the connectome
and $1.9$ in the blog network under $G^{\le2}$ and $13.2$ and $16.0$ under
$G^{\le3}$; $N_\partial$ counts all of them, and the shares divide each emergent
edge among them. In the connectome, $251$ of the $8964$ ordered pairs of a
sensory and a motor neuron are joined by a synapse; $G^{\le2}$ adds $3283$
sensory-to-motor edges, $1963$ of them produced only through interneurons, and
$G^{\le3}$ adds $7612$, of which $4265$ are linked only through interneurons.
As middle vertices of the producing routes, interneurons relay $66\%$ of the
emergence under $G^{\le2}$, led by AVAR and AVAL ($7.1\%$ and $6.9\%$) and
followed by DVA, PVCR, AVEL, AVER, and AVDR (Figure~\ref{fig:exp_real}(c)); AVD
and PVC lie on the pathways from touch receptors to
movement~\cite{chalfie1985neural}. The synapses from interneurons to motor
neurons are $25\%$ of the interaction edges but carry credit
$0.42\,|E^+_\Phi|$, and cutting them removes $38\%$ of the emergent edges; the
direct sensory-to-motor synapses are $15\%$ and carry $0.12\,|E^+_\Phi|$
(Figure~\ref{fig:exp_real}(a)).

Credit concentrates on few interaction edges (Figure~\ref{fig:exp_real}(b)):
under $G^{\le3}$ the top $10\%$ of them carry credit $0.43\,|E^+_\Phi|$ in the
connectome and $0.50\,|E^+_\Phi|$ in the blog network, where the ten links of
largest credit, $0.6\%$ of the $1683$ links between the camps, carry
$0.12\,|E^+_\Phi|$ and a single link is necessary for $4598$ emergent edges.
Among departments, $387$ ordered pairs without e-mail between them are linked
through a member of a third department, and one department brokers $40\%$ of
them. Removing interaction edges changes emergence exactly as
Proposition~\ref{prop:intervene} predicts: in all $42$ interventions (a whole
interface, the ten edges of largest credit, or the edge necessary for the most
emergent edges), the emergent edges that vanished were exactly those whose whole
share passes through the removed edges, the new ones were exactly the images of
removed edges that other routes still produce, and the loss never exceeded the
credit removed. The shares and Proposition~\ref{prop:intervene} also hold
exactly on $3200$ random joins of two and three parts, with all $20{,}872$
removals of sets of interaction edges (Appendix~\ref{app:experiments}).

\subsection{Summary}\label{sec:exp_summary}

The identity and the bound hold exactly in every experiment, on networks of up
to $10^5$ vertices per part. Emergence requires observations that read routes
and attains the sharpened bound where the routes through the interface are
distinct; it rises with the reach and granularity of the observation, depends
on topology once the reach is three or more, and surges, as the bridge law
predicts, when one edge bridges internally structured modules. The local search
computes the bound at a cost set by the interface, $N_\partial$ predicts and
ranks couplings better than the cut size, the degrees at the interface, and
edge betweenness, and on real networks the shares locate the interfaces, edges,
and relay vertices that carry emergence and predict exactly what removing
interaction edges does.

\section{Discussion}\label{sec:discussion}

\subsection{Relation to existing work}\label{sec:related}

\paragraph{Generative effects.}  For systems combined by a semilattice join,
Adam and Dahleh~\cite{adam2019generativity}, following Adam's
thesis~\cite{adam2017thesis}, say that an observation $\Phi$ (a \emph{veil})
sustains generative effects when $\Phi(s\vee s')\neq\Phi(s)\vee\Phi(s')$ for
some systems $s,s'$, with reachability in a union of directed graphs as one
example, and ask how to express $\Phi(s\vee s')$ through the parts.  Our
discrepancy adapts this inequality to graphs coupled by an explicit
interaction and measures it by the edges gained and lost, and
Theorem~\ref{thm:main} gives an answer for observations acting on paths: the
whole adds exactly the new edges produced by boundary-crossing paths.  Our
prior work counted the paths that start at the heads of the edges an
observation deletes~\cite{li2025categorical}; here the counted paths cross
the interaction and bound and attribute emergence.

\paragraph{Measures of emergence.}  Causal emergence compares the
effective information of micro- and macro-level
descriptions~\cite{hoel2013quantifying,hoel2017map}, including networks whose
nodes are grouped into macro-nodes~\cite{klein2020emergence}; information
decomposition detects synergy of a whole beyond its parts in multivariate
data~\cite{rosas2020reconciling,mediano2022greater,varley2022emergence};
integrated information measures how far a whole exceeds its parts across its
weakest partition~\cite{tononi2004information,oizumi2014phenomenology}; and
closure criteria organize emergent levels into
hierarchies~\cite{rosas2024software}.  These measures are
computed from dynamics or statistics, and many search over partitions or
coarse-grainings.  Ours is computed from structure, for given parts and
observation, and assigns each unit of emergence to routes.  Weak emergence
holds that macro properties can be derived from the micro level only by
working through its interactions~\cite{bedau1997weak}; for observations acting
on paths, the main theorem names the routes of such a derivation.

\paragraph{Interacting and multilayer networks.}  Multilayer and
interdependent networks make the coupling between parts
explicit~\cite{kivela2014multilayer,dedomenico2013mathematical,boccaletti2014structure,gao2012networks}
and show that it transforms system-level
behavior~\cite{radicchi2013abrupt,gomez2013diffusion,cardillo2013emergence}:
interdependence produces abrupt cascades of
failures~\cite{buldyrev2010catastrophic} that tuning the interconnection can
suppress or amplify~\cite{brummitt2012suppressing}, as cutting channels and
bridging modules do for emergence here (Propositions~\ref{prop:cut}
and~\ref{prop:bridge}).  That literature studies how coupling changes
processes and global descriptors; we bound what an observation of the coupled
structure reveals by the routes through the coupling.

\paragraph{Paths, walks, and flows.}  Many network measures are built from
the routes along which something flows~\cite{borgatti2005centrality}.  Katz
centrality counts attenuated walks~\cite{katz1953new}, betweenness counts the
shortest paths that pass through a vertex or an
edge~\cite{freeman1977set,girvan2002community}, structural-hole theory credits a
vertex that joins otherwise unconnected contacts with an
advantage~\cite{burt1992structural}, and communicability and global efficiency
aggregate walks and shortest paths, respectively, over vertex
pairs~\cite{estrada2008communicability,latora2001efficient}.  These tools
describe a single network.  We relate a whole to its parts: the
boundary-crossing paths account for the structure found only in the whole,
exactly for observables that sum over paths (Theorem~\ref{thm:scalar}), and
predict it better than edge betweenness or interface degrees once the
observation reads three or more steps (Section~\ref{sec:baselines}).

\paragraph{Incremental computation.}  Accounting derivation by derivation for
what an insertion adds is classical for single queries: delta rules find the
new answers of a database view among the derivations that use an inserted
fact, and a counting algorithm tracks how many derivations each answer
has~\cite{gupta1993maintaining}; incremental algorithms maintain the
transitive closure of a directed graph under edge
insertions~\cite{ibaraki1983online,italiano1986amortized}, where inserting
$(a,b)$ joins each vertex reaching $a$ to each vertex reached from $b$, the
pairs that also give a bridge its edge betweenness
(Proposition~\ref{prop:bridge}); and provenance semirings sum over derivations
as Remark~\ref{rem:weighted} sums over paths~\cite{green2007provenance}.
Lemma~\ref{lem:crossing} is, in these terms, the delta rule for inserting the
interaction edges into a view of paths of bounded length; we read it as
emergence, bounded and attributed by interface routes for a whole class of
observations.

\paragraph{Coarse-graining and renormalization.}  Network renormalization
(see~\cite{gabrielli2025network} for a review) coarse-grains a network across
scales, by box covering~\cite{song2005selfsimilarity}, by preserving the slow
modes of random walks~\cite{gfeller2007spectral}, through a hidden metric
geometry~\cite{garciaperez2018multiscale}, or by diffusion along the routes by
which signals spread, the scheme closest in spirit to
ours~\cite{villegas2023laplacian}, and compares a network with its own coarse
versions.  We compare a whole with its parts at a
fixed scale: coarse-graining by a fixed partition is one of our observations,
and composed with a graph power it produces emergence that the
boundary-crossing paths bound (Example~\ref{ex:coarse},
Section~\ref{sec:granularity}).

\paragraph{Graph operators and cuts.}  Operators that send graphs to graphs,
such as powers and line graphs, are classical objects of
study~\cite{prisner1995graph}, and for undirected graphs, cuts and their
expansion are central to spectral graph theory~\cite{chung1997spectral}.  We
ask how an operator interacts with joining two graphs across a cut: for the
observations acting on paths (Definition~\ref{def:path_obs}), the answer
counts the routes through the cut rather than its edges, and a single edge can
carry many routes (Example~\ref{ex:bowtie}).

\paragraph{Emergence and circuits in learned systems.}  Large language models
show abilities absent at smaller scales and hard to
predict~\cite{wei2022emergent,ganguli2022predictability}, and whether an
ability looks emergent depends on the metric used to observe
it~\cite{schaeffer2023emergent}, as emergence here depends on the
observation.  In-context learning improves abruptly as induction heads, a
circuit of attention heads in different layers,
form~\cite{olsson2022incontext}: a capability emerges as a route forms across
components.  Mechanistic interpretability attributes behavior to circuits,
subgraphs of the computational graph found by inspection or
intervention~\cite{olah2020zoom,wang2023interpretability,conmy2023automated},
and path expansions decompose residual networks and attention-only
transformers into the contributions of individual
routes~\cite{veit2016residual,elhage2021mathematical}.  The shares of
Definition~\ref{def:shares} are a graph-theoretic counterpart of this
attribution: they divide each emergent feature among the routes across a
chosen interface that produce it, with a guaranteed count, and they predict
exactly the effect of removing interaction edges
(Proposition~\ref{prop:intervene}).

\subsection{Applications}\label{sec:applications}

\paragraph{Brain networks.}  Brain function arises from structural
connectivity and interactions among
modules~\cite{bullmore2009complex,park2013structural}, resolved by recording
modalities at scales from single units to regions~\cite{bassett2017network}.
Each modality can be read as an observation of one structural network, such as
a coarse-graining into regions composed with a graph power whose reach
reflects multistep signaling; since at mean total degree $4$ the influence of
topology grows with reach (Section~\ref{sec:topology}), connectomes can be
compared under observations of increasing reach.  Typing neurons by function
makes a connectome a system of interacting parts: in \emph{C.~elegans}, the
credits of interfaces and relay neurons rank the synapses and interneurons
through which sensory-to-motor reach arises, and
Proposition~\ref{prop:intervene} states which part of it a removal of synapses
abolishes (Section~\ref{sec:real}).

\paragraph{Coupling modules in engineered and biological systems.}
Engineering a modular system, such as a power grid coupled to the
communication network that controls it~\cite{buldyrev2010catastrophic} or a
gene regulatory circuit~\cite{alon2007network}, involves choosing where to add
or cut interaction edges.  Because $N_\partial$ is computed locally, candidate
couplings can be ranked by the boundary-crossing paths they would open before
any is made.  In Experiment~8 the coupling that $N_\partial$ ranks first
creates the maximum emergence in $80\%$ of the cases under $G^{\le3}$, against
$63\%$ for the best degree statistic (Section~\ref{sec:baselines}).  For a candidate cut,
Proposition~\ref{prop:intervene} names the emergent edges it destroys, those
whose whole share passes through the removed edges, at most their credit.

\paragraph{Interface-level attribution in computational graphs.}  In a
feedforward network, a directed acyclic graph, blocks of layers define parts
and interfaces.  Under an observation acting on paths, such as a
coarse-grained power, the shares identify the routes across each interface
responsible for each emergent feature, Proposition~\ref{prop:intervene}
predicts the effect of removing connections between blocks, and the number
$|\mathcal P_\partial|$ of boundary-crossing paths, which bounds $N_\partial$
and the whole discrepancy, is computed exactly by sparse matrix--vector
products (Appendix~\ref{app:computation}).

\subsection{Open questions}\label{sec:open}

\paragraph{Edges confirmed by an alternative route.}  Some observations, such
as the $k$-neighborhood restriction, keep an edge $(u,v)$ when another short
route also joins $u$ to $v$.  Extending the witness map from single paths to
such edge--route pairs would bring these observations within the bound.

\paragraph{Partitions computed from the graph.}  Extending the framework from
partitions fixed in advance to partitions computed from the graph, such as
communities~\cite{fortunato2010community} or strongly connected
components~\cite{bang2009digraphs}, whose vertex map changes when the parts
are joined, would cover these widely used observations, and a search for the
interface that carries the most boundary-crossing paths would be a structural
counterpart of the partition search of integrated
information~\cite{oizumi2014phenomenology}.

\paragraph{Dynamics on networks.}  Letting the network and the observation
evolve in time, or observing a process on the network, such as the Boolean
dynamics of gene regulation~\cite{kauffman1993origins}, would connect the
structural bound to dynamical emergence and to the information-theoretic
measures above.

\paragraph{Overlapping parts.}  The scalar identity (Theorem~\ref{thm:scalar})
accounts for the paths shared by overlapping parts; extending the witness map
to such parts, and to interactions coupling several parts at once, would reach
overlapping communities~\cite{fortunato2010community} and higher-order
structures.

\paragraph{Emergence from summary statistics.}  The bridge law gives the
expected emergence of a single bridge between random modules exactly
(Corollary~\ref{cor:bridge_random}).  Extending such expectations to many
interaction edges and to random graphs with heterogeneous degrees, and
relating $N_\partial$ to the expansion of the cut through Cheeger-type
inequalities for the underlying undirected graph~\cite{chung1997spectral},
would estimate emergence from summary statistics and connect the bound to the
walk counts of Appendix~\ref{app:potential}.

\section{Conclusion}\label{sec:conclusion}

We have proposed a structural view of emergence: a system is a network of
interacting parts, an observation is a way of looking at it, and emergence is
what observing the coupled whole reveals beyond the observed parts and their
interaction. For observations that act on paths, the emergent edges are
exactly the new edges produced by boundary-crossing paths, and distinct
emergent edges have distinct witnesses. The number $N_\partial$ of
boundary-crossing paths that produce an edge therefore bounds the number of
emergent edges, with an exact criterion for equality. The
number of all boundary-crossing paths, producing or not, bounds the gained and
lost edges together, and these paths account exactly for the emergence of
observables that sum over paths.

To our knowledge, the result is new: no previous work bounds and attributes
emergence by the routes that cross the interface between parts. It is
mechanistic: it anticipates emergence and shows how to suppress it by cutting
channels or amplify it by bridging richly connected modules, where a single
bridge pairs the in-reach of its tail with the out-reach of its head, with
exact expectations for random modules. It attributes each emergent feature to
the components and interactions along the routes that produce it, and predicts
exactly how emergence changes when interaction edges are removed. It is
computed locally around the interface, in milliseconds on parts of $10^5$
vertices, and it is grounded in the fact that a network and its paths carry
the same information. Exact computations on $8935$ instances confirm the
identity and the bound in every case. Other experiments show that at mean
total degree $4$ the influence of topology grows with reach, that emergence
rises with coarse-graining granularity and surges when one edge bridges
internally structured modules, that for observations reading three or more
steps the boundary-crossing paths predict emergence better than the cut size,
the interface degrees, and edge betweenness, and that in real directed
networks they locate emergence at specific interfaces and relay vertices, such
as the interneurons of \emph{C.~elegans}. Extending the principle to
partitions computed from the network and to dynamics would carry a simple
message from structure to function: emergence travels along the routes a
coupling opens, and counting them tells how much to expect and where it comes
from.

\appendix
\section{Computing the discrepancy and the bound}\label{app:computation}

Write $m$ for the number of edges of $s_1\vee s_2$, $d$ for its maximum in- or
out-degree, $A_G$ for the adjacency matrix of a graph $G$, and
\[
W_L(G):=\sum_{k=1}^{L}\mathbf 1^{\top}A_G^{\,k}\,\mathbf 1
\]
for the number of walks of length $1$ to $L$ in $G$, since the entry $(u,v)$ of
$A_G^{\,k}$ counts the walks of length $k$ from $u$ to
$v$~\cite{biggs1993algebraic,newman2018networks}.

\begin{proposition}[Computation]\label{prop:complexity}
Let $s_1,s_2$ have disjoint vertex sets and let $\Phi$ act on paths of length at
most $L$.
\begin{enumerate}[label=(\roman*)]
\item The discrepancy $\|\Delta_\Phi\|$ is computed directly in time
$O(T_\Phi+e_\Phi+|\mathcal I|)$, where $T_\Phi$ is the time to apply $\Phi$ to
$s_1$, $s_2$, and $s_1\vee s_2$, and $e_\Phi$ is the number of edges of the three
outputs.
\item The set $\mathcal S_\partial$, and hence $N_\partial$ and the set
$\prodedge{\mathcal S_\partial}$ that contains every emergent edge, is found by a
local search in time $O(L\,|\mathcal I|\,d^{\,L-1})$ for $d\ge2$, a production test
counting as one step, without applying $\Phi$ to $s_1$, $s_2$, or their join.
Building a witness map also requires deciding which edges of
$\prodedge{\mathcal S_\partial}$ lie in $E(\Phi(s_1)\vee\Phi(s_2))$, which may
need $\Phi$ on the parts; for graph powers this test is itself a local search
of depth $L$.
\item The walk discrepancy $\Delta_W:=W_L(s_1\vee s_2)-W_L(s_1)-W_L(s_2)$ equals the
number of walks of length at most $L$ in $s_1\vee s_2$ that traverse an
interaction edge.  Hence
$\Delta_W\ge|\mathcal P_\partial|\ge N_\partial\ge|E^+_\Phi|$, and
$|\mathcal P_\partial|\ge\|\Delta_\Phi\|$ by Corollary~\ref{cor:discrepancy}.
The walk discrepancy is computed by $3L$ sparse matrix--vector products, in time $O(mL)$ when
every vertex has an edge.  If $s_1\vee s_2$ has no directed cycle of length
at most $L$, for instance if it is acyclic, then $\Delta_W=|\mathcal P_\partial|$.
\end{enumerate}
\end{proposition}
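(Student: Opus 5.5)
I will prove the three parts separately, reusing the decomposition of boundary-crossing paths from the proof of Corollary~\ref{cor:local}(iii) and the identity of Theorem~\ref{thm:scalar}.

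For (i), the direct computation applies $\Phi$ to $s_1$, $s_2$, and $s_1\vee s_2$ in time $T_\Phi$. It then forms the observed union $E(\Phi(s_1))\cup E(\Phi(s_2))\cup\varphi(\mathcal I)$ in time $O(e_\Phi+|\mathcal I|)$, applying $\varphi$ edge by edge and dropping loops. Finally it takes the two set differences of Definition~\ref{def:discrepancy} by hashing, in expected linear time, or by sorting into a fixed order on $\Omega'\times\Omega'$.

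For (ii), every path in $\mathcal P_\partial$ has a \emph{first} interaction edge $(u,v)$ on it. I would enumerate each path exactly once by this first edge:
\begin{itemize}
\item do a backward depth-first search of depth at most $L-1$ from $u$ that uses no interaction edge, i.e.\ stays inside the part containing $u$;
\item do a forward search of combined depth at most $L-1$ from $v$ that may use further interaction edges;
\item concatenate, reject repeated vertices (and, when $\Phi$ reads cycles, also allow closing back to the start);
\item test each resulting path for production.
\end{itemize}
Requiring the backward portion to avoid $\mathcal I$ prevents double counting across different interaction edges. The number of (prefix, suffix) pairs with $i+j\le L-1$ is at most $\sum_{t=0}^{L-1}(t+1)d^{\,t}$. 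For $d\ge2$, the bound $\sum_{t}(t+1)d^t\le L\sum_{t\le L-1}d^t\le 2Ld^{L-1}$ gives the stated cost, with one unit per production test.

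For the witness map, each produced edge must be tested for membership in $U$. The cost here depends on $\Phi$; I would only state that for graph powers, deciding whether $(x,y)$ lies in $E(\Phi(s_i))$ is a depth-$L$ BFS from $x$ within $s_i$. Membership in $\varphi(\mathcal I)$ is a table lookup.

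For (iii), $W_L$ is a sum over walks of length at most $L$, with weight $1$, by the formula for $A_G^k$. The walks of the join that avoid $\mathcal I$ stay in one part, by the argument of Lemma~\ref{lem:crossing}, which uses only that edges of $s_i$ have both ends in $V_i$. A walk that stays in one part lies in exactly one of them, since the vertex sets are disjoint. Subtracting therefore leaves exactly the walks that traverse an interaction edge.

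The chain of inequalities follows from three inclusions:
\begin{itemize}
\item simple paths are walks, and boundary-crossing paths are boundary-crossing walks, so $\Delta_W\ge|\mathcal P_\partial|$;
\item $\mathcal S_\partial\subseteq\mathcal P_\partial$;
\item Theorem~\ref{thm:main}(ii) and Corollary~\ref{cor:discrepancy} give the remaining steps.
\end{itemize}
One subtlety is that when $\Phi$ reads cycles, $\mathcal P_\partial$ includes cycles; these are closed walks, so the inequality still holds.

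For the computational cost, compute $x_k=A^k\mathbf 1$ iteratively, $L$ products per graph and three graphs, each in $O(m)$; "every vertex has an edge" makes vertex counts $O(m)$. For the acyclic case, a walk of length at most $L$ that repeats a vertex contains a closed subwalk of length at most $L$, which contains a directed cycle of length at most $L$. If there is no such cycle, every walk is a simple path and equality holds.

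The main obstacle is the exact-once enumeration in (ii) when interaction edges run in both directions. Anchoring each path at its first interaction edge, with an $\mathcal I$-free backward search, resolves it.
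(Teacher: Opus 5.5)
Your proposal is correct and follows the paper's proof step by step: a hash comparison for (i); for (ii), a backward search inside the part of $u$ and a forward search in the join, anchored at the first interaction edge so that each path (or rotation of a cycle) is listed once, with the count $\sum_{t=0}^{L-1}(t+1)d^{\,t}\le 2Ld^{\,L-1}$, as in the paper's proof and its remark after the proposition; and for (iii), the same disjointness argument for walks, $L$ matrix--vector products per graph, and extraction of a short cycle from a repeated vertex. The one point to tighten is that the stated time needs $O(1)$ work per candidate route, which you get by growing each suffix from its prefix one vertex at a time with a visited set (the paper's ``each obtained from a shorter one in one step''); concatenating independently listed halves and checking distinctness afterwards would cost an extra factor of $L$.
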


\begin{proof}
(i)~Apply $\Phi$ three times, add $\varphi(\mathcal I)$ to the union of the
outputs for the parts, and compare edge sets with a hash table.

(ii)~For each interaction edge $(u,v)$, a depth-first search backward from $u$
and forward from $v$ lists the routes that take $i$ steps into $u$ and $j$
steps out of $v$, with $i+j\le L-1$, and keeps those whose vertices are distinct;
for observations that read cycles, the forward search also records the cycles
that return from $v$ to $u$, together with their rotations.  Each route is
tested for production.  Since there are at most $d^{\,i}d^{\,j}$ routes for each
split, an interaction edge yields at most
$\sum_{t=0}^{L-1}(t+1)d^{\,t}\le 2Ld^{\,L-1}$ paths for $d\ge2$, each obtained from
a shorter one in one step, and at most as many rotated cycles.  A path or cycle
that traverses several interaction edges is kept only once.  For graph powers,
$(x,y)\in E(\Phi(s_1)\vee\Phi(s_2))$ exactly when $(x,y)\in\mathcal I$ or a path
of length at most $L$ inside one part joins $x$ to $y$.

(iii)~Every walk of $s_i$ is a walk of $s_1\vee s_2$, and no walk belongs to both
parts because $V_1\cap V_2=\varnothing$.  A walk of $s_1\vee s_2$ that traverses
no interaction edge moves along edges of $E_1\cup E_2$, which never change sides,
so it is a walk of $s_1$ or of $s_2$.  Hence $\Delta_W$ counts the walks of the
join that traverse an interaction edge.  Every path or cycle in
$\mathcal P_\partial$ is such a walk, and distinct ones are distinct walks, so
$\Delta_W\ge|\mathcal P_\partial|\ge N_\partial\ge|E^+_\Phi|$ by
Theorem~\ref{thm:main}.  For each of the three graphs,
$W_L(G)=\sum_{k=1}^{L}\mathbf 1^{\top}x_k$ with $x_0=\mathbf 1$ and
$x_k=A_Gx_{k-1}$, which takes $L$ sparse matrix--vector products of cost $O(m)$ each.  Finally,
a walk of length at most $L$ that repeats a vertex contains a directed cycle of
length at most $L$; without such cycles every walk is a path and
$\Delta_W=|\mathcal P_\partial|$.
\end{proof}

To list each boundary-crossing path once, the search of (ii) assigns it to
the first interaction edge $(u,v)$ it traverses: the portion before that edge
lies in the part $s$ that contains $u$ and is found by a backward search
there, the rest by a forward search in $s_1\vee s_2$, and every combination is
a path unless the forward search re-enters $s$.  For graph powers, the
forward routes that stay in the part $s'$ of $v$ produce exactly the pairs
$(x,y)$ with $\mathrm{dist}_s(x,u)+\mathrm{dist}_{s'}(v,y)\le L-1$, which two
bounded breadth-first searches list; Experiment~7 computes the emergent edges
in this way on parts of up to $10^5$ vertices (Section~\ref{sec:scale}).

Counting the simple paths between two vertices is \#P-complete in
general~\cite{valiant1979complexity}; the bound needs only the paths of length at most
$L$ through the interaction edges, which the local search of (ii) lists at a
cost independent of the size of the parts.  Computing $\Delta_W$ is cheaper
than that search when $|\mathcal I|\,d^{\,L-1}$ exceeds $m$, as for dense
interfaces, high-degree boundary vertices, or long reach, since its cost grows
linearly in $L$.  On acyclic systems, such as feedforward networks, $\Delta_W$
is the path-count discrepancy of Corollary~\ref{cor:pathcount}, which equals
$N_\partial$ for $G^{\le L}$.

\section{The emergence potential of a subsystem}\label{app:potential}

The partner of a subsystem is often unknown.  One then asks how much emergence a
subsystem $s$ of a larger system $G$ can produce with any partner that $G$
provides.  The walks through the boundary of $s$ answer this question.
Throughout, $G$ is a finite directed graph with adjacency matrix $A_G$,
$s\subseteq G$ is a subgraph, and $\Delta_{\#}$ denotes the scalar discrepancy
(Definition~\ref{def:discrepancy_scalar}) of the path count
$G\mapsto|\mathcal P_L(G)|$, the number of paths of length at most $L$.

\begin{definition}[Emergence potential]\label{def:potential}
The \emph{boundary} of $s$ in $G$ is the set $\partial(s)$ of vertices of $s$
joined by an edge of $G$ to a vertex outside $V(s)$.  A \emph{coupling} of $s$ in
$G$ is a subgraph $s'\subseteq G$ with $V(s')\cap V(s)=\varnothing$ together with
an interaction $\mathcal I$ consisting of edges of $G$ between $V(s)$ and $V(s')$.
The \emph{emergence potential} of $s$ in $G$ is
\[
\mathcal E^L(s;G):=\max_{(s',\mathcal I)}\Delta_{\#}(s,s';\mathcal I),
\]
the maximum over all couplings of $s$ in $G$.
\end{definition}

For every coupling, the endpoints in $V(s)$ of the interaction edges lie in
$\partial(s)$.  By Corollary~\ref{cor:pathcount},
$\Delta_{\#}(s,s';\mathcal I)=|\mathcal P_\partial|$ is the number of
boundary-crossing paths of the coupling.  Adding edges to $s'$ or to
$\mathcal I$ only adds such paths, so the maximum is attained by the subgraph of
$G$ induced on $V(G)\setminus V(s)$, coupled to $s$ through every edge of $G$
between them.  The potential thus measures the couplings the host actually
provides.

\begin{definition}[Walks through a set]\label{def:through_boundary}
For a vertex $v$ of $G$, let $\mathrm{in}_a(v):=(\mathbf 1^{\top}A_G^{\,a})_v$ and
$\mathrm{out}_b(v):=(A_G^{\,b}\mathbf 1)_v$ be the numbers of walks of length $a$
ending at $v$ and of length $b$ starting at $v$, with
$\mathrm{in}_0(v)=\mathrm{out}_0(v)=1$.  For $S\subseteq V(G)$, set
\[
W^L_{\ni}(S;G):=\sum_{v\in S}\ \sum_{k=1}^{L}\ \sum_{a=0}^{k}
\mathrm{in}_a(v)\,\mathrm{out}_{k-a}(v).
\]
\end{definition}

\begin{lemma}[Counting walks through a set]\label{lem:through_boundary}
The number of walks of $G$ of length $1$ to $L$ that visit a vertex of $S$ is
at most $W^L_{\ni}(S;G)$.
\end{lemma}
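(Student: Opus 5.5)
The plan is a union bound over the vertices of $S$, combined with an exact count of the walks that pass through a given vertex at a given position. A walk of length $k$ is a sequence $(w_0,\dots,w_k)$ with $(w_{i-1},w_i)\in E(G)$. For $v\in S$, $k\in\{1,\dots,L\}$ and $a\in\{0,\dots,k\}$, let $\mathcal W_{v,k,a}$ be the set of walks of length $k$ with $w_a=v$.

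First I show that $|\mathcal W_{v,k,a}|=\mathrm{in}_a(v)\,\mathrm{out}_{k-a}(v)$. Splitting a walk at position $a$ gives a bijection between $\mathcal W_{v,k,a}$ and pairs consisting of a walk of length $a$ ending at $v$ and a walk of length $k-a$ starting at $v$. Concatenation at $v$ inverts this splitting. Walks are allowed to repeat vertices, so no distinctness condition is lost, and the two factors may be chosen independently. The walk counts are the standard ones from Appendix~\ref{app:computation}: the entry $(u,v)$ of $A_G^{\,a}$ counts the walks of length $a$ from $u$ to $v$. Summing over $u$ gives $(\mathbf 1^{\top}A_G^{\,a})_v=\mathrm{in}_a(v)$, and the symmetric argument gives $\mathrm{out}_b(v)=(A_G^{\,b}\mathbf 1)_v$. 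The conventions $\mathrm{in}_0=\mathrm{out}_0=1$ cover the walks that start or end at $v$: for $a=0$ the walk starts at $v$, and for $a=k$ it ends there.

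Second, I check that the sets $\mathcal W_{v,k,a}$ cover every relevant walk. Any walk of length $k\in\{1,\dots,L\}$ that visits a vertex of $S$ has some index $a$ with $w_a=:v\in S$, so it lies in $\mathcal W_{v,k,a}$. The set of such walks is therefore contained in $\bigcup_{v\in S}\bigcup_{k=1}^{L}\bigcup_{a=0}^{k}\mathcal W_{v,k,a}$. Subadditivity of cardinality then bounds its size by $\sum_{v,k,a}\mathrm{in}_a(v)\,\mathrm{out}_{k-a}(v)=W^L_{\ni}(S;G)$.

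The only point needing care is that this is an overcount and not an equality. A walk that visits $S$ several times, or visits the same $v$ at several positions, is counted once for each pair $(v,a)$ with $w_a=v\in S$. That is why the lemma asserts only an upper bound. Walks of different lengths never coincide, so the sum over $k$ introduces no further overlap. I expect no real obstacle beyond stating the splitting bijection cleanly.
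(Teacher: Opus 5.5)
Your proposal is correct and follows the paper's proof: cut each walk at a position where it sits at some $v\in S$, count the walks of length $k$ with $v$ at position $a$ exactly as $\mathrm{in}_a(v)\,\mathrm{out}_{k-a}(v)$, and sum over $v$, $k$, and $a$. As in the paper, a walk that visits $S$ at several positions is counted once per position, which is why only an upper bound holds.
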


\begin{proof}
Fix $v\in S$, a length $k$, and a position $a\le k$.  Cutting a walk
$(\omega_0,\dots,\omega_k)$ with $\omega_a=v$ at position $a$ gives a walk of
length $a$ ending at $v$ and a walk of length $k-a$ starting at $v$, and every
such pair glues back into one walk (Figure~\ref{fig:throughboundary}).  The walks
of length $k$ with $v$ at position $a$ are therefore counted by
$\mathrm{in}_a(v)\,\mathrm{out}_{k-a}(v)$.  Summing over $v\in S$, $k\le L$, and $a$
counts every walk that visits $S$ once for each position at which it does so,
hence at least once.
\end{proof}

\begin{figure}[htbp]
\centering
\begin{tikzpicture}[x=1mm,y=1mm,
    brc/.style = {decorate,decoration={brace,amplitude=5pt},thick,draw=obscol},
    bl/.style  = {font=\footnotesize,align=center,above=6pt},
  ]
  \node[v1,vb] (v)  at (0,0)    {$v$};
  \node[vn]    (x1) at (-34,8)  {};
  \node[vn]    (y1) at (-17,8)  {};
  \node[vn]    (x2) at (-34,-8) {};
  \node[vn]    (y2) at (-17,-8) {};
  \node[vn]    (z1) at (17,8)   {};
  \node[vn]    (w1) at (34,8)   {};
  \node[vn]    (z2) at (17,-8)  {};
  \node[vn]    (w2) at (34,-8)  {};
  \draw[eobs] (x1) -- (y1); \draw[eobs] (y1) -- (v);
  \draw[eobs] (x2) -- (y2); \draw[eobs] (y2) -- (v);
  \draw[eobs] (v) -- (z1);  \draw[eobs] (z1) -- (w1);
  \draw[eobs] (v) -- (z2);  \draw[eobs] (z2) -- (w2);
  \begin{scope}[on background layer]
    \draw[obscol!45,densely dotted,thick] (0,-15) -- (0,15);
    \draw[draw=gcol,line width=7pt,line cap=round,line join=round,opacity=.3]
      (x1.center) -- (y1.center) -- (v.center) -- (z2.center) -- (w2.center);
  \end{scope}
  \draw[brc] (-37.25,13.5) -- (-4,13.5)
    node[bl,midway] {in-walks of length $a$\\
                     $\mathrm{in}_a(v)=(\mathbf{1}^{\top}A_G^{\,a})_v$};
  \draw[brc] (4,13.5) -- (37.25,13.5)
    node[bl,midway] {out-walks of length $k-a$\\
                     $\mathrm{out}_{k-a}(v)=(A_G^{\,k-a}\mathbf{1})_v$};
  \node[font=\scriptsize,text=obscol] at (0,-17.5) {cut at position $a$};
  \node[font=\footnotesize] at (0,-24)
    {number of walks of length $k$ with $v$ at position $a$
     $\;=\;\mathrm{in}_a(v)\times\mathrm{out}_{k-a}(v)$};
\end{tikzpicture}
\caption{\textbf{A walk through a boundary vertex splits into an in-walk
and an out-walk (Definition~\ref{def:through_boundary},
Lemma~\ref{lem:through_boundary}).}  Every such pair glues back into one walk,
so here two in-walks and two out-walks of length two give four walks of length
four with $v$ in the middle (one highlighted).}
\alttext{A double-ringed blue vertex v in the middle.  Two chains of two
gray vertices lead into v from the upper left and the lower left, and two
chains lead out of v to the upper right and the lower right.  A green
band highlights one walk that enters from the upper left and leaves to
the lower right.  Braces label the left side as in-walks of length a
ending at v and the right side as out-walks of length k minus a starting
at v, and a line below states that the number of walks with v at
position a is the product of the two counts.}
\label{fig:throughboundary}
\end{figure}

\begin{theorem}[Emergence potential is bounded by walks through the boundary]\label{thm:potential}
For every subgraph $s\subseteq G$,
\[
\mathcal E^L(s;G)\;\le\;W^L_{\ni}\bigl(\partial(s);G\bigr).
\]
Moreover, for every observation $\Phi$ acting on paths of length at most $L$ and
every coupling of $s$ in $G$, $|E^+_\Phi|\le N_\partial\le W^L_{\ni}(\partial(s);G)$.
\end{theorem}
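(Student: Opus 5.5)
The plan is to reduce both claims to a single injection. Every boundary-crossing path of any coupling of $s$ in $G$ maps to a walk of $G$ of length at most $L$ that visits $\partial(s)$. Lemma~\ref{lem:through_boundary} then bounds the count of those walks by $W^L_{\ni}(\partial(s);G)$.

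First, fix a coupling $(s',\mathcal I)$ of $s$ in $G$. The join $s\vee_{\mathcal I}s'$ is a subgraph of $G$, because $s$, $s'$ and $\mathcal I$ are made of vertices and edges of $G$ and $V(s)\cap V(s')=\varnothing$. Hence every path, or cycle when $\Phi$ reads cycles, of the join of length at most $L$ is also a walk of $G$ of the same length. Distinct paths are distinct vertex sequences, so this map is injective; a cycle read from a fixed starting vertex is likewise one closed walk.

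Second, check that each boundary-crossing path visits $\partial(s)$. It traverses an interaction edge, and one endpoint of that edge lies in $V(s)$. The edge itself is an edge of $G$ joining this endpoint to a vertex outside $V(s)$, so the endpoint belongs to $\partial(s)$ by Definition~\ref{def:potential}. Its length is between $1$ and $L$. So $\mathcal P_\partial$ injects into the set of walks of $G$ of length $1$ to $L$ that visit $\partial(s)$, and Lemma~\ref{lem:through_boundary} gives $|\mathcal P_\partial|\le W^L_{\ni}(\partial(s);G)$.

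For the first inequality, Corollary~\ref{cor:pathcount} gives $\Delta_{\#}(s,s';\mathcal I)=|\mathcal P_\partial|$ for disjoint parts. The bound therefore holds for every coupling, and taking the maximum over couplings gives $\mathcal E^L(s;G)\le W^L_{\ni}(\partial(s);G)$. For the second statement, Theorem~\ref{thm:main}(ii) gives $|E^+_\Phi|\le N_\partial$, and $N_\partial=|\mathcal S_\partial|\le|\mathcal P_\partial|$ because $\mathcal S_\partial\subseteq\mathcal P_\partial$. Chaining this with the bound above finishes the proof.

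The only delicate point is the cycle case, where $\mathcal P_\partial$ counts each rotation of a cycle separately. Each rotation is a distinct closed walk, since its starting vertex differs, and every rotation still passes through $\partial(s)$. So the injection into walks of $G$ survives, and the lemma counts each such walk at least once.
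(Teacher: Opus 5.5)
Your proposal is correct and follows the paper's proof essentially step for step. You embed each boundary-crossing path or cycle of a coupling as a distinct walk of $G$ that visits $\partial(s)$, apply Lemma~\ref{lem:through_boundary}, and then use Corollary~\ref{cor:pathcount} for the potential and $\mathcal S_\partial\subseteq\mathcal P_\partial$ with Theorem~\ref{thm:main}(ii) for the second chain; your explicit checks that the interaction edge's endpoint lies in $\partial(s)$, and that distinct cycle rotations are distinct closed walks, only spell out what the paper states briefly.
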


\begin{proof}
Fix a coupling $(s',\mathcal I)$.  A path or cycle in $\mathcal P_\partial$
traverses an interaction edge, whose endpoint in $V(s)$ lies in $\partial(s)$.
Since $s,s'\subseteq G$ and $\mathcal I\subseteq E(G)$, the join
$s\vee_{\mathcal I}s'$ is a subgraph of $G$, so the path or cycle is a walk of
$G$ of length at most $L$ that visits $\partial(s)$.  Distinct ones are distinct
walks, and Lemma~\ref{lem:through_boundary} gives
$|\mathcal P_\partial|\le W^L_{\ni}(\partial(s);G)$; for the path count,
$\Delta_{\#}(s,s';\mathcal I)=|\mathcal P_\partial|$.  The right-hand side is the
same for every coupling, so it bounds the maximum.  Since
$\mathcal S_\partial\subseteq\mathcal P_\partial$, the same bound holds for
$N_\partial$, and Theorem~\ref{thm:main} gives $|E^+_\Phi|\le N_\partial$.
\end{proof}

The capacity of a subsystem for emergence is thus carried by its boundary
vertices and by the reach of the host around them.  The bound decomposes into
classical walk centralities.

\begin{proposition}[Walks through the boundary and Katz centrality]\label{prop:eps_katz}
Let $\kappa^{\mathrm{out}}_L(v):=\sum_{k=1}^{L}(A_G^{\,k}\mathbf 1)_v$ and
$\kappa^{\mathrm{in}}_L(v):=\sum_{k=1}^{L}(\mathbf 1^{\top}A_G^{\,k})_v$ be the
truncated Katz out- and in-centralities of $v$ at unit
attenuation~\cite{katz1953new}.  Then
\[
W^L_{\ni}\bigl(\partial(s);G\bigr)=\sum_{v\in\partial(s)}\Bigl[\kappa^{\mathrm{out}}_L(v)
+\kappa^{\mathrm{in}}_L(v)+\sum_{k=2}^{L}\sum_{a=1}^{k-1}
\mathrm{in}_a(v)\,\mathrm{out}_{k-a}(v)\Bigr].
\]
In particular, the outgoing part of $W^L_{\ni}$, the terms with $a=0$, is the
truncated Katz out-centrality summed over the boundary,
$\sum_{v\in\partial(s)}\kappa^{\mathrm{out}}_L(v)$.
\end{proposition}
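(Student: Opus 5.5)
The identity is a regrouping of the sum in Definition~\ref{def:through_boundary}; no estimates are needed. For each boundary vertex $v\in\partial(s)$ I will split the inner double sum $\sum_{k=1}^{L}\sum_{a=0}^{k}\mathrm{in}_a(v)\,\mathrm{out}_{k-a}(v)$ according to the position $a$. There are three groups: the terms with $a=0$, the terms with $a=k$, and the interior terms $1\le a\le k-1$. The interior range is empty when $k=1$, so the interior sum effectively runs over $k=2,\dots,L$.

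First, the terms with $a=0$. Since $\mathrm{in}_0(v)=1$, each such term equals $\mathrm{out}_k(v)=(A_G^{\,k}\mathbf 1)_v$. Summing over $k=1,\dots,L$ gives exactly $\kappa^{\mathrm{out}}_L(v)$. This also proves the final sentence of the proposition, once the terms are summed over $v\in\partial(s)$.

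Second, the terms with $a=k$. Since $\mathrm{out}_0(v)=1$, each such term equals $\mathrm{in}_k(v)=(\mathbf 1^{\top}A_G^{\,k})_v$. Summing over $k=1,\dots,L$ gives $\kappa^{\mathrm{in}}_L(v)$.

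The only point needing care is that the groups $a=0$ and $a=k$ are disjoint. This holds because $k\ge1$ throughout, so $a=0$ and $a=k$ never coincide and no term is counted twice. The leftover interior terms are exactly $\sum_{k=2}^{L}\sum_{a=1}^{k-1}\mathrm{in}_a(v)\,\mathrm{out}_{k-a}(v)$.

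Summing the three groups over $v\in\partial(s)$ gives the displayed formula. I expect no real obstacle here; the main thing is to state the $k\ge1$ disjointness point explicitly so the decomposition is visibly exact.
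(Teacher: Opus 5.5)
Your proposal is correct and follows the paper's proof. Both split the inner sum into $a=0$, $a=k$, and $0<a<k$, use $\mathrm{in}_0(v)=\mathrm{out}_0(v)=1$ to identify the first two groups with $\kappa^{\mathrm{out}}_L(v)$ and $\kappa^{\mathrm{in}}_L(v)$, and observe that the interior group is empty for $k=1$; your explicit note that $k\ge1$ keeps the groups $a=0$ and $a=k$ disjoint is a point the paper leaves implicit.
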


\begin{proof}
Split the inner sum of Definition~\ref{def:through_boundary} into $a=0$, $a=k$,
and $0<a<k$.  Since $\mathrm{in}_0(v)=\mathrm{out}_0(v)=1$, the first two give
$\mathrm{out}_k(v)$ and $\mathrm{in}_k(v)$, whose sums over $k\le L$ are
$\kappa^{\mathrm{out}}_L(v)$ and $\kappa^{\mathrm{in}}_L(v)$; the third is nonempty
only for $k\ge2$.
\end{proof}

The three terms have a direct reading: how far the boundary vertices reach
forward, how far they are reached, and how many walks pass through them.  For
the full Katz (resolvent) and exponential walk sums, rankings approach degree
centrality as the attenuation parameter tends to zero and eigenvector
centrality as it tends to its upper limit~\cite{benzi2015limiting}.  The truncated sums
used here vary similarly with $L$: at $L=1$ they are degrees, and when $G$ is
strongly connected and aperiodic, $\kappa^{\mathrm{out}}_L$ and
$\kappa^{\mathrm{in}}_L$, normalized, converge to the right and left Perron
eigenvectors of $A_G$ as $L$ grows, by a power-iteration argument.  The
potential of a subsystem is thus governed by well-understood properties of its
boundary.  All the quantities are obtained from the $2L$ vectors
$\mathbf 1^{\top}A_G^{\,a}$ and $A_G^{\,b}\mathbf 1$ with $a,b\le L$, so ranking
the subsystems of a large system by $W^L_{\ni}(\partial(s);G)$ costs $2L$ sparse
matrix--vector products followed by $O(L^2)$ operations per boundary vertex.

\section{Details of the experiments}\label{app:experiments}

\paragraph{Path rules.} The path rules of drop sinks, edges on a cycle, the
continuation filter with $L_0=3$, and coarse-grained $G^{\le2}$ reproduce the
directly defined observations on all $400$ random graphs tested, and those of
drop sinks and edges on a cycle (of length at most $2$, $3$, $4$, or any length)
also on $1200$ further random graphs of $5$ to $10$ vertices and on every part
and join of Experiment~1.

\paragraph{Experiments 1 and 2.} In Experiment~1, thresholding is at $\tau=0.5$ of
edge weights drawn uniformly from $[0,1]$; the fixed
partition groups the ten vertices in pairs, one of which contains a vertex of
each part; and the thresholded power lets a path of length at most $3$ produce
its endpoint pair when all of its edges have weight at least $0.35$. Drop sinks
and edges on a cycle are as in Examples~\ref{ex:dropsinks}
and~\ref{ex:cycle}, the latter with cycles of length at most $4$.
Table~\ref{tab:exp1} gives the results by observation and density.
Experiment~2 uses three standard models (Erd\H{o}s--R\'enyi with a fixed number
of edges~\cite{erdos1959random}, Watts--Strogatz~\cite{watts1998collective},
and Barab\'asi--Albert~\cite{barabasi1999emergence}); ten further canonical
models (Price~\cite{price1965networks,price1976general}, directed
scale-free~\cite{bollobas2003directed}, Kleinberg on a
ring~\cite{kleinberg2000navigation}, stochastic
block~\cite{holland1983stochastic}, random
geometric~\cite{penrose2003random,dall2002random},
Holme--Kim~\cite{holme2002growing}, relaxed
caveman~\cite{watts1999networks,fortunato2010community},
heavy-tailed directed configuration~\cite{molloy1995critical,newman2001random},
preferential random $k$-out (NetworkX \texttt{random\_k\_out\_graph}), and
periodic square lattice); and three empirical networks (karate
club~\cite{zachary1977information}, Florentine
families~\cite{padgett1993robust,breiger1986cumulated}, and \emph{Les
Mis\'erables}~\cite{knuth1993stanford}), read as symmetric digraphs and split
at random into halves, with the interaction drawn from the edges that cross the
split. Undirected models are oriented at random, each edge reciprocated with
probability $0.3$; synthetic parts have $12$ vertices and a mean total degree
of $3.5$ to $5.2$.

\begin{table}[tbp]
\centering
\caption{\textbf{Experiment~1: thirteen observations.} Means over $|\mathcal I|\in\{1,2,3\}$ and $30$
realizations ($90$ instances per cell); ``ratio,'' the ratio of the means;
``att.,'' instances with $|E^+_\Phi|=N_\partial>0$; ``viol.,'' violations of
the bound among all $270$ instances of an observation.}
\label{tab:exp1}
\footnotesize
\setlength{\tabcolsep}{3.4pt}
\begin{tabular}{@{}l*{3}{rrrr@{\hspace{10pt}}}r@{}}
\toprule
& \multicolumn{4}{c}{$p=0.15$} & \multicolumn{4}{c}{$p=0.30$} & \multicolumn{4}{c}{$p=0.45$} & \\
\cmidrule(lr){2-5}\cmidrule(lr){6-9}\cmidrule(lr){10-13}
$\Phi$ & $|E^+_\Phi|$ & $N_\partial$ & ratio & att. & $|E^+_\Phi|$ & $N_\partial$ & ratio & att. & $|E^+_\Phi|$ & $N_\partial$ & ratio & att. & viol.\\
\midrule
\multicolumn{14}{@{}l}{\emph{edgewise (reach $1$)}}\\
identity & 0.00 & 2.00 & 0.000 & 0 & 0.00 & 2.00 & 0.000 & 0 & 0.00 & 2.00 & 0.000 & 0 & 0 \\
thresholding & 0.00 & 0.97 & 0.000 & 0 & 0.00 & 0.94 & 0.000 & 0 & 0.00 & 1.13 & 0.000 & 0 & 0 \\
coarse-graining & 0.00 & 1.90 & 0.000 & 0 & 0.00 & 1.97 & 0.000 & 0 & 0.00 & 1.90 & 0.000 & 0 & 0 \\
\addlinespace
\multicolumn{14}{@{}l}{\emph{powers (reach $\ge2$)}}\\
$G^{\le2}$ & 2.46 & 4.60 & 0.534 & 0 & 4.86 & 7.16 & 0.679 & 0 & 7.02 & 9.77 & 0.719 & 0 & 0 \\
$G^{\le3}$ & 4.02 & 6.76 & 0.595 & 0 & 10.19 & 15.72 & 0.648 & 0 & 15.71 & 26.77 & 0.587 & 0 & 0 \\
$G^{\le4}$ & 4.86 & 7.99 & 0.608 & 0 & 13.14 & 23.22 & 0.566 & 0 & 20.91 & 54.31 & 0.385 & 0 & 0 \\
coarse-grained $G^{\le2}$ & 1.43 & 4.58 & 0.313 & 0 & 1.94 & 6.71 & 0.290 & 0 & 2.11 & 8.67 & 0.244 & 0 & 0 \\
coarse-grained $G^{\le3}$ & 1.84 & 6.44 & 0.286 & 0 & 3.30 & 13.79 & 0.239 & 0 & 3.77 & 24.62 & 0.153 & 0 & 0 \\
thresholded $G^{\le3}$ & 1.87 & 3.41 & 0.547 & 0 & 3.06 & 4.79 & 0.638 & 0 & 6.51 & 9.76 & 0.667 & 0 & 0 \\
\addlinespace
\multicolumn{14}{@{}l}{\emph{deletion rules that read context}}\\
drop sinks & 0.53 & 2.86 & 0.187 & 3 & 0.57 & 5.31 & 0.107 & 3 & 0.18 & 7.49 & 0.024 & 0 & 0 \\
edges on a cycle, $L{=}4$ & 0.12 & 0.37 & 0.333 & 0 & 0.11 & 0.66 & 0.169 & 0 & 0.07 & 1.07 & 0.062 & 0 & 0 \\
continuation, $L_0{=}2$ & 0.61 & 2.81 & 0.217 & 9 & 0.80 & 5.28 & 0.152 & 2 & 0.54 & 7.46 & 0.073 & 0 & 0 \\
continuation, $L_0{=}3$ & 0.84 & 2.17 & 0.390 & 15 & 1.80 & 7.91 & 0.228 & 4 & 1.77 & 16.91 & 0.104 & 1 & 0 \\
\bottomrule
\end{tabular}
\end{table}

\paragraph{Bridge law and shares.} The $3000$ joins that check the bridge law
use random digraphs, trees, trees with extra edges, and acyclic digraphs as
parts. The $3200$ joins that check the shares and
Proposition~\ref{prop:intervene} have two parts of $3$ to $8$ vertices or three
of $3$ to $6$, one to four interaction edges in either direction, and every
type of observation of Experiment~1; every nonempty set of interaction edges is
removed in turn.

\paragraph{Experiment 3.} The random pairs are drawn as in Experiment~1, with
four vertices per part, $p=0.3$, and one interaction edge from $s_1$ to $s_2$.
In the tree design, $s_1$ is a binary tree of depth two directed toward its
root $a$ and $s_2$ a binary tree of depth two directed away from its root $v$
(seven vertices each), so that every vertex lies within $L-1=2$ steps of the
interface; the $c$ chords are distinct ordered pairs of vertices of one part
that are not yet edges, drawn uniformly, and $c=0$ is a single graph.

\paragraph{Experiment 7.} The Erd\H{o}s--R\'enyi parts are directed $G(n,m)$
graphs, the Barab\'asi--Albert parts have attachment parameter $2$, and the
configuration parts draw in- and out-degrees with $P(k)\propto k^{-3}$; there
are $50$ realizations for $n\le10^3$, $40$ for $3\cdot10^3$ and $10^4$, $30$ for
$3\cdot10^4$, and $20$ for $10^5$. The local search counts each
boundary-crossing path at the first interaction edge $(u,v)$ it traverses, by a
backward search inside the part of $u$ and a forward search in the join; for
$G^{\le L}$ the pairs produced through routes that stay in the part of $v$ are
$\{(x,y):\mathrm{dist}(x,u)+\mathrm{dist}(v,y)\le L-1\}$. The direct computation
(breadth-first searches from every vertex of $s_1$, $s_2$, and $s_1\vee s_2$)
ran on the first $50$, $50$, $50$, $40$, $40$, $10$, and $5$ realizations of
each size when the observed join had at most about $3\times10^7$ edges; at
$n=10^2$ the path enumeration of Experiments~1--6 also returns the same
$E^+_\Phi$ and $N_\partial$ on $600$ instances. Local work counts the adjacency
entries scanned; direct work counts breadth-first relaxations to depth $L$,
estimated from $256$ sampled sources per graph (within $0.6\%$ of the exact
count on average). Times are medians over five realizations on an Apple M3 Max
(16 cores) with Python~3.12 and SciPy~1.17.

\paragraph{Experiment 8.} The three densities vary each generator's parameter
(for instance $n$, $2n$, $3n$ edges for Erd\H{o}s--R\'enyi, $k=2,4,6$ for
Watts--Strogatz, and $m=1,2,3$ for the Barab\'asi--Albert, Price, and Holme--Kim
models). Interaction edges are drawn from all cross pairs in both directions
(for the empirical networks, from the edges crossing the split, with $90$
realizations per $|\mathcal I|$). Edge betweenness is computed by Brandes'
algorithm~\cite{brandes2001faster}; the coarse-graining groups consecutive
triples of vertices. For ranking, $10$ pairs of parts are drawn per synthetic
family and density and $30$ per empirical network, with up to $400$ candidates
per pair, and a tie counts as a uniformly random choice among the tied
candidates. Intervals come from $2000$ bootstrap resamples. The size replicates
use synthetic parts of $12$ and $30$ vertices ($1560$ instances each).

\paragraph{Experiment 9.} The connectome is the hermaphrodite chemical-synapse
adjacency matrix of Cook et al.~\cite{cook2019whole}, with an edge $u\to v$
whenever the entry is positive, restricted to the $272$ sex-shared neurons that
they type as sensory neurons, interneurons, or motor neurons, and with their
cell classes, which follow White et al.~\cite{white1986structure}; removing
$35$ self-connections leaves $3355$ edges. The e-mail network is SNAP
email-Eu-core with its department labels~\cite{leskovec2007graph,yin2017local}
($24{,}929$ edges among $1005$ members after removing $642$ self-loops), and the
blog network is that of Adamic and Glance~\cite{adamic2005political}
($19{,}022$ hyperlinks among $758$ liberal and $732$ conservative blogs after
removing $3$ self-loops and $65$ parallel edges). The data give no partition of
the blogs finer than the two camps, so the coarse-grained observation is
applied to the other two networks. Observations are applied directly by dense
Boolean matrix products. The interventions cut, for each network and
observation, each interface (the three of largest credit in the e-mail
network), the ten interaction edges of largest credit, and the interaction edge
necessary for the most emergent edges. The data files are downloaded from their
published sources and verified by checksums.

\begin{table}[tbp]
\centering
\caption{\textbf{Experiment~9: real directed networks.} Parts and blocks are
given by the data; $|\mathcal I|$ counts the edges between parts. ``Top
interface'' is the largest credit of an interface (the edges from one part to
another) and ``top $10$'' the credit of the ten interaction edges of largest
credit, taken as a set, both as fractions of $|E^+_\Phi|$ (S, I, M: sensory,
inter-, and motor neurons; L, C: liberal and conservative blogs). For the e-mail
network under $G^{\le3}$, $N_\partial$ is obtained from the path-count identity
(Corollary~\ref{cor:pathcount}) without listing the paths.}
\label{tab:real}
\footnotesize
\setlength{\tabcolsep}{3.2pt}
\begin{tabular}{@{}lrrrlrrrlr@{}}
\toprule
network & $n$ & parts & $|\mathcal I|$ & $\Phi$ & $|E^+_\Phi|$ & $N_\partial$ & $\Delta_W$ & top interface & top $10$\\
\midrule
\emph{C.~elegans} & 272 & 3 & 1728 & $G^{\le2}$ & 16{,}944 & 42{,}682 & 43{,}220 & 0.42 (I$\to$M) & 0.03 \\
 & & & & $G^{\le3}$ & 41{,}976 & 749{,}255 & 784{,}096 & 0.44 (I$\to$M) & 0.06 \\
 & & & & classes $\circ\,G^{\le2}$ & 3173 & 41{,}860 & 43{,}220 & 0.45 (S$\to$I) & 0.06 \\
\addlinespace
e-mail (EU core) & 1005 & 42 & 16{,}284 & $G^{\le2}$ & 288{,}577 & 1{,}330{,}677 & 1{,}341{,}903 & 0.03 & 0.01 \\
 & & & & $G^{\le3}$ & 669{,}211 & 84{,}277{,}907 & 86{,}573{,}440 & -- & -- \\
 & & & & depts.\ $\circ\,G^{\le2}$ & 387 & 1{,}265{,}904 & 1{,}341{,}903 & 0.05 & 0.17 \\
\addlinespace
political blogs & 1490 & 2 & 1683 & $G^{\le2}$ & 37{,}290 & 89{,}544 & 89{,}760 & 0.53 (C$\to$L) & 0.06 \\
 & & & & $G^{\le3}$ & 196{,}175 & 3{,}914{,}593 & 3{,}973{,}930 & 0.55 (L$\to$C) & 0.12 \\
\bottomrule
\end{tabular}
\end{table}

\phantomsection

\end{document}